\documentclass[a4paper,11pt]{article}
\title{Polyhedral Clinching Auctions for Two-sided Markets}
\author{Hiroshi Hirai\thanks{The University of Tokyo, Tokyo, Japan.
 hirai@mist.i.u-tokyo.ac.jp}\ \ and\ \ Ryosuke Sato\thanks{The University of Tokyo, Tokyo, Japan. ryosuke.sato.517@gmail.com}}

\date{\today}
\usepackage{amsmath}
\usepackage{algorithm}
\usepackage{algorithmic}
\usepackage{amsthm}
\usepackage{comment}
\usepackage[dvipdfmx]{graphicx}
\usepackage{amsfonts}
\usepackage{color}
\usepackage[top=1truein,bottom=1truein,left=1truein,right=1truein]{geometry}

  \newtheorem{theorem}{Theorem}[section]
  \newtheorem{proposition}[theorem]{Proposition}
  \newtheorem{corollary}[theorem]{Corollary}
  
  \newtheorem{lemma}[theorem]{Lemma}
  \newtheorem*{claim}{Claim}
  \newtheorem{remark}{Remark}
  \newtheorem{assumption}[theorem]{Assumption}
\begin{document}
\maketitle

\begin{abstract}
		In this paper, 
		we present a new model and two mechanisms for auctions 
		in two-sided markets of buyers and sellers, 
		where budget constraints are imposed on buyers.
		Our model incorporates polymatroidal environments, 
		and is applicable to a wide variety of models 
		that include multiunit auctions, matching markets 
		and reservation exchange markets.
		Our mechanisms are build on polymatroidal network 
		flow model by Lawler and Martel, and enjoy various 
		nice properties such as incentive compatibility of 
		buyers, individual rationality, pareto optimality, 
		strong budget balance.
		The first mechanism is a simple 
		``reduce-to-recover" algorithm
		that reduces the market to be one-sided,
		applies the polyhedral clinching auction by Goel et al,
		and lifts the resulting allocation to
		the original two-sided market via polymatroidal 
		network flow.
		The second mechanism is a two-sided generalization 
		of the polyhedral clinching auction,
		which improves the first mechanism 
		in terms of the fairness of revenue sharing on sellers.
		Both mechanisms are implemented 
		by polymatroid algorithms.
		We demonstrate how our framework is 
		applied to internet display ad auctions.
		\end{abstract}

\section{Introduction}
	Mechanism design for auctions in {\it two-sided markets}
	is a challenging and urgent issue, 
	especially for rapidly growing
	fields of internet advertisement.
	In ad-exchange platforms, the owners of 
	websites want to get revenue
	by selling their ad slots, 
	and the advertisers want to purchase ad slots.
	Auctions are an efficient way of mediating them, 
	allocating ad slots, and determining payments and revenues, 
	where the underlying market is two-sided in principle.
	Similar situations arise from stock exchanges and 
	spectrum license reallocation; see e.g., 
	\cite{BKLT2016,DRT2014}.
	Despite its potential applications,
	auction theory for two-sided markets is currently 
	far from dealing with such real-world markets.
	The main difficulty is that the auctioneer 
	has to consider incentives of buyers and sellers, 
	both possibly strategic, 
	and is confronted with
	impossibility theorems to design a mechanism  
	achieving both accuracy and efficiency,
	even in the simplest case of bilateral trade~\cite{M1983}.

	In this paper, we address auctions for two-sided markets, 
	aiming to overcome such difficulties and
	provide a reasonable and implementable framework.
	To capture realistic models mentioned above, 
	we deal with {\it budget constraints} on buyers.
	The presence of budgets drastically changes the situation 
	in which traditional auction theory  is not applicable.
	Our investigation is thus based on two recent seminal works on
	auction theory of budgeted one-sided markets:
	\begin{itemize}
	\item[(i)] Dobzinski et al. \cite{DLN2012} presented the first 
	effective framework for budget-constrained markets. 
	Generalizing the celebrated clinching
	framework by Ausubel \cite{A2004}, they
	proposed an incentive compatible, 
	individually rational, and pareto
	optimal mechanism, called the 
	``{\it Adaptive Clinching Auction}", for markets in
	which the budget information is public to the auctioneer.
	This work triggered subsequent works 
	dealing with more complicated
	settings \cite{BCMX2010,BHLS2015,DHS2015,FLSS2011,
	GMP2013,GMP2014,GMP2015}.
	\item[(ii)] Goel et al. \cite{GMP2015} utilized 
	polymatroid theory to generalize the above result
	for a broader class of auction models including
	previously studied budgeted settings as well as new models for 
	contemporary auctions such as Adwords Auctions.
	Here a polymatroid is a polytope 
	associated with a monotone submodular
	function, and can represent 
	the space of feasible transactions under several natural 
	constraints.
	They presented a polymatroid-oriented clinching mechanism, called 
	``{\it Polyhedral Clinching Auction}," for markets with polymatroidal 
	environments.
	This mechanism enjoys incentive compatibility, individual 
	rationality,
	and pareto optimality, and can be implemented 
	via efficient submodular
	optimization algorithms that have been developed in the literature of
	combinatorial optimization \cite{F2005,S2003}.
	\end{itemize}

The goal of this paper is to extend this line of research to
reasonable two-sided settings.

	\paragraph{Our contribution.}	
	
	 We present a new model and mechanisms for auctions in
	two-sided markets. 
	Our market is modeled as a 
	bipartite graph of buyers and sellers, with transacting 
	goods through the links. The goods are 
	divisible and common in value.
	Each buyer wants the goods under a limited budget.
	Each seller constrains transactions of 
	his goods by a monotone
	submodular function on the 
	set of edges linked to him.
	Namely, possible transactions are restricted 
	to the corresponding polymatroid.
	In the auction, each buyer reports his bid 
	and budget to the auctioneer,
	and each seller reports his
	reserved price. 
	In our model, the reserved price is assumed to be identical with
	his true valuation; this assumption is crucial for avoiding impossibility theorems. 
	The utilities are quasi-linear 
	(within budget)
	on their valuations and payments/revenues. 
	The goal of this auction is to determine 
	transactions of goods, payments of buyers, and revenues
	of sellers, with which all participants are satisfied.
	In the case of a single seller, this model 
	coincides with that of Goel et al \cite{GMP2015}.
	
	For this model, we present 
	two mechanisms that satisfy
	the incentive compatibility of buyers, 
	individual rationality, pareto optimality 
	and strong budget balance.
	Our mechanisms are built on and analyzed via {\it polymatroidal 
	network flow} model by Lawler and Martel \cite{LM1982}. 
	This is a notable 
	feature of our technical contribution.
	It is the first to apply polymatroidal network flow to 
	mechanism design.

	The first mechanism is a ``reduce-and-recover'' algorithm 
	via a one-sided market: 
	The mechanism constructs 
	``the reduced one-sided market" by
	aggregating all sellers to one seller, applies 
	the original clinching auction of Goel et al. \cite{GMP2015} 
	to determine a transaction vector, payments of buyers, and 
	the total revenue of the seller. 
	The transaction vector of the original two-sided market
	is recovered by computing a polymatroidal network flow.
	The total revenue is distributed to the  
	original sellers arbitrarily 
	so that incentive rationality of sellers and strong budget 
	balance are satisfied. 
	We prove in Theorem \ref{mechanism_1} that this mechanism 
	satisfies the desirable properties mentioned above.
	Also this mechanism is implementable by 
	polymatroid algorithms.
	The characteristic of this mechanism is 
	to determine statically transactions and revenues of sellers 
	at the end. 
	This however can cause unfair revenue sharing, 
	which we will discuss in Section~\ref{sec:discussion}.

	The second mechanism is a two-sided generalization of 
	the polyhedral clinching auction by Goel et al. \cite{GMP2015},
	which determines dynamically
	transactions and revenues, and 
	improves fairness on the revenue sharing. 
	The mechanism works as the original clinching auction: 
	As price clocks increase, 
	each buyer clinches a maximal amount of goods
	not affecting other buyers.
	Here each buyer transacts with multiple sellers, 
	and hence conducts a multidimensional clinching.
	We prove in Theorem \ref{greedy} 
	an intriguing property that feasible 
	transactions of sellers for the clinching forms 
	a polymatroid, 
	which we call the {\em clinching polytope}, and moreover 
	the corresponding submodular function can be computed 
	in polynomial time. 
	Thus this mechanism is also implementable.
	We reveal in Theorem \ref{relation} that the allocation
	to the buyers obtained by this mechanism 
	is the same as that by
	the original polyhedral clinching auction 
	applied to the reduced one-side market.
	This means that the second mechanism achieves 
	the same performance for buyers 
	as that in the original one, and also 
	improves the first mechanism in terms of the revenues sharing on sellers.

	Our framework captures a wide 
	variety of auction models in
	two-sided markets, thanks to the strong 
	expressive power of polymatroids.
	Examples include two-sided extensions of multiunit auctions 
	\cite{DLN2012} and 
	matching markets \cite{FLSS2011} (for divisible goods), 
	and a version of reservation exchange
	markets \cite{GLMNP2016}.
	We demonstrate 
	how our framework is applied to
	auctions for display advertisements.
	In addition, our model can incorporate with 
	concave budget constraints in Goel et al. \cite{GMP2014}.
	Also our result can naturally extend to concave budget settings (Remark~\ref{rem:concave}).
	Thus our framework is applicable to more complex settings occurring
	in the real world auctions, such as average budget constraints.

	\paragraph{Related work.}
	{\it Double auction} is the simplest auction for 
	two-sided markets, 
	where buyers and sellers 
	have unit demand and unit supply, respectively. 
	The famous {\it Myerson-Satterthwaite} 
	impossibility theorem \cite{M1983}
	says that there is no mechanism which simultaneously  
	satisfies incentive compatibility (IC), 
	individual rationality (IR), pareto optimality (PO),
	and budget balance (BB). 
	McAfee \cite{MA1992} proposed a mechanism 
	that satisfies (IC),(IR), and~(BB).
	Recently, Colini-Baldeschi et al. \cite{BKLT2016} proposed a
	mechanism that satisfies (IC), (IR), 
	and strong budget balance (SBB), and achieves an 
	$O(1)$-approximation to the maximum social welfare. 
	
	Goel et al. \cite{GLMNP2016} considered 
	a two-sided market model,
	called a {\it reservation exchange market}, 
	for internet advertisement.
		They formulated several axioms of mechanisms for this model,
	and presented an (implementable) mechanism satisfying 
	(IC) for buyers, 
	(IR), maximum social welfare, and 
	a fairness concept for sellers,
	called {\it $\alpha$-envy-freeness}.
	This mechanism is also based on the clinching framework,
	and sacrifices (IC) for sellers to avoid the impossibility
	theorem.
	Their setting is non-budgeted.
	
	Freeman et al. \cite{FPW2017} formulated
	the problem of wagering as an auction 
	in a special two-sided market, and 
	presented a mechanism, called the ``Double Clinching Auctions",
	satisfying (IC), (IR), and~(BB).
	They verified by computer simulations 
	that the mechanism shows
	near-pareto optimality.
	This mechanism is regarded as the first generalization of the 
	clinching framework to budgeted two-sided settings,
	though it is specialized to wagering.
	
	Our results in this paper provide the first 
	generic framework for auctions in budgeted two-sided markets.	
	
	\paragraph{Organization of this paper.}
	The rest of this paper is organized as follows.
	In Section~\ref{sec:result}, we introduce our model and 
	present the main result and applications.
	In Section~\ref{sec:mechanisms}, we present and analyze our mechanisms.
	In Section~\ref{sec:discussion}, we discuss our mechanisms and raise future research issues.
	In Section~\ref{sec:proof}, we give proofs.
	
	\paragraph{Notation.}
	
	Let $\mathbb R_{+}$ denote the set of 
	nonnegative real numbers, and let
	$\mathbb R_{+}^{E}$ denote the set of all 
	functions from a set $E$ to $\mathbb R_{+}$.
	For $w\in\mathbb R_+^E$, 
	we often denote $w(e)$ by $w_e$, 
	and write as $w = (w_e)_{e \in E}$.
	Also we denote $F\cup \{e\}$ by $F+e$, and 
	denote $F\setminus \{e\}$ by $F-e$.
	For $F \subseteq E$, let $w|_{F}$ 
	denote the restriction of $w$ 
	to $F$.
	Also let $w(F)$ denote the sum of $w(e)$ over $e \in F$, i.e., $w(F) := \sum_{e \in F}w (e)$.

	Let us recall theory of polymatroids and 
	submodular functions; see \cite{F2005,S2003}.
	A {\it monotone submodular function} on set $E$ is a function $f: 2^E \to \mathbb R_+$ satisfying:
	\begin{align}
	&f(\emptyset)=0, \nonumber \\
	&f(S) \leq f(T) \quad (S,T\subseteq E,\ S\subseteq T), \nonumber \\
	&f(S+e)-f(S) \geq f(T+e)-f(T)
	 \quad (S, T \subseteq E, S \subseteq T, e \in E\setminus T), \label{submo0}
	\end{align}
	where the third inequality is equivalent to the submodularity inequality:
	\begin{equation*}
	f(S)+f(T) \geq f(S\cap T)+f(S\cup T)	\quad (S,T\subseteq E).
	\end{equation*}
	The {\it polymatroid} $P:=P(f)$ associated with
	monotone submodular function $f:2^E \to \mathbb{R}_+$ 
	is defined by
	\begin{equation*}
	 P:=\bigl\{x\in\mathbb R^{E}_{+}\,|\, x(F) \leq f(F) \quad (F\subseteq E)\bigr\},
	\end{equation*}
	and the {\it base polytope} of $f$ is defined by 
	\begin{equation*}
	B:=\bigl\{x\in P\,|\, x(E)=f(E)\bigr\},
	\end{equation*}
	which is equal to the set of all maximal points in $P$.
	A point in $B$ is obtained by 
	the greedy algorithm 
	in polynomial time, provided the value of 
	$f$ for each subset $F\subseteq E$ 
	can be computed in polynomial time.

\section{Main result}\label{sec:result}
	We consider a two-sided market consisting of 
	$n$ buyers and $m$ sellers. 
	Our market is modeled as a bipartite graph $(N,M,E)$
	of disjoint sets $N$, $M$ of nodes and edge set 
	$E\subseteq N\times M$, where $N$ and $M$ represent the sets of
	buyers and sellers, respectively, 
	and buyer $i\in N$ and seller $j\in M$ are adjacent
	if and only if $i$ wants the goods of seller $j$.
	An edge $(i,j)\in E$ is denoted by $ij$.

	For buyer $i$ (resp. seller $j$),
	let $E_i$ (resp. $E_j$) denote the set of edges incident to $i$ (resp. $j$).
	In the market, the goods are divisible and homogeneous.
	Each buyer $i$ has three nonnegative 
	real numbers $v_i, v_i', B_i\in \mathbb R_+$,
	where $v_i$ and $v_i'$ are his {\it valuation} and {\it bid}, 
	respectively, for one unit of the goods, and $B_i$ is his {\it budget}.
	Each buyer $i$ acts strategically 
	for maximizing his {\it utility} $u_i$
	(defined later), and hence his bid $v'_i$ 
	is not necessarily equal to the true valuation $v_i$.
	In this market, each buyer $i$ reports $v_i'$ and $B_i$ to 
	the auctioneer. 
	Each seller~$j$ also has a valuation $\rho_j$  
	$\in \mathbb R_+$ for one unit of the goods, and reports
	$\rho_j$ to 
	the auctioneer as the {\it reserved price} of his goods,
	the lowest price that he admits for the goods.
	In particular, he is assumed to be truthful (to avoid the 
	impossibility theorem, as in \cite{GLMNP2016}).
	He also has a monotone submodular function $f_j$ on $E_j$, which
	controls transactions of goods through $E_j$.
	The value $f_j(F)$ for $F \subseteq E_j$ means 
	the maximum possible amount of goods transacted 
	through edge subset~$F$. In particular, $f_j(E_j)$ is 
	interpreted as his stock of goods.
	These assumptions on sellers are 
	characteristic of our model.

	Under this setting, the goal is to design a mechanism determining
	a reasonable allocation. An {\it allocation} $\mathcal A$ of the auction is a triple $\mathcal A:=(w, p, r)$
	of a {\it transaction vector}
	$w=(w_{ij})_{ij\in E}$, a {\it payment vector} 
	$p=(p_{i})_{i\in N}$,
	and a {\it revenue vector} $r=(r_{j})_{j\in M}$, 
	where $w_{ij}$ is the amount of transactions of goods 
	between buyer $i$ and seller $j$,
	$p_i$ is the payment of buyer $i$, and $r_j$ is the revenue of seller $j$.
	For each $j\in M$, 
	the restriction $w|_{E_j}$ of the transaction vector $w$ to $E_j\subseteq E$ must belong to
	the polymatroid $P_j$ corresponding to $f_j$:
	\begin{equation*}
	w|_{E_j}\in P_j \quad(j\in M).
	\end{equation*}
	Also the payment $p_i$ of buyer $i$ 
	must be within his budget $B_i$: 
	\begin{equation*}
	p_{i}\leq B_i\quad(i\in N).
	\end{equation*}

	A {\it mechanism} $\mathcal M$ is a function 
	that gives an allocation  $\mathcal A=(w,p,r)$
	 from {\it public information} $\mathcal I:=((N,M,E)$, 
	$\{v'_i\}_{i\in N}$,\ $\{B_i\}_{i\in N}$,\ 
	$\{\rho_j\}_{j\in M}$, 
	$\{f_{j}\}_{j\in M})$
	that the auctioneer can access.
	The true valuation $v_i$ of buyer 
	$i$ is private information that 
	only $i$ can access. We regard $\mathcal I$ 
	and $\{v_i\}_{i\in N}$ as the input of our model.

	Next we define the  {\it utilities} of buyers and sellers.
	For an allocation $\mathcal A=(w,p,r)$, 
	the utility $u_i(\mathcal A)$ of buyer $i$ is defined by:  
	\begin{equation}\label{eqn:utility_buyer}
	u_i(\mathcal A):=
	\begin{cases}
	\displaystyle 
	v_i w(E_i)-p_i\quad  {\rm if}\ p_i\leq B_i, \\
	-\infty \qquad\ \qquad \quad\  {\rm otherwise}.
	\end{cases}
	\end{equation}
	Namely the utility of a buyer 
	is the valuation of obtained goods minus the payment. 
	The utility of seller $j$ is defined by:
	\begin{equation} 
	\label{utility_seller}
	u_j(\mathcal A):=r_j+\rho_j (f_j (E_j)- w(E_j)).
	\end{equation}
	This is the sum of revenues and the 
	total valuation of his remaining goods.
	In this model,
	we consider the following properties of mechanism $\mathcal M$.  
	\begin{itemize}
	\item[(ICb)] {\it Incentive Compatibility} of buyers:  For every input
	$\mathcal I, \{v_i\}_{i\in N}$, it holds
	\begin{equation*}
	u_i(\mathcal M(\mathcal I))\leq u_i(\mathcal M(\mathcal I_i))\quad (i\in N),
	\end{equation*}
	 where $\mathcal I_i$ is obtained from $\mathcal I$ 
	by replacing bid $v'_i$ of buyer $i$ with his true valuation $v_i$.
	This means that it is the  
	best strategy for each buyer to 
	report his true valuation.
	\item[(IRb)] {\it Individual Rationality} of buyers: 
	For each buyer $i$, there is a bid $v'_i$ such that $i$ always 
	obtains nonnegative utility. 
	If (ICb) holds, then (IRb) is written as
	\begin{equation*}
	u_i (\mathcal M(\mathcal I_i))\geq 0 \quad(i\in N).
	\end{equation*}
	\item[(IRs)]{\it Individual Rationality} of sellers:
	The utility of each seller $j$ after the auction 
	is at least 
	the utility $\rho_j f_j(E_j)$ at the beginning:
	\begin{equation*}
	u_j (\mathcal M(\mathcal I))
	\geq \rho_j f_j(E_j) \quad(j\in M).
	\end{equation*}
	By (\ref{utility_seller}), (IRs) is equivalent to
	\begin{equation}
	\label{equivalent_to_IRs}
	r_j \geq \rho_j w(E_j).
	\end{equation}

	\item[(SBB)] {\it Strong Budget Balance}: 
	All payments of buyers 
	are directly given to sellers:
	\begin{equation*}
	\sum_{i\in N} p_i=\sum_{j\in M} r_j.
	\end{equation*}
	
	\item[(PO)] {\it Pareto Optimality}: There is no allocation $\mathcal A:=(w,p,r)$ which 
	satisfies $\displaystyle \sum_{i\in N}p_i\geq\sum_{j\in M} r_j$ and the 
	following three conditions:
	\begin{align*}
	&u_i(\mathcal M(\mathcal I^{\ast}))\leq u_i(\mathcal A)\quad\,(i\in N), \\
	&u_j(\mathcal M(\mathcal I^{\ast}))\leq u_j(\mathcal A)\quad (j\in M),
	\end{align*}
	and at least one of the inequalities holds 
	strictly, where $\mathcal I^{\ast}$ is obtained from 
	$\mathcal I$ 
	by replacing $\{v'_i\}_{i\in N}$ 
	with $\{v_i\}_{i\in N}$.
	Namely, there is no 
	other allocation superior to that given by 
	$\mathcal M$ for all buyers and sellers, 
	provided all buyers report their true valuation.
	\end{itemize}
	They are desirable properties that mechanisms should have.
	The main result is:
	\begin{theorem}
	\label{main}
	There exists a mechanism that satisfies all of (ICb),(IRb),(IRs),
	(SBB), and~(PO).
	\end{theorem}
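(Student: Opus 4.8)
The plan is to prove the theorem by exhibiting the ``reduce-and-recover'' mechanism (see Section~\ref{sec:mechanisms}) and verifying (ICb), (IRb), (IRs), (SBB), and (PO) in turn.

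First I would \emph{reduce} the two-sided market to a one-sided one with a single aggregated seller. The structural fact I would establish, using the max-flow/min-cut theorem for polymatroidal network flows of Lawler and Martel~\cite{LM1982}, is that the set of buyer-allocation vectors $\hat x=(\hat x_i)_{i\in N}$ realizable by some transaction vector $w$ with $w|_{E_j}\in P_j$ for all $j\in M$ is again a polymatroid $P(\hat f)$ on $N$, whose rank function $\hat f$ is a minimum over edge subsets of sums of the $f_j$'s; hence $\hat f$ is monotone submodular and, given the value oracles for the $f_j$, its values are computable in polynomial time by submodular function minimization. The reserve prices $\rho_j$ are encoded into this reduced instance (the delicate point; see below). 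Applying the polyhedral clinching auction of Goel et al.~\cite{GMP2015} to the reduced market yields an allocation $\hat x^0\in P(\hat f)$, payments $p^0=(p^0_i)_{i\in N}$ with $p^0_i\le B_i$, and total revenue $\hat R=\sum_{i\in N}p^0_i$; by their theorem this output is incentive compatible and individually rational for buyers and Pareto optimal in the reduced market. Next I would \emph{recover} a transaction vector $w^0$ of the original market realizing $\hat x^0$ --- which exists because $\hat x^0\in P(\hat f)$ --- by a polymatroidal network flow computation, and then distribute $\hat R$ as revenues with $r^0_j\ge \rho_j w^0(E_j)$ and $\sum_{j\in M}r^0_j=\hat R$, which is possible precisely because the reduction and the choice of $w^0$ are arranged so that $\hat R\ge\sum_{j\in M}\rho_j w^0(E_j)$.

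For the verification: (SBB) is immediate since $\sum_i p^0_i=\hat R=\sum_j r^0_j$, and (IRs) holds by the choice of the $r^0_j$. Because the reduced instance depends only on $(N,M,E)$, $\{B_i\}$, $\{\rho_j\}$, $\{f_j\}$ and not on any buyer's bid, and because the recovery and distribution steps change neither the amount $w^0(E_i)=\hat x^0_i$ allocated to buyer $i$ nor his payment $p^0_i$, his utility $u_i$ equals his utility in the reduced one-sided market; hence (ICb) and (IRb) are inherited from Goel et al.~\cite{GMP2015}. The substantive step is (PO): given any two-sided allocation $\mathcal A=(w,p,r)$ with $\sum_i p_i\ge\sum_j r_j$ that weakly Pareto-dominates our output, I would aggregate it to the reduced allocation $\hat x_i:=w(E_i)$, $\hat r:=\sum_j r_j$ and show that it weakly Pareto-dominates the clinching auction's output in the reduced market --- comparing buyer utilities via $u_i(\mathcal A)=v_i w(E_i)-p_i$, and the aggregated seller utility via the reserve encoding together with a suitable choice of $w^0$ (e.g.\ one minimizing $\sum_{j\in M}\rho_j w^0(E_j)$) --- contradicting the Pareto optimality guaranteed in the reduced market.

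The main obstacle is exactly this coordination of the heterogeneous reserve prices with the single-reserve reduced market: the reduced instance and the recovered flow $w^0$ must be designed so that \emph{simultaneously} the inequality $\hat R\ge\sum_{j\in M}\rho_j w^0(E_j)$ holds for every input (so that (IRs) is satisfiable under (SBB)) and no Pareto improvement is lost when passing between the two-sided and reduced markets. Pinning down this reduction, and establishing the polymatroidality and polynomial-time computability of $\hat f$ from the Lawler--Martel framework, is the technical heart of the proof; the remaining verifications are then routine, modulo citing the properties of the polyhedral clinching auction of~\cite{GMP2015}.
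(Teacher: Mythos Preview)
Your high-level architecture (reduce to a one-sided polymatroid market, run Goel et al.'s clinching auction, recover a transaction vector by a polymatroidal flow, then distribute revenue) matches the paper's Mechanism~1. However, the proposal has a genuine gap at precisely the point you flag as ``delicate'': you never say \emph{how} the heterogeneous reserve prices $\rho_j$ are encoded in the single-seller reduced market. The paper's device is to add, for each seller $j$, a \emph{virtual buyer} $n+j$ with bid $\rho_j$, infinite budget, and a single edge to $j$ (with $f_j$ extended so that the virtual buyer can absorb all of $j$'s stock). This is not cosmetic: the virtual buyers are what force every unit of seller $j$'s good to be clinched at price at least $\rho_j$ (Lemma~\ref{transaction}), which is exactly what makes $\hat R\ge\sum_j\rho_j w^0(E_j)$ hold and hence makes an (IRs)-and-(SBB)-feasible revenue vector $r$ exist. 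Your proposal asserts this inequality ``is arranged'' but gives no mechanism to arrange it; without the virtual buyers there is no reason it should hold. Moreover, the paper does not prove the existence of such an $r$ (Lemma~\ref{r_nonempty}) directly from Mechanism~1 at all: it constructs Mechanism~2, proves (IRs) and (SBB) for it, and then invokes Theorem~\ref{relation} to transport the resulting $r$ back to Mechanism~1. So the feasibility of the revenue-distribution step is substantially harder than you indicate.

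Your (PO) strategy --- aggregate a putative dominating two-sided allocation and contradict one-sided PO --- is also not what the paper does, and it is not clear it works as stated. The one-sided market of Goel et al.\ has no seller utility to compare against; with virtual buyers, seller $j$'s utility corresponds to a combination of $r_j$, $p_{n+j}$, and the virtual buyer's utility, and the correspondence is not a clean black-box reduction. The paper instead proves (PO) directly (Theorem~\ref{pareto}, Section~\ref{subsec:pareto}) by importing the \emph{structure} of the one-sided analysis --- the chain of tight sets and dropping prices of Proposition~\ref{dropping} --- and running an inductive telescoping argument along that chain, carefully tracking both nonvirtual and virtual buyers (the latter standing in for seller utilities). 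So even once you add virtual buyers, the (PO) proof requires real work beyond citing~\cite{GMP2015}.
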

	The details of our mechanisms are explained in Section 2.

	\begin{remark}
	\label{socialwelfare}
	{\rm Maximizing the {\it social welfare}, 
	the sum of utilities of all participants, 
	is usually set as the goal in traditional auction theory. 
	However, in budgeted settings, 
	it is shown in~\cite{DLN2012} that the
	maximum social welfare and incentive compatibility 
	of buyers  cannot be achieved simultaneously.
	As in the previous works 
	\cite{BCMX2010,BHLS2015,
	DLN2012,DHS2015,FLSS2011,GMP2013,GMP2014,GMP2015}, 
	we give priority 
	to incentive compatibility.}
	\end{remark}

	\subsection{Application to display ad auction of multiple sellers and slots}\label{subsec:application}
	We present applications of our results to 
	display advertisements\,(ads) between advertisers and 
	owners of websites. Each owner $j$ wants to sell  
		ad slots in his website. Each advertiser 
		$i$ wants to purchase the slots. Namely, the owners are sellers, and advertisers are buyers, where buyer $i$ is linked to seller $j$ if $i$ is interested in the slots of the website of $j$. The market is modeled as a bipartite graph $(N,M,E)$ as above. 
	We consider the following two types of ad auction,
	which are viewed as  
	{\it reservation exchange markets} 
	in the sense of \cite{GLMNP2016}.
	
	\subsubsection{Page-based ad auction}\label{subsub:page-based}
	The website of seller $j$ consists of pages 
	$1,2,\ldots,s_j$, where each page 
	$k\in \{1,2,\ldots,s_j\}$ has $t_j^k$ ad slots. 
	Each buyer purchases at most one slot from each page 
	(so that the same advertisement cannot be displayed 
	simultaneously).
	To control transaction of each seller $j$, set 
	$f_j:2^{E_j}\to \mathbb R_{+}$ as 
	\begin{equation}
	\label{appli}
	f_{j}(F_j):=\sum_{k\in\{1,2,\ldots,s_j\}} \min (t^k_j,|F_j|)\quad (F_j\subseteq E_j).
	\end{equation}
	Then $f_j$ is actually a monotone submodular function.
   It turns out that $f_j$ is appropriate for our purpose. 
   Hence the market falls into our model, and our 
   mechanisms are applicable to obtain transaction $w$.
   
   We first explain the way of 
   ad-slotting at owner $j$ from the obtained $w$, 
   in which the meaning of $f_j$ will be made clear.
   We consider the following network ${\cal N}_j$.
	Let $N_j$ denote the set of buyers linked to $j$.
	Consider buyers in $N_j$ and pages of $j$ as nodes in ${\cal N}_j$. 
	Add a directed edge from each buyer $i\in N_j$ to each page 
	$k$ with unit capacity.
	Add source node $a$ and edge $ai$ for each $i\in N_j$ 
	with capacity $w_{ij}$.
	Also, add sink node $b$ and edge $kb$ for each page $k$ with 
	capacity $t^{k}_{j}$.
	Consider a maximum flow $\varphi$ in the 
	resulting network $\mathcal N_j$.
	From the max-flow min-cut theorem, one can see that 
	$f_j(F_j)$ is nothing but the maximum value of a flow 
	from $F_j$ to $b$.
	Hence, transaction $w_{ij}$ 
	satisfies the polymatroid constraint of $j$ 
	if and only if 
	every maximum flow in $\mathcal N_j$ 
	attains capacity bound $w_{ij}$ on each source edge $ai$.
	
	From a maximum flow $\varphi$,
	the owner $j$ conducts ad-slotting according to values  
	$\varphi(ik)\in [0,1]$.
	Consider first 
	an ideal situation where $\varphi$ is integer-valued, 
	i.e., $\varphi(ik)\in \{0,1\}$.
	Then the owner~$j$ naturally assigns the ad of buyer $i$ 
	at page $k$ if $\varphi(ik)=1$, 
	since $w_{ij}$ is the sum of $\varphi(ik)$ over $k$ 
	(by flow conservation law and $\varphi(si)=w_{ij}$), 
	and at most $t_j^{k}$ buyers purchase page $k$.
	Consider the usual case where $\varphi(ik)$ 
	is not integer-valued.
	The value $\varphi(ik)$ will be interpreted as the probability 
	that the ad of $i$ is displayed at page $k$. 
	Add dummy buyers $i'$ to $N_j$ and define $\varphi(ki')$ 
	so that $|N_j| \geq t_j^k = \sum_{i \in N_j} \varphi(ik)$ (if necessary). 
	Consider probability $p = p^k$ over the set of 
	all $t_{j}^{k}$-element subsets of $N_j$ satisfying 
	\begin{equation}
	\label{probability}
	\sum_{X:i\in X}p(X)=\varphi(ik)\quad (i\in N_j).
	\end{equation}
	The owner selects a $t_{j}^k$-element subset $X \subseteq N_j$ with probability 
	$p(X)$, and displays the ads of $X$ at page $k$.
	By~(\ref{probability}), the ad of advertiser $i$ is displayed with 
	probability $\varphi(ik)$, as desired.
	
	The probability with 
	(\ref{probability}) can be constructed by 
	the following algorithm, where $t:=t_j^{k}$.
	\begin{itemize}
	\item[0.] Let $q_i:=\varphi(ik)\ (i\in N_j)$, and let
	$p(X):=0$ for all $t$-element subsets $X$ of $N_j$.
	\item[1.] If $q_i=0\ (i\in N_j)$, then output $p$.
	\item[2.] Sort $q_{i_1}\geq q_{i_2}\geq q_{i_3}\geq \cdots$, and  
	let $X:=\{i_1,\ldots,i_t\}$.
	\item[3.] Define $p(X)$ as the maximum $\gamma$ 
			with $0\leq q_i-\gamma$ for $i\in X$ and 
			$q_i\leq 1-\sum_{X}p(X)-\gamma$ for $i\notin X$.
	\item[4.] Let $q_i:=q_i-p(X)$ for $i\in X$, and go to 1.
	\end{itemize}
	Let us sketch the correctness of the algorithm. 
	In step 2, it always holds $0\leq q_i\leq 1-\sum_{X}p(X)$, and 
	$\sum_{i\in N_j}q_i=(1-\sum_{X}p(X))t$.
	Then it holds in step 3 that $q_i>0$ for $i\in X$, and 
	$q_i<1-\sum_{X}p(X)$ for $i\notin X$.
	Thus $\gamma$ is positive. After steps 3 and 4, 
	$q_i$ is zero for some $i\in X$  
	or $q_{i'}=1-\sum_{X}p(X)$ holds for some $i'\notin X$.
	For the latter case, $i'$ is always contained 
	in $X$ for the subsequent iterations. 
	If $q_i=1-\sum_{X}p(X)$ for all $i\in X$, then 
	$q_{i'}=0$ for $i'\notin X$.
	Thus the algorithm terminates after $O(N_j+t)$ iterations.
	By construction, it holds 
	$\sum_{X: i\in X}p(X)=\varphi (ik)\ (i\in N_j)$ and $\sum_{X} p(X) = 1$.

	\subsubsection{Quality-based ad auction}
	Each seller has one page with $t_j$ ad slots.
	Each slot $l\in \{1,2,\ldots,t_j\}$ has a barometer 
	$\beta^j_l\in \mathbb R_+$ for the quality of ads,
	which can be thought of as the expected number of views of $l$ 
	over a certain period. Suppose that 
	$\beta^j_1\geq \beta^j_2\geq \cdots \geq\beta^j_{t_j}$.
	The {\it view-impression} is the unit of this barometer;
	namely, slot $l$ has $\beta^j_l$ view-impressions.
	In the market, buyers purchase 
	view-impressions from sellers, 
	and have bids and valuations for the unit view-impression.
		After the auction, suppose that buyer 
	$i$ obtains $w_{ij}$ units of view-impressions from $j$. 
	Seller $j$ assigns ads of buyers 
	(linked to $j$) to his slots so that the same ads cannot be displayed in distinct slots at the same time. 
	 An ad-slotting is naturally represented by $(S,\psi)$ for a set  $S$ of buyers and a bijection 
	 $\psi$ from $S$ to the $|S|$ slots with the highest quality. 
	 Here the number of slots is assumed at least $|S|$ 
	 by adding dummy slots of $0$ view-impressions.
	 Seller $j$ displays ads of $i$ according to a probability 
	 distribution $p$ on the set of all ad-slottings satisfying 
	\begin{equation}
	\label{allocation}
	w_{ij}=\sum_{(S,\psi)} p(S,\psi)\beta^j_{\psi(i)}.
	\end{equation}
	Namely the expected number of 
	view-impressions of ads $i$ in the website of $j$
	is equal to $w_{ij}$.
	The existence of such a $p$ constrains transactions 
	$w_{ij}$, and is equivalent to the condition that  for each set $S$ 
	of buyers linked to $j$,
	$\sum_{i\in S}w_{ij}$ is at most the sum of $|S|$ 
	highest $\beta^j_l$. 
	This condition can be written by the following monotone 
	submodular function $f_j$:
	\begin{equation*}
	f_j(F_j):=\sum_{l=1}^{|F_j|}\beta^j_l \quad (F_j\subseteq E_j).
	\end{equation*}
    Then seller $j$ restricts his transactions $w_{ij}$ by 
	the corresponding polymatroid $P_j$, 
	to conduct the above way of ad-slotting after the auction. 
	Again the market falls into our model, and is a modification of 
	Adwords auctions in \cite{GMP2015} for display ad auctions 
	with multiple websites,
	where the way of ad-slotting 
	according to (\ref{allocation}) is 
	based on their idea.
	
	Notice that an extreme point of $P_j\subseteq \mathbb R^{E_j}$ 
	is precisely a vector $\xi=(\xi_i)_{i\in N_j}$ such that 
	for some subset $S\subseteq N_j$ and 
	bijection $\psi:S\to \{1,2,\ldots,|S|\}$, 
	it holds $\xi_i=\beta^j_{\psi(i)}$ if $i\in S$, or $0$ 
	otherwise.
	In particular, (\ref{allocation}) is viewed as a 
	convex combination of extreme points of~$P_j$.
	Therefore, the required probability distribution $p$ 
	 is obtained by expressing $w$ 
	as a convex combination of extreme points of~$P_j$.
	
	The page-based ad auction in Section~\ref{subsub:page-based} can incorporate   
	the quality-based formulation.
	Suppose that each slot 
	$l\in \{1,2,\ldots,t_j\}$ (including dummy slots) 
	on the page $k$ has a 
	barometer $\beta^{j}_{kl}$ for the quality of ads, 
	and that $\beta^j_{k1}\geq \beta^j_{k2}\geq \cdots\geq  \beta^j_{k t_j}$ for each 
	$k\in \{1,2,\ldots,t_j\}$. 
	Replace $f_j$ in (\ref{appli}) by 
		\begin{align}
	\label{appli2}
	f_j^{k}(F_j)&:=\sum_{j=1}^{|F_j|}\beta^j_{kl}, \\
	\label{appli3}
	f_{j}(F_j)&:=\sum_{k\in\{1,\ldots,s_j\}} 
	f_j^k(F_j)\quad (F_j\subseteq E_j).
	\end{align}
	This $f_j$ is also monotone submodular, and 
	our mechanisms are applicable.
	The obtained 
	transaction $w_{ij}$ is distributed to 
	view-impression $\varphi(ik)$ per page $k$ 
	so that $\sum_{k\in \{1,\ldots, t_j\}}\varphi(ik)=w_{ij}$, and 
	$\varphi(ik)\ (i\in N_j)$
	satisfies polymatroid constraint 
	by $f_j^k$. 
	Such a distribution can easily be obtained via polymatroidal 
	network flow, introduced in the next section.
	The owner conducts, at each page, 
	the ad-slotting in the same way as above.

	\begin{comment}
	and compute a maximum flow $\varphi$.
	Let $\psi^k$ be an 
	injection from $S$ to the set of slots 
	$\{1,2,\ldots,\alpha_k^j\}$, and we can 
	determine the probability distribution 
	$p^k(S,\psi^k)$ of page $k$ such that 
	\begin{equation*}
	\varphi|_{ik}=\sum_{(S,\psi)} p(S,\psi^k)\beta^j_{\psi^k(i)}.
	\end{equation*}
	\end{comment}

\section{Polyhedral Clinching Auctions for Two-sided Markets}\label{sec:mechanisms}
\subsection{Polymatroidal network flow}\label{subsec:flow}
	In our mechanisms, we utilize polymatroidal network flow model 
	by Lawler and Martel \cite{LM1982}.
	A {\it polymatroidal network} is a directed network
	$(V,E)$ with source $s$ and sink $t$ such that 
	each node~$v$ has polymatroids $P^{+}_v$ and $P^{-}_v$
	defined on the sets $\delta_v^+$ and $\delta_v^-$ of
	edges leaving $v$ and entering~$v$, respectively.
	
	A {\it flow} is a function $\varphi:E\to\mathbb R_+$ satisfying
	\begin{align*}
	\varphi(\delta^+ v)&=\varphi(\delta^- v) 
	\quad (v\in V\setminus\{s,t\}),	\\
	\varphi|_{\delta^+ v}&\in P_v^+,\\
	\varphi|_{\delta^- v}&\in P_v^-.
	\end{align*}
	Let $f^+_v$ and $f^-_v$ denote the 
	monotone submodular functions corresponding 
	to $P^+_v$ and $P^-_v$, respectively. 
	In the case where the 
	network has edge-capacity 
	$c:E\to \mathbb R_+$, 
	the capacity-constraint $\varphi(e) \leq c(e)$ around a vertex $v$
	is also  written by
	the polymatroid of 
	submodular function:
	\begin{equation}\label{eqn:capacity}
	F \mapsto c(F) =  \sum_{e\in F}c(e) \quad 
	(F\subseteq \delta_v^+\ ({\rm or}\ \delta_v^-)).	
	\end{equation}
	The flow-value of a flow $\varphi$ is defined as
	\begin{equation*}
	\varphi(\delta^+_s)-
	\varphi(\delta^-_s) \ 
	(=\varphi( \delta^-_t)-
	\varphi(\delta^+_t)).
	\end{equation*}

	The following is a generalization of the 
	max-flow min-cut theorem 
	for polymatroidal networks, and is also 
	a version of the polymatroid 
	intersection theorem.
	\begin{theorem}[Lawler and Martel \cite{LM1982}]
	\label{max-flow-min-cut}
	The maximum value of a flow is equal to 
	\begin{equation}
	\label{max-flow-min-cut2}
		\min_{U,A,B} \{
		\sum_{v\in V\setminus U}f^-_
		v(\delta^-_v\cap A)+
		\sum_{v\in U}f^+_v(\delta^+_v\cap B)\},
	\end{equation}
	where $U$ ranges over all node subsets with 
	$s\in U\not\ni t$ and $\{A,B\}$ ranges over 
	all bi-partitions of the set of edges leaving $U$.
	\end{theorem}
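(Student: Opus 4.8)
\noindent The plan is to prove the stated min--max identity through the two usual directions: weak duality, that every flow value is at most every cut value, and the existence of a flow attaining some cut. Throughout, write $\delta^+U$ and $\delta^-U$ for the sets of edges leaving and entering a node set $U$.

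\noindent\emph{Weak duality.} Fix a flow $\varphi$ and a triple $(U,A,B)$ as in the statement, so $\{A,B\}$ partitions $\delta^+U$. Summing the conservation law over $v\in U$ and using $s\in U\not\ni t$, the flow value equals $\varphi(\delta^+U)-\varphi(\delta^-U)$, which is at most $\varphi(\delta^+U)=\varphi(A)+\varphi(B)$ since $\varphi\ge 0$. Every edge of $A$ has its head in $V\setminus U$, so $A$ is partitioned by the sets $\delta_v^-\cap A$ with $v\in V\setminus U$, and $\varphi|_{\delta_v^-}\in P_v^-$ gives $\varphi(\delta_v^-\cap A)\le f_v^-(\delta_v^-\cap A)$; hence $\varphi(A)\le\sum_{v\in V\setminus U}f_v^-(\delta_v^-\cap A)$. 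Symmetrically $\varphi(B)\le\sum_{v\in U}f_v^+(\delta_v^+\cap B)$, and adding the two bounds shows the flow value is at most the quantity in (\ref{max-flow-min-cut2}).

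\noindent\emph{Strong duality.} Take a flow $\varphi$ of maximum value and build a residual digraph on $V$. The role of the polymatroids is that spare room at a node is not an edge-by-edge quantity: one may push an extra unit into $v$ along an entering edge $e$ only by simultaneously decreasing the flow on an entering edge $e'$ lying outside every $\varphi|_{\delta_v^-}$-tight set that contains $e$, and dually for leaving edges and $P_v^+$. So one splits each node $v$ into an in-copy and an out-copy, joined by residual arcs that encode these exchange relations for the current tight-set structure of $\varphi|_{\delta_v^-}$ and $\varphi|_{\delta_v^+}$, together with the usual forward arc on each unsaturated edge and backward arc on each edge of positive flow. A directed $s$--$t$ path in this digraph is an augmenting structure along which the flow value strictly increases while feasibility is preserved at every node; composing the local exchanges into one global exchange is exactly where submodularity enters, through closure of the family of tight sets under union and intersection. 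By maximality of $\varphi$, no such path exists. Let $U$ be the set of nodes whose out-copy is reachable from $s$; then $s\in U\not\ni t$, no edge of positive flow enters $U$, and for each edge of $\delta^+U$ one of the two tight-set certificates at its endpoints applies. Letting $A$ be the edges of $\delta^+U$ certified at their head (in $V\setminus U$) and $B$ the rest, unreachability forces $\varphi(\delta_v^-\cap A)=f_v^-(\delta_v^-\cap A)$ for all $v\in V\setminus U$ and $\varphi(\delta_v^+\cap B)=f_v^+(\delta_v^+\cap B)$ for all $v\in U$. Summing these equalities and using $\varphi(\delta^-U)=0$, the value of $\varphi$ equals the cut value of $(U,A,B)$, which with weak duality proves the theorem. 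Alternatively, one can cast the flow polytope as an instance of submodular flow, or of polymatroid intersection, and read the minimizing cut off an optimal dual solution.

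\noindent\emph{Main obstacle.} The difficulty lies entirely in the strong-duality direction, and within it in the design and correctness of the node-internal residual arcs: one has to verify both that every residual $s$--$t$ path yields a genuinely feasible augmentation at each node, obtained by composing the polymatroidal exchange relations along the path, and conversely that the absence of such a path pins down, at each boundary node, a tight set that exactly accounts for the flow crossing there. Establishing these two facts is where the submodularity of the functions $f_v^\pm$ does the essential work.
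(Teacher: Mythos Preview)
The paper does not give its own proof of this theorem: it is quoted as a result of Lawler and Martel and immediately followed only by a pointer to the literature (``There are several algorithms to obtain a maximum flow $\varphi$ and $U,A,B$ attaining the minimum of (\ref{max-flow-min-cut2}); see \cite[Section 5]{F2005} and references therein''). So there is no in-paper argument to compare your proposal against.

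On the substance of what you wrote: your weak-duality paragraph is clean and correct. Your strong-duality paragraph is a faithful outline of the Lawler--Martel augmenting-path approach (node splitting, exchange arcs governed by tight sets of $\varphi|_{\delta_v^\pm}$, reachability cut), and you correctly flag where the real work lies. But as written it is a sketch rather than a proof: the two claims you isolate in ``Main obstacle''---that any residual $s$--$t$ path yields a strictly improving feasible flow, and that non-reachability forces the exact tightness equalities $\varphi(\delta_v^-\cap A)=f_v^-(\delta_v^-\cap A)$ and $\varphi(\delta_v^+\cap B)=f_v^+(\delta_v^+\cap B)$---are asserted, not verified. In particular, the step ``for each edge of $\delta^+U$ one of the two tight-set certificates at its endpoints applies'' and the subsequent partition into $A$ and $B$ needs an argument that the certificates can be chosen consistently so that the \emph{aggregate} sets $\delta_v^-\cap A$ and $\delta_v^+\cap B$ are themselves tight (not merely that each edge lies in some tight set). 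That consistency is exactly what the lattice property of tight sets provides, but you have to say how. If you want a self-contained proof, either fill in those two verifications or, as you note at the end, reduce to polymatroid intersection and invoke that min--max theorem directly.
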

	There are several algorithms to obtain a maximum flow $\varphi$ 
   and $U,A,B$ attaining the minimum of (\ref{max-flow-min-cut2}); 
   see \cite[Section 5]{F2005} and references therein.

\subsection{First Mechanism}
	Here we describe the first mechanism
	for Theorem \ref{main}.
	First we make the 
	following preprocessing on the market.
	Buyers and sellers are numbered as 
	$N = \{1,2,..., n\}$ and $M=\{1,2,...,m\}$. 
	For each seller $j \in M$, add to $N$ a {\em virtual buyer} $n+j$
	corresponding to $j$, and add to $E$ 
	a new edge connecting $n+j$ and $j$, i.e.,
	buyer $n+j$ transacts only with $j$.
	The virtual buyer $n+j$ has 
	sufficiently large budget $B_{n+j} = \infty$,
	and reports the valuation 
	$\rho_j$ of $j$ as bid $v'_{n+j}$.  
	The valuation  $v_{n+j}$ is set as 
	$v_{n+j}:=v'_{n+j}$ (though we do not use $v_{n+j}$ in the mechanisms).
	The submodular function $f_j$ is extended by
	\begin{align}
	\label{f_j}
	f_j (F):=
	\begin{cases}
	f_j (E_j)          \qquad  {\rm if}\ (n+j)j\in F, \\
	f_j (F)       \qquad\  {\rm otherwise}
	\end{cases}
	\quad (F \subseteq E_j \cup \{ (n+j)j\}).
	\end{align}
	This means that  the goods purchased by buyer $n+j$
	is interpreted as the unsold goods of $j$.
	The utility of seller $j$ (in the original market) 
	is the sum of the revenue of seller $j$ and the 
	utility of buyer $n+j$ after the auction.

	We utilize the framework by Goel et al. \cite{GMP2015} for
	one-sided markets under polymatroidal environments.
	From our two-sided market (including virtual buyers), 
	we construct a one-sided market consisting of 
	$N$ and one seller (= auctioneer) to which their framework is 
	applicable. Define polymatroid $P \subseteq \mathbb{R}_+^E$ by 
	\begin{equation}\label{eqn:P} 
	\displaystyle P:=\bigoplus_{j\in M}\ P_j
	=\{w\in \mathbb R_{+}^{E}\mid  
	 w|_{E_j}\in P_j\ (j\in M) \}.
	\end{equation}
	The polymatroidal environment on $N$ is defined by 
	the following polytope $\tilde{P}\subseteq \mathbb R^{N}_+$:
	\begin{equation*}
	\tilde{P}:
	=\{y\in \mathbb R_{+}^{N}\mid 
	 \exists w\in P,\  
	 y_i=w(E_i) \ (i\in N) \}.
	\end{equation*}
	Then $\tilde{P}$ is a polymatroid,
	which immediately follows from a network 
	induction of a polymatroid~\cite{MD1975}.
	The resulting one-sided market is called the {\em reduced one-side market}.
	An allocation of this market is a pair $(y,\pi)$ of 
	$y,\pi\in \mathbb R_+^{N}$,
	where $y_i$ and $\pi_i$ is the transaction and payment,
	respectively, of buyer $i$ to the seller.
	The transaction vector $y$ must belong to 
	the polymatroid~$\tilde{P}$.
	For each buyer $i$, payment $\pi_i$ must be 
	within his budget $B_i$.
	The utility of buyer $i$ is defined in (\ref{eqn:utility_buyer}) by replacing 
	$w(E_i)$ with $y_i$, and $p_i$ 
	with $\pi_i$.
	Now the original polyhedral clinching auction~\cite{GMP2015}
	is applicable to this one-sided market, and is given in 
	Algorithm \ref{algo1}.
	\begin{algorithm}[t]
	\caption{Polyhedral\ Clinching\ Auction \cite{GMP2015} for (Reduced) 
	One-sided Market.}         
	\label{algo1}                     
	\begin{algorithmic}[1]
	  \STATE $y_i:=0,\pi_i:=0,\ 
	  c_i:=0,\ d_i:=\infty\ 
	  (i\in N)\ {\rm and}\ l:=1$\\
	  \WHILE{$d\neq 0$}
	  \STATE Clinch a maximal increase 
	  $(\zeta_{i})_{i\in N}$   
	   \underline{{\it not affecting other buyers.}} 
	  \STATE  $y_i:=y_i+\zeta_i$,\ 
	  $\pi_i:=\pi_i+c_i \zeta_i\quad (i\in N)$ 
	  \STATE $c_{l}:=c_{l}+\varepsilon$ 
	  \STATE $ d_i:=
	 \begin{cases}
	(B_i-\pi_i)/c_i\quad {\rm if}\  c_i<v'_i,\\
	0 \quad\qquad\qquad\ \  {\rm otherwise}, 
	\end{cases}\quad (i\in N).$
	  \STATE $l:=l+1\ {\rm mod\ } m+n$
	  \ENDWHILE
	\end{algorithmic}
	\end{algorithm}	
	The meaning of variables $c_i,\zeta_i,y_i,\pi_i,d_i,l$ 
	and fixed parameter $\varepsilon>0$ is 
	explained as follows:
	\begin{itemize}
	\item $c_i$ is the {\it price clock} 
	 of buyer $i$, which is used as
	  the transaction cost for 
	  one unit of the goods at this moment.
	  It starts at $0$
	  and increases by
	 $\varepsilon$ in each step.	
	\item $\zeta_i$ is the amount of 
	goods that buyer $i$ clinches in the current iteration.
	\item $y_i$ is the amount of goods of buyer $i$. 
	\item $\pi_{i}$ is the payment of buyer $i$.
	\item $d_i$ is the {\it demand} of buyer $i$, 
	which is interpreted as the maximum possible amount
	 of transactions that buyer $i$ can 
	 get in the future.
	\item $l$ is the buyer whose price clock 
	is increased in the next iteration.
	\end{itemize}
	Algorithm \ref{algo1} terminates 
	when $d_i=0$ for all buyer $i$, and outputs $(y,\pi)$
	at this moment.
	As in Goel et al. \cite{GMP2014} we assume the following:
	\begin{assumption}
	\label{assump_varepsilon1}
	All values of $v_i'$ are multiples of 
	$\varepsilon$.
	\end{assumption}

	We have to explain the detail of 
	``not affecting other buyers" in line 3.
	For a transaction vector~$y$ and 
	demand vector $d$, define 
	the {\em remnant supply polytope} $\tilde{P}_{y,d}$ by
	\begin{equation}
	\label{remnant_supply1}
	\tilde{P}_{y,d}:=\{z\in\mathbb R^{N}_{+}\mid y+z\in \tilde{P},\ z_i\leq d_i\ (i\in N)\}.
	\end{equation}
	Also, for $\zeta\in \mathbb R_+$, define the 
	{\em remnant supply polytope of 
	remaining buyers $N-i$} by 
	\begin{equation}
	\label{remnant_supply_i1}
	\tilde{P}^{i}_{y,d}(\zeta):=\{\sigma\in
	\mathbb R^{N-i}_{+}\mid  
	\exists z\in \tilde{P}_{y,d},\ z_i=\zeta,\ z|_{N-i}=\sigma \}.
	\end{equation}
	The first polytope $\tilde P_{y,d}\subseteq \tilde{P}$ 
	represents the feasible 
	increases of transactions under $y$ and $d$,
	and the second polytope $\tilde P^{i}_{y,d}(\zeta)
	\subseteq \mathbb R^{N-i}_{+}$
	represents the possible  
	amounts of goods that buyers except $i$ can 
	get in the future,
	provided $i$ 
	got $\zeta\in \mathbb R_{+}$ in this iteration.
	For each $i$, clinching  
	$\zeta_i$ in line 3 is chosen as the 
	maximum $\zeta$ for 
	which $\tilde{P}^i_{y,d}(\zeta)=\tilde{P}^i_{y,d}(0)$.
	Then the following theorem holds.
	\begin{theorem}[Goel et al. \cite{GMP2015}]
	\label{goel}
	Algorithm \ref{algo1} satisfies (ICb) and (IRb).
	Moreover, when the algorithm terminates, 
	the transaction vector $y$ belongs to the base polytope of $\tilde{P}$.
	\end{theorem}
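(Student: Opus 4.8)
The plan is to prove the two assertions separately — the game-theoretic properties (ICb),(IRb), and the polyhedral fact that the terminal $y$ is a base of $\tilde P$ — following the standard analysis of clinching auctions, with the genuinely polymatroidal content concentrated in one structural lemma. First I would record the invariants maintained by Algorithm \ref{algo1}: \textbf{(i)} budget feasibility, $\pi_i\le B_i$ for every $i$ at every iteration, by induction from line 6 (with a short check, using Assumption \ref{assump_varepsilon1}, for the iteration in which $c_i$ crosses $v_i'$); \textbf{(ii)} feasibility of the clinch, $y+\zeta\in\tilde P$ and $\zeta_i\le d_i$ for the vector $(\zeta_i)_{i\in N}$ of line 3; and \textbf{(iii)} monotonicity, $\zeta\mapsto\tilde{P}^i_{y,d}(\zeta)$ is $\subseteq$-nonincreasing, so the rule in line 3 (``the maximum $\zeta$ with $\tilde{P}^i_{y,d}(\zeta)=\tilde{P}^i_{y,d}(0)$'') is well posed. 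Invariant (ii) is the one place where polymatroid theory is essential: it asserts that if, for each $i$ separately, raising $y_i$ by $\zeta_i$ does not shrink the feasible region of the other buyers, then raising all coordinates simultaneously keeps $y$ in $\tilde P$; I would derive it from the exchange property of polymatroids together with the product structure $P=\bigoplus_{j}P_j$ and the network-induction description of $\tilde P$.

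For (IRb), let buyer $i$ bid truthfully, $v_i'=v_i$. By line 6, buyer $i$ clinches only while $c_i<v_i$, and by Assumption \ref{assump_varepsilon1} the clock reaches $v_i$ exactly, so every unit that $i$ ever clinches is charged a unit price $c_i<v_i$ in line 4. Hence $u_i(\mathcal M(\mathcal I_i))=v_iy_i-\pi_i=\sum(v_i-c_i)\zeta_i\ge 0$, and $\pi_i\le B_i$ by invariant (i), so this utility is finite and nonnegative, which is (IRb).

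For (ICb), fix buyer $i$, fix the reports of all other buyers, and let $v_i$ be $i$'s true value. The crux is an \emph{exogenous-supply lemma}: the entire run of the algorithm — the clocks, demands and clinched amounts of every buyer — is determined by the data of the buyers other than $i$ together with the single bit ``$c_i<v_i'$'', up to the first iteration in which this bit turns false. This follows by induction on iterations, since given the current state the clinch of line 3 and the updates of lines 4--7 depend only on $(y,d)$ and the submodular functions, while $d_i$, as long as $i$ is active, depends only on $B_i$ and on $\pi_i$, itself a deterministic function of $i$'s past clinches at the (already exogenous) prices; the numerical value of $v_i'$ never enters, only the step at which $i$ drops out. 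Consequently there is a nondecreasing price--quantity schedule $c\mapsto g_i(c)$, independent of $v_i'$, such that a buyer with true value $v_i$ reporting $v_i'$ terminates with $y_i=g_i(v_i')$ and pays the sum of $c\,\Delta g_i(c)$ over price levels $c\le v_i'$, hence has utility the sum of $(v_i-c)\,\Delta g_i(c)$ over those levels. Since each summand is $\ge0$ for $c<v_i$ and $\le0$ for $c>v_i$, the utility is maximized at $v_i'=v_i$, which is exactly (ICb).

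It remains to show that the terminal $y$ is a base of $\tilde P$. At termination $d=0$, so every buyer is either priced out ($c_i\ge v_i'$) or has exhausted his budget. Arguing by contradiction, if $y$ were not maximal in $\tilde P$ there would be a coordinate $i$ in no $\tilde P$-tight set; then the last iteration at which coordinate $i$ could still be increased ``without affecting others'' — kept alive, when $i$ is a real budget-exhausted buyer, by the infinite-budget virtual buyer $n+j$ absorbing the residual stock of the relevant seller — would, by the maximality rule of line 3, have increased $y_i$, a contradiction. I expect the exogenous-supply lemma for (ICb) to be the main obstacle, as that is where essentially all of the polymatroid exchange machinery is spent; the verification that the residual supply is always absorbed, so that $y(N)=\tilde f(N)$, is the secondary technical point.
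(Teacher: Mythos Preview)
The paper does not supply its own proof of Theorem~\ref{goel}: the result is quoted from Goel, Mirrokni and Paes~Leme~\cite{GMP2015} and used as a black box (the proof of Theorem~\ref{mechanism_1} invokes it directly, and Corollary~\ref{ICb} obtains (ICb),(IRb) for mechanism~2 by reducing to it via Theorem~\ref{relation}). There is therefore no in-paper argument to compare your sketch against; what you have written is an attempted reconstruction of the argument in~\cite{GMP2015}.

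On that reconstruction: your treatments of (IRb) and (ICb) follow the standard clinching-auction line and are sound in outline. The exogenous-supply observation --- that while $c_i<v_i'$ the numerical value of $v_i'$ never enters the dynamics, only the step at which the bit flips --- is precisely the mechanism behind incentive compatibility in~\cite{GMP2015}. Your invariant~(ii), that the individually computed $\zeta_i$'s are \emph{jointly} feasible, is itself a nontrivial polymatroid fact (it rests on the submodularity of the remnant function $g_{y,d}$; cf.\ Theorem~\ref{clinch_goel} and Theorem~\ref{f=g}~(ii), which show the clinch amounts are order-independent), so flagging it as the load-bearing step is correct.

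The base-polytope argument is where your sketch has a real gap. The contradiction you propose --- ``the last iteration at which coordinate $i$ could still be increased without affecting others would, by maximality, have increased $y_i$'' --- does not close: at that iteration $i$ may well have been blocked from clinching precisely because other buyers still had large demand, and once $d_i=0$ nothing in the algorithm forces the residual slack in direction $i$ to be absorbed by $i$. Invoking the virtual buyers does not repair this, since they too drop out once $c_{n+j}\ge\rho_j$, and in any case Goel et al.\ establish the base property for an arbitrary polymatroid $\tilde P$, with no virtual-buyer device available. Their actual argument is structural and is partially visible in this paper as Proposition~\ref{dropping}: each unsaturated drop produces a tight set, these tight sets form a chain $\emptyset=X_0\subset X_1\subset\cdots\subset X_t=N$, and tightness of $X_t=N$ is exactly $y(N)=\tilde f(N)$. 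To make your proof go through you would need to reproduce that tight-set analysis rather than argue by contradiction on a single coordinate.
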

	
	Our mechanism makes use of 
	the allocation $(y,\pi)$ for Algorithm \ref{algo1}.
	We transform $(y,\pi)$ into an 
	allocation $(w,p,r)$ for the 
	original two-sided market 
	by using polymatroidal network flow; see Section~\ref{subsec:flow}.
	Let us construct a polymatroidal network $\mathcal N$
	from the market $(N,M,E)$ and $y$; see Figure~\ref{network_1}.
		\begin{figure}[t]
			\centering
			\includegraphics[width=10cm]{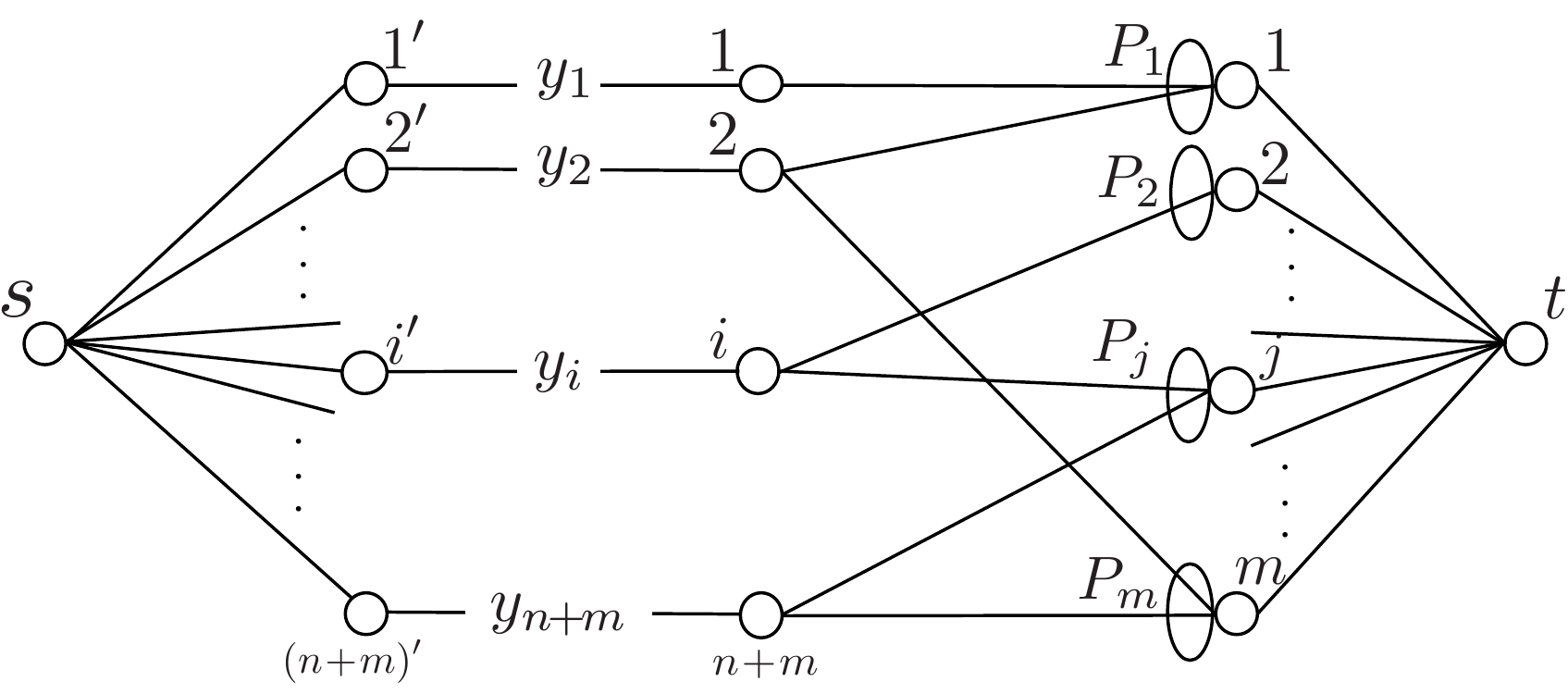}
			\caption{The polymatroidal network $\mathcal N$ in mechanism 1.}
			\label{network_1}
		\end{figure}
	Each edge in $E$ is directed from $N$ to $M$.
	For each buyer $i$, consider copy $i'$ and edge $i'i$.
	The set of the copies of 
	$X\subseteq N$ is denoted by $X'$.
	Add source node $s$ and edge $si'$ for each $i'\in N'$.
	Also, add sink node $t$ and edge $jt$ for each $j\in M$.
	Edge-capacity $c$ is defined from $y$ by 
	\begin{align}
	\label{capacity_mechanism1}
	c(e):=
	\begin{cases}
	y_i\quad \,{\rm if}\ e=i'i \ 
	{\rm for}\ i\in N, \\
	\infty\quad {\rm otherwise}.	
	\end{cases}
	\end{align}
	Two polymatroids of each node are defined as follows.
	For $j\in M$, the polymatroid on $\delta^{-}_{j}=E_j$ is
	defined as $P_{j}$, and 
	the other polymatroid on $\delta^+_{j}$ is 
	defined by the capacity as in (\ref{eqn:capacity}).
	Both two polymatroids of other nodes are defined by
	the capacity. 
	Let $\mathcal N$ denote the resulting 
	polymatroidal network.

	Compute a maximum flow $\varphi$ 
	in $\mathcal N$. The transaction vector $w$ of the 
	original two-sided market is 
	obtained as 
	$w:=\varphi|_E$. 
	 By the 
	 construction, $w$ satisfies the polymatroid constraint. 
	The payment vector $p$  
	is naturally given by $p:=\pi$, which satisfies the budget 
	constraints.
	Also we choose $r$ as an arbitrary vector 
	that satisfies
	\begin{align}
	\label{irs_1}
	r_j - p_{n+j} + \rho_j y_{n+j} & \geq \rho_j f_j(E_j) \quad (j \in M),  \\
		\label{budgetbalance_1}
		\sum_{j\in M}r_j&=\sum_{i\in N}p_i.
	\end{align}
	It turns out that the first condition (\ref{irs_1}) corresponds to (IRs), 
	and the second condition (\ref{budgetbalance_1}) 
	corresponds to~(SBB).	
	\begin{lemma}
	\label{r_nonempty}
	There exists a vector $r$ satisfying $(\ref{irs_1})$ and $(\ref{budgetbalance_1})$.
	\end{lemma}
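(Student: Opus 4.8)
The plan is to reduce the claim to a single scalar inequality about the output of Algorithm~\ref{algo1}, and then to prove that inequality by exploiting the role of the virtual buyers as reserved‑price floors. First I would observe that, since~(\ref{irs_1}) constrains each $r_j$ only from below while~(\ref{budgetbalance_1}) only fixes $\sum_{j\in M}r_j$, a vector $r$ satisfying both exists if and only if the sum of the lower bounds does not exceed the prescribed total, i.e.\ $\sum_{j\in M}(\rho_j f_j(E_j)+p_{n+j}-\rho_j y_{n+j})\le\sum_{i\in N}p_i$: if this holds, set $r_j:=\rho_j f_j(E_j)+p_{n+j}-\rho_j y_{n+j}$ for all $j$ and add the nonnegative surplus $\sum_{i\in N}p_i-\sum_{j\in M}(\rho_j f_j(E_j)+p_{n+j}-\rho_j y_{n+j})$ to one coordinate; if it fails, then~(\ref{budgetbalance_1}) forces $\sum_{j\in M}r_j>\sum_{i\in N}p_i$, a contradiction. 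Using $p=\pi$ and $\sum_{i\in N}p_i=\sum_{i=1}^{n}p_i+\sum_{j\in M}p_{n+j}$, the terms $p_{n+j}$ cancel and this becomes
\begin{equation*}
\sum_{j\in M}\rho_j\bigl(f_j(E_j)-y_{n+j}\bigr)\ \le\ \sum_{i=1}^{n}p_i .\tag{$\star$}
\end{equation*}
Here $f_j(E_j)-y_{n+j}\ge 0$, because $y_{n+j}$ is at most the maximum of the $(n+j)$-th coordinate over $\tilde P$, which equals $f_j(E_j)$ by~(\ref{f_j}); so the $r$ constructed above is automatically nonnegative, as~(IRs) wants.

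To prove~($\star$), write $g_j:=f_j(E_j)-y_{n+j}\ (\ge 0)$ and let $\tilde f$ be the submodular function of $\tilde P$. By Theorem~\ref{goel} the output $y$ of Algorithm~\ref{algo1} is a base of $\tilde P$, hence $\sum_{i\in N}y_i=\tilde f(N)=\sum_{j\in M}f_j(E_j)$ and therefore $\sum_{i=1}^{n}y_i=\sum_{j\in M}g_j$; thus the two sides of~($\star$) weigh the same total amount of goods, one by the sellers' reserved prices and the other by the per-unit prices actually paid by the real buyers, and it suffices to show the latter dominate. This is exactly what the virtual buyers enforce. Each virtual buyer $n+j$ has $B_{n+j}=\infty$ and bid $v'_{n+j}=\rho_j$, so its demand $d_{n+j}$ stays $\infty$ until its price clock reaches $\rho_j$, and by~(\ref{f_j}) it can on its own absorb all of $f_j(E_j)$. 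Since Algorithm~\ref{algo1} clinches only increases that do not affect other buyers, and since $\tilde P$ is the network induction of $P=\bigoplus_{j\in M}P_j$ through $\mathcal N$ (Figure~\ref{network_1}), I would argue that, throughout the run, the virtual buyers that are still active jointly retain the ability to each absorb the full stock $f_j(E_j)$ of their seller. Writing $y^{(k)}_i,c^{(k)}_i,\zeta^{(k)}_i$ for the values of these variables at the start of iteration $k$ and evaluating $z(N)\le\tilde f(N)$ at the point $z\in\tilde P_{y^{(k)},d^{(k)}}$ that witnesses this joint absorption then yields the monotone bound
\begin{equation*}
\sum_{i=1}^{n}y^{(k)}_i\ \le\ \sum_{j\in M:\ c^{(k)}_{n+j}\ge\rho_j}g_j
\end{equation*}
at every iteration $k$ (the sum on the right runs over the virtual buyers that have already dropped out).

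Finally I would derive~($\star$) by Abel summation. Put $c^{(k)}_{\min}:=\min_{i\in N}c^{(k)}_i$ (nondecreasing in $k$) and $Y^{(k)}:=\sum_{i=1}^{n}y^{(k)}_i$. Since the real buyers $1,\dots,n$ precede the virtual buyers $n+1,\dots,n+m$ in the cyclic update order of Algorithm~\ref{algo1}, every real price clock is always at least every virtual price clock; with Assumption~\ref{assump_varepsilon1} (so that $c_{n+j}$ attains the multiple-of-$\varepsilon$ value $\rho_j$ exactly) this makes the right-hand side of the monotone bound a nondecreasing step function $G^{(k)}$ that increases by $g_j$ only at iterations where $c^{(k)}_{\min}=\rho_j$, with $G$ ending at $\sum_{j\in M}g_j=Y^{(\mathrm{last})}$. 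Now $p=\pi$ and line~4 of Algorithm~\ref{algo1} give $\sum_{i=1}^{n}p_i=\sum_k\sum_{i=1}^{n}c^{(k)}_i\zeta^{(k)}_i\ge\sum_k c^{(k)}_{\min}\bigl(Y^{(k)}-Y^{(k-1)}\bigr)$, and Abel's identity combined with $Y^{(k)}\le G^{(k)}$ (the monotone bound), $Y^{(\mathrm{last})}=G^{(\mathrm{last})}$, and the monotonicity of $c^{(k)}_{\min}$ gives $\sum_k c^{(k)}_{\min}(Y^{(k)}-Y^{(k-1)})\ge\sum_k c^{(k)}_{\min}(G^{(k)}-G^{(k-1)})=\sum_{j\in M}\rho_j g_j$. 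This establishes~($\star$), hence the lemma.

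The step I expect to be the crux is the monotone bound, i.e.\ the assertion that the active virtual buyers always jointly retain the ability to absorb their sellers' full stocks. Deriving this from the maximality rule for the clinched vector ($\zeta_i$ chosen maximal with $\tilde P^{i}_{y,d}(\zeta_i)=\tilde P^{i}_{y,d}(0)$) requires manipulating the remnant supply polytopes $\tilde P_{y,d}$ and $\tilde P^{i}_{y,d}$ and using the description of $\tilde P$ as the network induction of $\bigoplus_{j\in M}P_j$; the polymatroid-exchange arguments needed are of the same flavour as those used by Goel et al.~\cite{GMP2015} to prove Theorem~\ref{goel}, and I expect the bookkeeping --- maintaining this invariant through every iteration, including iterations in which an active virtual buyer is itself the buyer that clinches or the one that drops out --- to be the delicate part.
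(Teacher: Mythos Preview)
Your proposal takes a route entirely different from the paper's. The paper proves Lemma~\ref{r_nonempty} constructively by running mechanism~2 (Algorithm~\ref{algo2}) on the same input, taking its revenue vector $r$, and invoking Theorem~\ref{relation}: since Theorem~\ref{relation} makes the payment vectors and the virtual buyers' transactions identical in the two mechanisms, (SBB) for mechanism~2 (Lemma~\ref{lem:SBB_Algo2}) yields~(\ref{budgetbalance_1}) and (IRs) for mechanism~2 (Proposition~\ref{IRs}) yields~(\ref{irs_1}). You instead avoid mechanism~2 altogether, reduce the lemma to the scalar inequality~($\star$), and attack~($\star$) directly via an Abel-summation argument backed by the ``monotone bound.'' This is a genuinely more elementary strategy and, if completed, would make Lemma~\ref{r_nonempty} logically independent of the whole mechanism-2 analysis. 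The price is that the crux you yourself flag---the monotone bound---is essentially the one-sided analogue of Lemma~\ref{transaction}; the paper proves that lemma for mechanism~2 and then transfers it via Theorem~\ref{relation}, so you would have to reprove comparable content from scratch in the reduced market.

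Two technical points. First, your choice $c_{\min}^{(k)}:=\min_{i\in N}c_i^{(k)}$ is over \emph{all} buyers and therefore equals the slowest virtual clock $c_{n+m}^{(k)}$; when $G$ jumps because $c_{n+j}$ reaches $\rho_j$ for some $j<m$, one still has $c_{n+m}^{(k)}=\rho_j-\varepsilon$, so the claimed identity $\sum_k c_{\min}^{(k)}\Delta G^{(k)}=\sum_j\rho_j g_j$ fails (and in the wrong direction for your chain). The fix is easy: take the minimum over the real buyers only, i.e.\ use $c_n^{(k)}$, which satisfies $c_n^{(k)}\ge c_{n+j}^{(k)}$ for every $j$; at the first iteration with $d_{n+j}=0$ one has $c_n\ge\rho_j$, and the Abel chain then gives $\sum_{i=1}^n p_i\ge\sum_k c_n^{(k)}\Delta Y^{(k)}\ge\sum_k c_n^{(k)}\Delta G^{(k)}\ge\sum_j\rho_j g_j$. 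Second, and more substantively, the invariant you call ``joint absorption'' (that the still-active virtual buyers can simultaneously be pushed to $f_j(E_j)$ inside $\tilde P_{y^{(k)},d^{(k)}}$) must survive not only individual clinches but the \emph{simultaneous} clinch of all buyers in line~3 of Algorithm~\ref{algo1} and the demand drops that follow; you acknowledge this as the delicate part but do not carry it out. That is the real gap: your reduction to~($\star$) is clean and the summation idea is sound, but a complete proof along your lines would need an inductive maintenance of joint absorption of roughly the same weight as the paper's proof of Lemma~\ref{transaction}.
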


	We give in the next section a constructive proof of this lemma via mechanism 2.
	 
	After the procedure above, we obtain 
	the allocation $(w,p,r)$ including 
	virtual buyers. 
	We transform this allocation 
	into the allocation in the original market as follows:
	\begin{itemize}
		\item 	For each seller $j$, we regard $y_{n+j} = w_{(n+j)j}$ 
		as the unsold goods of $j$, 
		and replace the revenue $r_j$ by $r_j - p_{n+j}$.
	\end{itemize} 
    By this update, (\ref{irs_1}) and (\ref{budgetbalance_1}) are equal to (IRs) and (SBB), respectively.	
	Then we obtain the following theorem: 
	\begin{theorem}
	\label{mechanism_1}
	Mechanism 1 satisfies all the properties in Theorem \ref{main}.
	\end{theorem}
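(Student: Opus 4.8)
The plan is to verify each of the five properties (ICb), (IRb), (IRs), (SBB), (PO) for Mechanism 1, leaning on the results already established for the reduced one-sided market and on the polymatroidal network flow used in the lifting step.

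First I would dispatch (ICb) and (IRb). The key observation is that the mechanism's effect on a buyer $i\in N$ (a real buyer, not a virtual one) is entirely mediated by the reduced one-sided market: the pair $(y,\pi)$ is produced by Algorithm~\ref{algo1}, and the payment is $p_i:=\pi_i$ while the total goods received is $w(E_i)$. The lifting via maximum flow $\varphi$ in $\mathcal N$ has capacity exactly $y_i$ on edge $i'i$, so once we know every maximum flow saturates these source-side capacities, $w(E_i)=y_i$ for all $i\in N$. This reduces (ICb) and (IRb) for the two-sided market to the same statements for Algorithm~\ref{algo1}, which are exactly Theorem~\ref{goel}. So the main sub-task here is to argue that the max flow in $\mathcal N$ has value $y(N)=\sum_i y_i$, equivalently that $w(E_i)=y_i$; this follows because $y\in\tilde P$ means there exists $w^\ast\in P$ with $w^\ast(E_i)=y_i$, and such a $w^\ast$ together with the obvious routing gives a flow of value $y(N)$, while the cut at the source edges $si'$ has capacity $y(N)$ by construction — hence it is maximum and every maximum flow saturates all $si'$ and all $i'i$.

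Next, (SBB) is immediate from the choice of $r$: condition~(\ref{budgetbalance_1}) says $\sum_j r_j=\sum_i p_i$, and after the final update replacing $r_j$ by $r_j-p_{n+j}$ (and removing the virtual buyers from the payment side), the identity becomes $\sum_{j\in M} (r_j-p_{n+j}) = \sum_{i\in N,\,i\le n} p_i$, which is exactly $\sum_{j\in M} r_j^{\mathrm{new}} = \sum_{i\in N} p_i^{\mathrm{orig}}$ as noted in the text. For (IRs), by~(\ref{equivalent_to_IRs}) I need $r_j^{\mathrm{new}}\ge \rho_j w(E_j)$ in the original market, where now $w(E_j)$ counts only the real edges, i.e.\ $w(E_j)=(\text{old }w(E_j)) - w_{(n+j)j} = (\text{old }w(E_j)) - y_{n+j}$. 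Rewriting~(\ref{irs_1}) as $r_j - p_{n+j} \ge \rho_j(f_j(E_j)-y_{n+j})$ and using $f_j(E_j)\ge$ (old $w(E_j)$) since $w|_{E_j}\in P_j$, we get $r_j^{\mathrm{new}} = r_j-p_{n+j}\ge \rho_j(f_j(E_j)-y_{n+j})\ge\rho_j(\text{old }w(E_j)-y_{n+j})=\rho_j w(E_j)$, as required. Existence of such an $r$ is Lemma~\ref{r_nonempty}.

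Finally, (PO) is the main obstacle. The strategy is to transfer Pareto optimality from the reduced one-sided market (where it is part of the Goel et al.\ guarantee) to the two-sided market. Suppose for contradiction there is an allocation $\mathcal A=(w,p,r)$ with $\sum_i p_i\ge\sum_j r_j$ that weakly dominates $\mathcal M(\mathcal I^\ast)$ for all participants with one strict inequality. I would aggregate $\mathcal A$ into a one-sided allocation $(\hat y,\hat\pi)$ by setting $\hat y_i := w(E_i)$ for real buyers, $\hat y_{n+j}:=f_j(E_j)-w(E_j)$ for virtual buyers, and $\hat\pi_i:=p_i$, $\hat\pi_{n+j}:=\rho_j\,\hat y_{n+j}$ (pricing the unsold goods at the reserve). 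One checks $\hat y\in\tilde P$ using the extended $f_j$ in~(\ref{f_j}) — precisely the point of that extension — and that the budget-balance-type inequality is preserved. The utility of seller $j$ in~(\ref{utility_seller}) equals $r_j+\rho_j(f_j(E_j)-w(E_j))$, which under (IRs)-type accounting matches $\hat\pi$-revenue plus virtual buyer utility; the virtual buyer's utility under truthful bid $\rho_j$ is $\rho_j\hat y_{n+j}-\hat\pi_{n+j}=0$, consistent with the construction. Then domination of $\mathcal A$ translates into domination of the aggregated allocation over the Goel et al.\ output in the reduced market, contradicting the pareto optimality half of that mechanism. The delicate points I expect to fight with are: (i) handling the $-\infty$ utilities / budget violations cleanly, (ii) making sure the inequality $\sum p_i\ge\sum r_j$ survives aggregation with the virtual-buyer payments inserted, and (iii) confirming that $\hat y_{n+j}\ge 0$ and the $f_j$-extension indeed certifies $\hat y\in\tilde P$ — all of which are the genuine content of the proof and where I would spend the bulk of the writing.
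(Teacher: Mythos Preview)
Your treatment of (ICb), (IRb), (SBB) and (IRs) is essentially the paper's proof: you show every maximum flow in $\mathcal N$ saturates the edges $i'i$, whence $w(E_i)=y_i$, and then invoke Theorem~\ref{goel} and the defining conditions (\ref{irs_1})--(\ref{budgetbalance_1}).  (For (IRs) you weaken $f_j(E_j)=w(E_j)$ to an inequality, but since $y$ lies in the base polytope of $\tilde P$ the equality actually holds; this does not affect correctness.)

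For (PO) your route diverges from the paper.  The paper does \emph{not} reduce to the one-sided PO theorem of Goel et al.\ as a black box; instead it says the argument is ``precisely the same as that for mechanism~2'' and gives, in Section~\ref{subsec:pareto}, a direct proof that takes a hypothetical dominating allocation, uses the chain of tight sets and dropping prices of Proposition~\ref{dropping} to telescope the buyer inequalities together with the seller inequalities, and arrives at $\sum p'_i\le\sum r'_j$, forcing all inequalities to equalities.

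Your black-box reduction has a concrete gap.  With your choice $\hat\pi_{n+j}:=\rho_j\,\hat y_{n+j}$ the virtual buyer's utility in the aggregated allocation is $0$.  But in the mechanism's one-sided output the virtual buyer $n+j$ typically has strictly positive utility $\rho_j y_{n+j}-\pi_{n+j}$, because it clinches at price clocks $c_{n+j}<\rho_j$ before it drops.  Hence your aggregated allocation does \emph{not} Pareto-dominate the mechanism's one-sided output on the virtual-buyer coordinates, and you cannot invoke one-sided PO.  The fix is to set $\hat\pi_{n+j}:=\pi_{n+j}+\rho_j(\hat y_{n+j}-y_{n+j})$ so that virtual-buyer utilities match exactly; one then has to check that the auctioneer's total payment does not decrease, and unwinding that inequality (using the seller-domination hypothesis, (SBB), and $\sum p'_i\ge\sum r'_j$) is precisely the summation the paper carries out directly via the tight-set chain.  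So once repaired, your reduction collapses into the paper's argument rather than bypassing it.
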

	\begin{proof} %
	First we prove  
	$w(E_i)=y_i$ for each $i\in N$.
	Since $y$ is obtained by Algorithm \ref{algo1}, 
	it holds $y\in \tilde{P}$ and thus there exists 
	$\tilde{w}:=\{\tilde{w}_{ij}\}_{ij\in E}$ such that 
	$\tilde{w}\in P$ and $\tilde{w}(E_i)=y_i$. 
	Then $\tilde{w}|_{E_j}\in P_j$ for each $j\in M$, and 
	there exists a flow $\tilde{\varphi}$ such that 
	$\tilde{\varphi}|_{E}=\tilde{w}$. 
	By the flow-conservation law, 
	it holds $\tilde{\varphi}({i'i})=
	\tilde{w}(E_i)=y_i$ for each $i\in N$. 
	Since it holds $c(i'i)=y_i$ for each $i\in N$,
	$\tilde{\varphi}$ is a maximum flow in $\mathcal N$.
	Therefore we obtain 
	$\tilde{\varphi}({i'i})=y_i$ for each $i\in N$
	for any maximum flow $\tilde{\varphi}$ in $\mathcal N$.
	Thus the total amount of transaction of buyer $i$
	is certainly $y_i$.
	By this fact and the definition of $p$, 
	(ICb) and (IRb) are obtained immediately from Theorem \ref{goel}.
	 
	 As we mentioned, by the above update of 
	 the goods and revenue for sellers,
	 (IRs) is equivalent to (\ref{irs_1}).
	 Also (SBB) is nothing but (\ref{budgetbalance_1}). 
		
	The proof of (PO) is precisely the same as
	that for mechanism 2 (Theorem~\ref{pareto}), which is given in Section~\ref{sec:proof}.
	\end{proof}

\subsection{Second Mechanism}

	Next we describe mechanism 2,
	which is an extension of the polyhedral clinching 
	auction to two-sided markets.
	First we make the same preprocessing 
	as in mechanism 1 by adding virtual buyers. 

	After the preprocessing, we apply Algorithm 2, which is a 
	two-sided version of the polyhedral clinching auction.
	\begin{algorithm}[htb]
	\caption{Polyhedral\ Clinching\ Auction for Two-Sided Market} 
	\label{algo2}                              
	\begin{algorithmic}[1]
	  \STATE $p_i:=0,\ c_i:=0,\ d_i:=\infty\ 
	  (i\in N)\ {\rm and}\ l:=1$.\\
	  \STATE $r_j:=0\ (j\in M)\ {\rm and} \ w_{ij}:=0\ (ij\in E)$.
	  \WHILE{$d_i\neq 0$ for some $i\in N$}
	  \FOR{$i=1,2,\ldots,n+m$} 
	  \STATE Clinch a maximal increase 
	  $(\xi_{ij})_{ij\in E_i}$ \underline{{\it not affecting other buyers}}.
	  \STATE 
	   $w_{ij}:=w_{ij}+\xi_{ij}\quad (ij\in E_i)$.
	  \STATE 
	  $\displaystyle p_i:=p_i+c_i \xi(E_i).$
	  \STATE $ d_i:=
	 \begin{cases}
	(B_i-p_i)/c_i\quad {\rm if}\  c_i<v'_i,\\
	0 \quad\qquad\qquad\ \  {\rm otherwise}. 
	\end{cases}$
	  \ENDFOR
	  \STATE $\displaystyle r_j:=r_j+\sum_{ij\in E_j} c_i \xi_{ij}\quad (j\in M).$
	  \STATE $c_{l}:=c_{l}+\varepsilon$.
	  \STATE $l:=l+1\ {\rm mod\ } m+n$.
	  \STATE $ d_l:=
	 \begin{cases}
	(B_l-p_l)/c_l\quad {\rm if}\  c_l<v'_l,\\
	0 \quad\qquad\qquad\ \  {\rm otherwise}.
	\end{cases}$
	 \ENDWHILE
	\end{algorithmic}
	\end{algorithm}
	The variables $c_i,d_i,l$ and the parameter $\varepsilon$ 
	are the same as in Algorithm \ref{algo1}. 
	Suppose that $\varepsilon$ satisfies 
	Assumption \ref{assump_varepsilon1}.
	The meaning of variables $\xi_{ij},w_{ij},p_i,r_j$ 
	is explained as follows: 
	\begin{itemize}
	\item $\xi_{ij}$ is the increase of 
	transactions between buyer $i$ and seller $j$ in the current
	iteration.
	\item $w_{ij}$ is the total amount of 
	transaction between buyer $i$ and seller $j$.
	Transaction vector $(w_{ij})_{ij\in E}$ must 
	belong to the polymatroid $P$ (defined in (\ref{eqn:P})).
	\item $p_{i}$ is the payment of buyer $i$.
	\item $r_{j}$ is the revenue of seller $j$. 
	\end{itemize}
	
	Algorithm \ref{algo2} terminates 
	when $d_i=0$ for all 
	buyers $i$, and outputs $(w,p,r)$
	at this moment.
	The allocation $(w,p,r)$ obtained by Algorithm \ref{algo2} 
	includes that for virtual buyers. 
	Applying the same transformation as in 
	mechanism 1, we obtain the allocation 
	$(w,p,r)$ for the original market.
		
	Again we explain the details of ``not affecting
	other buyers" in line 5.
	For transaction vector $w$ and 
	demand vector $d:=(d_i)_{i\in N}$,
	define {\it the remnant supply polytope} $P_{w,d}$ 
	by 
	\begin{equation}
	\label{remnant_supply}
	P_{w,d}:=\{x\in\mathbb R^{E}_{+}\mid 
	w+x\in P,\ x(E_i) \leq d_i\ (i\in N)\}.
	\end{equation}
	In addition, for 
	$\xi:=(\xi_{ij})_{ij\in E_i}\in \mathbb R^{E_i}_{+}$,
	define {\it the remnant supply polytope $P^{i}_{w,d}(\xi)$ of 
	remaining buyers  $N-i$} by
	\begin{equation}
	\label{remnant_supply_i}
	P^{i}_{w,d}(\xi):=\{u\in
	\mathbb R^{N-i}_{+}\mid  
	\exists x\in P_{w,d},\ x|_{E_i}=\xi,\  x(E_{k})=u_k\ (k\in N-i )\}.
	\end{equation}
	The first polytope $P_{w,d}\subseteq P$ 
	represents the feasible 
	increases of transactions under $w$ and $d$,
	and the second polytope $P^{i}_{w,d}(\xi)
	\subseteq \mathbb R^{N-i}_{+}$
	represents the possible  
	amounts of goods that buyers except $i$ can 
	get in the future,
	provided $i$ 
	got $\xi\in \mathbb R^{E_i}_{+}$ in this iteration.
	Then the condition for clinching $\xi:=(\xi_{ij})_{ij\in E_i}$
	not to affect other buyers
	is naturally written as $P^i_{w,d}(\xi)=P^i_{w,d}(0)$.
	This motivates to define the polytope $P^{i}_{w,d}$ by 
	\begin{equation}\label{eqn:clinching_polytope}
	P^{i}_{w,d}:=\{\xi\in \mathbb R_{+}^{E_i}\mid P^i_{w,d}(\xi)=P^i_{w,d}(0)\},
	\end{equation}
	 which we call the 
	{\it clinching polytope} of buyer $i$.
	A clinching vector $(\xi_{ij})_{ij\in E_i}$ in line 5 is 
	chosen as a maximal vector in the clinching 
	polytope $P_{w,d}^{i}$.
	The following theorem says that
	this clinching step can be done in polynomial time. 
	\begin{theorem}
	\label{greedy}
	The clinching polytope $P^{i}_{w,d}$ is a polymatroid, 
	and the value of the 
	corresponding submodular function can be computed 
	in polynomial time.
	In particular,  by the greedy algorithm, a maximal vector of $P^{i}_{w,d}$ 
	can be computed in 
	polynomial time.
	\end{theorem}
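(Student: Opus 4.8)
The plan is to show that the clinching polytope $P^i_{w,d}$ is exactly the polymatroid of an explicitly describable submodular function obtained by ``contracting'' the remnant supply polytope in a suitable sense. First I would fix $w$ and $d$ and write $Q := P_{w,d}$, which is itself a polymatroid on $E$ intersected with the box constraints $x(E_k)\le d_k$; more precisely, by definition $Q$ is the polytope of feasible increments. For a vector $\xi\in\mathbb R^{E_i}_+$ to lie in $P^i_{w,d}$ we need $P^i_{w,d}(\xi) = P^i_{w,d}(0)$, i.e. allocating $\xi$ to buyer $i$ does not shrink the set of achievable future aggregate allocations to the other buyers. Since $P^i_{w,d}(\xi)$ is the image of the face $\{x\in Q: x|_{E_i}=\xi\}$ of $Q$ under the linear map $x\mapsto (x(E_k))_{k\in N-i}$, and these images are monotone nonincreasing in $\xi$ coordinatewise (more $\xi$ leaves less room), the condition $P^i_{w,d}(\xi)=P^i_{w,d}(0)$ is equivalent to $P^i_{w,d}(0)\subseteq P^i_{w,d}(\xi)$.

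The key step is to characterize, for each $F\subseteq E_i$, the maximum of $\xi(F)$ over $\xi\in P^i_{w,d}$, and to show this defines a submodular function $g_i$ with $P^i_{w,d}=P(g_i)$. I expect $g_i(F)$ to be expressible as the optimal value of a polymatroidal max-flow / polymatroid-intersection type problem: one routes flow into buyer $i$'s edges in $F$ while simultaneously guaranteeing that every ``target'' vector in $P^i_{w,d}(0)$ remains routable through the sellers, and by Theorem~\ref{max-flow-min-cut} (Lawler--Martel) this max equals a min-cut expression that is manifestly a submodular function of $F$. Concretely, I would build a polymatroidal network whose sellers carry the polymatroids $P_j$ (suitably truncated by the already-committed $w$ and by the residual demands $d_k$ of the other buyers, encoding a fixed base point of $P^i_{w,d}(0)$), with a source feeding the edges of $E_i$; then $g_i(F)$ is the max flow saturating $F$. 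Submodularity of $g_i$ then follows from the general fact that such min-cut values are submodular, and the exchange/monotonicity argument of the previous paragraph shows $\xi\in P^i_{w,d}$ iff $\xi(F)\le g_i(F)$ for all $F\subseteq E_i$, i.e. $P^i_{w,d}=P(g_i)$. Polynomial-time evaluation of $g_i(F)$ is then immediate from the existence of polymatroidal network flow algorithms cited after Theorem~\ref{max-flow-min-cut}, and the last sentence follows from the greedy algorithm for polymatroids recalled in the preliminaries.

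The main obstacle I anticipate is the correct bookkeeping of what ``$P^i_{w,d}(0)$ remains achievable'' should mean when it is a full-dimensional polytope rather than a single point: one must argue that it suffices to preserve the achievability of a single well-chosen base vector of $P^i_{w,d}(0)$ (e.g. a maximal one, or the vector $(d_k)$ truncated appropriately), rather than the whole set, so that $g_i$ is defined by one flow problem and not a family of them. I would handle this by exploiting that $P^i_{w,d}(0)$ is a polymatroid (image of a polymatroid under the aggregation map) together with the monotonicity of the images $P^i_{w,d}(\xi)$: preserving the unique maximal base of $P^i_{w,d}(0)$ under the new constraint forces $P^i_{w,d}(0)\subseteq P^i_{w,d}(\xi)$, because a polymatroid is determined by (the down-closure of) its base polytope. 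A secondary technical point is verifying that the truncations of the $P_j$ by $w$ and $d$ are again polymatroids so that Lawler--Martel applies verbatim; this is standard (restriction and contraction of polymatroids), but worth stating carefully.
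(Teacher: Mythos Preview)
Your overall strategy---interpret everything through polymatroidal network flow and read off the submodular function from a min-cut formula---is the same as the paper's. The gap is precisely in the reduction you flag as the ``main obstacle'', and the fix you propose does not work.

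You want to replace ``preserve the polytope $P^i_{w,d}(0)$'' by ``preserve a single well-chosen base $u^*$ of $P^i_{w,d}(0)$'', arguing that a polymatroid is the down-closure of its base polytope. That fact is true but insufficient: on ground set $\{1,2\}$ the polymatroids with rank functions $(f(\{1\}),f(\{2\}),f(\{1,2\}))=(1,1,1)$ and $(1,0,1)$ share the base $(1,0)$, the second is contained in the first, and yet they differ. So $u^*\in P^i_{w,d}(\xi)$ together with $P^i_{w,d}(\xi)\subseteq P^i_{w,d}(0)$ does not force equality, and your single-flow-problem definition of $g_i$ would depend on the choice of $u^*$. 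What the paper does instead is compute the monotone submodular function $h^i_\xi$ of $P^i_{w,d}(\xi)$ explicitly (Theorem~\ref{h^i_xi}(i)), so that $P^i_{w,d}(\xi)=P^i_{w,d}(0)$ becomes the \emph{family} of inequalities
\[
\xi(F)\ \le\ f_{w,d}(E_X\cup F)-f_{w,d}(E_X)\qquad(X\subseteq N-i,\ F\subseteq E_i).
\]
The key lemma (Lemma~\ref{submodularity}(i)) is that $X\mapsto f_{w,d}(E_X\setminus(E_i\setminus F))$ is submodular on $N$; by diminishing returns the right-hand side is minimized at $X=N-i$, collapsing the whole family to the single constraint $\xi(F)\le f_{w,d}(E_{N-i}\cup F)-f_{w,d}(E_{N-i})=:h^i_{w,d}(F)$. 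This collapse is a structural property of this particular flow function, not a general polymatroid fact, and it is exactly what replaces your ``single base point'' shortcut.

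There is a second gap. You assert that submodularity of $g_i$ will follow from ``the general fact that min-cut values are submodular'', but the paper explicitly notes that $f_{w,d}$ is \emph{not} submodular on all of $E$. Hence submodularity of $F\mapsto f_{w,d}(E_{N-i}\cup F)$ on $E_i$ needs its own argument: the paper gives a direct case analysis on whether the minimizing cut in the formula~(\ref{f_wd}) separates buyer $i$ from the source (Lemma~\ref{submodularity}(ii)). This is where the demand caps $d_k$ and the seller-side polymatroid constraints interact, and it does not drop out of Lawler--Martel automatically.
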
	  
	This implies that our mechanism 
	is implementable in practice. 
	The proof of Theorem \ref{greedy} is given in
	Section~\ref{subsec:greedy}. 
		
	We analyze mechanism 2.
	First we show (SBB) for our mechanism.
	In line 10 of Algorithm \ref{algo2}, 
	the total payment of buyers (including virtual buyers) is 
	directly given to sellers in each iteration. 
	Also, in the transformation of the allocation,  
	both the total payment and the total revenue are subtracted by
    the total payment of virtual buyers.
	Thus we have: 
	\begin{lemma}\label{lem:SBB_Algo2}
	Mechanism 2 satisfies (SBB).	
	\end{lemma}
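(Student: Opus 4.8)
The plan is to establish a revenue--payment conservation invariant that Algorithm \ref{algo2} maintains iteration by iteration, and then to check that the final transformation of the allocation preserves it. Throughout, let $N$ temporarily denote the augmented buyer set, i.e., the original buyers $1,\ldots,n$ together with the virtual buyers $n+1,\ldots,n+m$. I claim that after line~10 is executed in any iteration of the while-loop, the identity $\sum_{i\in N}p_i=\sum_{j\in M}r_j$ holds.

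First I would verify the base case: by lines~1--2 all $p_i$ and all $r_j$ are initialized to $0$, so the identity holds before the first iteration. For the inductive step, fix one pass through the while-loop. Inside the for-loop, line~7 increments $p_i$ by $c_i\xi(E_i)=c_i\sum_{ij\in E_i}\xi_{ij}$ for each buyer $i$, so the total increase of $\sum_{i\in N}p_i$ during this iteration equals $\sum_{i\in N}c_i\sum_{ij\in E_i}\xi_{ij}=\sum_{ij\in E}c_i\xi_{ij}$. On the sellers' side, line~10 increments $r_j$ by $\sum_{ij\in E_j}c_i\xi_{ij}$, so the total increase of $\sum_{j\in M}r_j$ equals $\sum_{j\in M}\sum_{ij\in E_j}c_i\xi_{ij}=\sum_{ij\in E}c_i\xi_{ij}$. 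Since each edge $ij\in E$ belongs to exactly one $E_i$ and exactly one $E_j$, the two totals coincide, and the invariant is restored (lines~11--13 update only the clocks and demands and leave $p,r$ untouched). Hence the identity holds at termination.

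It then remains to handle the transformation to the original market, where for each seller $j$ the revenue $r_j$ is replaced by $r_j-p_{n+j}$ and the virtual buyers are dropped. After this update the total revenue is $\sum_{j\in M}(r_j-p_{n+j})=\sum_{j\in M}r_j-\sum_{j\in M}p_{n+j}$, while the total payment over the original buyers is $\sum_{i=1}^{n}p_i=\sum_{i\in N}p_i-\sum_{j\in M}p_{n+j}$. Substituting the invariant $\sum_{i\in N}p_i=\sum_{j\in M}r_j$, both expressions equal $\sum_{j\in M}r_j-\sum_{j\in M}p_{n+j}$, which is precisely (SBB) for the original market.

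I do not expect a genuine obstacle: the statement is a bookkeeping identity, and the only point requiring care is to keep the virtual buyers' roles straight --- in particular that each virtual buyer $n+j$ is incident to a single edge $(n+j)j\in E_j$, so the per-edge cancellation in the inductive step and the subtraction of $p_{n+j}$ in the transformation are mutually consistent.
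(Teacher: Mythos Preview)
Your proof is correct and follows essentially the same approach as the paper's: it observes that line~10 distributes the full payment increments to sellers in each iteration (so $\sum_{i\in N}p_i=\sum_{j\in M}r_j$ is maintained for the augmented buyer set), and that the final transformation subtracts the virtual buyers' payments from both sides. You have simply made the bookkeeping explicit where the paper leaves it as a one-sentence remark.
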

	For the individual rationality of sellers (IRs), 
	we will prove in Section~\ref{subsec:greedy} that
	each seller $j$ and 
	nonvirtual buyer $i$ transact 
	only when the price clock $c_i$ 
	is at least valuation $\rho_j$ and virtual buyer $n+j$ 
	vanishes, i.e., $d_{n+j}=0$.
	Consequently the goods of $j$ are sold at price at least $\rho_j$, and we obtain (IRs); see (\ref{equivalent_to_IRs}).
	\begin{proposition}
	\label{IRs}
	Mechanism 2 satisfies (IRs). 
	\end{proposition}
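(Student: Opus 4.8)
The plan is to establish the invariant claimed in the text: throughout Algorithm \ref{algo2}, for every seller $j\in M$ and every nonvirtual buyer $i$ with $ij\in E_j$, any positive clinching increment $\xi_{ij}>0$ occurs only in an iteration in which (a) the price clock satisfies $c_i\geq\rho_j$ and (b) the virtual buyer $n+j$ is already extinct, i.e.\ $d_{n+j}=0$. Granting this invariant, the revenue $r_j=\sum_{ij\in E_j}\sum_{\text{iters}}c_i\xi_{ij}$ accumulated in line 10 satisfies $r_j\geq\rho_j\sum_{ij\in E_j}w_{ij}=\rho_j w(E_j)$, since each transacted unit between $j$ and a nonvirtual buyer is paid for at a clock value at least $\rho_j$, and the transaction $w_{(n+j)j}$ with the virtual buyer (which in the transformation is reinterpreted as unsold stock, with $r_j$ decreased by $p_{n+j}=0\cdot$(clock at clinch) — here one checks $p_{n+j}$ contributes nothing beyond what is subtracted) does not decrease this bound. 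By \eqref{equivalent_to_IRs}, this is exactly (IRs). So the real work is proving the invariant.

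First I would handle (b): the virtual buyer $n+j$ has bid $v'_{n+j}=\rho_j$, so by line 8 its demand $d_{n+j}$ becomes $0$ precisely when its clock $c_{n+j}$ reaches $\rho_j$; before that, $d_{n+j}=\infty$. I then argue that while $d_{n+j}>0$, the clinching polytope $P^i_{w,d}$ of any nonvirtual buyer $i$ linked to $j$ forces $\xi_{ij}=0$. The intuition is that buyer $n+j$ can still absorb arbitrarily much of $j$'s remaining capacity (its demand is $\infty$ and its only neighbor is $j$), so any positive amount clinched by $i$ through $j$ would strictly shrink the set $P^i_{w,d}(\xi)$ of future possibilities for the other buyers — in particular for $n+j$ — violating the condition $P^i_{w,d}(\xi)=P^i_{w,d}(0)$ that defines the clinching polytope in \eqref{eqn:clinching_polytope}. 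Making this precise requires examining how $f_j$ (extended via \eqref{f_j}) interacts with the remnant supply polytope $P_{w,d}$: the key point is that $f_j(E_j+(n+j)j)=f_j(E_j)$, so the edge $(n+j)j$ competes with all of $E_j$ for the same submodular budget, and with $d_{n+j}=\infty$ the coordinate $u_{n+j}$ of $P^i_{w,d}(0)$ ranges up to the full residual $f_j(E_j)-w(E_j)$; clinching $\xi_{ij}>0$ strictly lowers that maximum.

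Next I would handle (a): once $d_{n+j}=0$, the edge $(n+j)j$ is effectively removed (its demand cap is $0$), and the submodular function governing $j$ reverts to the original $f_j$ on $E_j$. Now I invoke the clinching mechanics of the one-sided auction as in Goel et al.: a nonvirtual buyer $i$ clinches a positive amount through $j$ only after $i$'s clock $c_i$ has been raised, and I claim $c_i\geq\rho_j$ at that point. The argument is by tracking the order of clock increments: the clock of $n+j$ reaches $\rho_j$ (extinguishing $n+j$) only after $c_{n+j}$ has been incremented $\rho_j/\varepsilon$ times, and the round-robin index $l$ in line 12 cycles through all buyers; I would show that no nonvirtual buyer $i$ linked to $j$ can clinch through $j$ before its own clock has likewise reached $\rho_j$, because while $c_i<\rho_j$ either $n+j$ is still alive (case (b), no clinching) or — if $i$'s clock somehow lagged — the clinching polytope still blocks $\xi_{ij}$ by the same competition argument, now with $i$ playing a role symmetric to how $n+j$ did.

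The main obstacle I anticipate is the rigorous proof that $d_{n+j}>0$ (or $c_i<\rho_j$) forces $\xi_{ij}=0$ via the clinching polytope — this is a genuinely polymatroidal statement about $P^i_{w,d}(\xi)$ versus $P^i_{w,d}(0)$, and it hinges on the specific extension \eqref{f_j} of $f_j$ and on the direct-sum structure $P=\bigoplus_j P_j$ in \eqref{eqn:P}, which decouples the sellers so that only the polymatroid $P_j$ is relevant to edges through $j$. I would likely isolate this as a lemma: \emph{if $d_k=\infty$ for some buyer $k$ whose unique neighbor is $j$, and $\xi_{ij}>0$ for another neighbor $i$ of $j$, then $P^i_{w,d}(\xi)\subsetneq P^i_{w,d}(0)$}, proved by exhibiting an explicit point of $P^i_{w,d}(0)$ (with large $u_k$) that leaves $P^i_{w,d}(\xi)$. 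Everything else — the accounting in line 10, the behavior of $d_{n+j}$, and the final inequality $r_j\geq\rho_j w(E_j)$ — is routine bookkeeping once this lemma is in hand.
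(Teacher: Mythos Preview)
Your approach is essentially the paper's: the heart of the argument is Lemma~\ref{transaction}, which says that while $c_{n+j}<\rho_j$ (equivalently $d_{n+j}=\infty$) no nonvirtual buyer can clinch a positive amount from seller~$j$. The paper proves this via the machinery built up for Theorem~\ref{greedy}: it invokes Corollary~\ref{clinch_limit}(i) with $X=\{n+j\}$ and $F=\{ij\}$, obtaining $\xi_{ij}\le f_{w,d}(\{(n+j)j,ij\})-f_{w,d}(\{(n+j)j\})$, and then computes both terms from the flow formula~(\ref{f_wd}) together with the extension~(\ref{f_j}) to see that the difference is zero. Your proposed lemma---exhibit a point of $P^i_{w,d}(0)$ with coordinate $u_{n+j}$ equal to the full residual $f_j(E_j)-w(E_j\cup\{(n+j)j\})$, and observe that this point leaves $P^i_{w,d}(\xi)$ once $\xi_{ij}>0$---is a correct and more elementary route to the same conclusion, bypassing the polymatroidal-flow computation entirely.

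On part~(a), you are more explicit than the paper, which simply asserts after Lemma~\ref{transaction} that ``the goods of $j$ are sold at price at least $\rho_j$''. Your ``or'' fallback (``if $i$'s clock somehow lagged \ldots\ the clinching polytope still blocks $\xi_{ij}$ by the same competition argument'') is unnecessary and would not work as stated, since nonvirtual buyers have finite demand and cannot play the role of $n+j$. The clean observation that dispenses with it is purely combinatorial: nonvirtual buyers are numbered $1,\ldots,n$ and virtual buyers $n+1,\ldots,n+m$, so the round-robin increment in line~11 guarantees $c_i\ge c_{n+j}$ for every nonvirtual $i$ at every point of the algorithm. Hence (b) alone already forces $c_i\ge c_{n+j}\ge\rho_j$ whenever $\xi_{ij}>0$, and (a) needs no separate clinching-polytope argument.
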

    The proof is given in Section~\ref{subsec:greedy}.
	Next we consider other properties (ICb),(IRb), 
	and (PO) in Theorem \ref{main}.
	We utilize the following theorem 
	on the relationship between mechanism 1 and mechanism 2.
	\begin{theorem}
	\label{relation}
	Suppose that $\varepsilon$ is the same in both 
	Algorithms.
	At the end of algorithm, $(y,\pi)$ in Algorithm \ref{algo1}  
	is equal to 
	$((w(E_i))_{i\in N},p)$ 
	in Algorithm \ref{algo2}.
	\end{theorem}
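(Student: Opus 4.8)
The plan is an induction on the iterations of the main loops of Algorithms~\ref{algo1} and~\ref{algo2}, showing that the two executions stay synchronized. I will maintain the invariant that at the top of each iteration the price clocks $(c_i)_{i\in N}$, the demands $(d_i)_{i\in N}$, the pointer $l$ and the payments coincide ($\pi_i=p_i$), and the one-sided transaction vector is the aggregate of the two-sided one, $y_i=w(E_i)$ for all $i\in N$. The base case is immediate from the initializations. Since the clock update $c_l:=c_l+\varepsilon$, the pointer update, and the demand update $d_i=(B_i-p_i)/c_i$ (or $0$) are the same functions of the shared data, and the termination test ``$d_i=0$ for all $i$'' is the same, the statement reduces to the claim that, within one iteration, the amount $\zeta_i$ clinched for buyer $i$ in Algorithm~\ref{algo1} equals $\xi(E_i)$ for the clinching vector $\xi\in P^i_{w,d}$ chosen for buyer $i$ in Algorithm~\ref{algo2}. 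Once this is shown, the invariant is restored: $w$ becomes $w+\xi$ while the $i$-th coordinate of $y$ increases by $\zeta_i=\xi(E_i)$, the payments are both incremented by $c_i$ times the clinched total, and the remaining coordinates of $y$ agree with the aggregates of $w$ because $\xi$ is supported on $E_i$. One must also observe, exactly as in the original one-sided analysis, that clinching by one buyer does not change the remnant-supply data of the others (this is the content of ``not affecting other buyers''), so that the buyer-by-buyer processing inside the \texttt{for}-loop of Algorithm~\ref{algo2} yields the same amounts as the simultaneous clinching of Algorithm~\ref{algo1}, and hence the two loops also terminate at the same iteration.

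To compare $\zeta_i$ with $\xi(E_i)$, let $\Phi\colon\mathbb R^E_+\to\mathbb R^N_+$ be the aggregation map $\Phi(x)_i:=x(E_i)$. The key lemma is that, whenever $y=\Phi(w)$, the remnant-supply polytopes correspond under $\Phi$, namely $\Phi(P_{w,d})=\tilde P_{y,d}$. The inclusion ``$\subseteq$'' is immediate from the definitions. For ``$\supseteq$'' one needs that every feasible increase $z$ of the aggregated amounts can be realized edge-wise starting from $w$, i.e.\ there is $x\ge0$ with $w+x\in P$ and $\Phi(x)=z$; this is an augmentation statement for polymatroidal network flow and follows from the network induction of a polymatroid, the very fact used to see that $\tilde P$ is a polymatroid. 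Unwinding the definitions~(\ref{remnant_supply_i1}) and~(\ref{remnant_supply_i}), the lemma gives, for every $\zeta\ge0$,
\[
\tilde P^{i}_{y,d}(\zeta)\;=\;\bigcup_{\xi\ge0,\ \xi(E_i)=\zeta}P^{i}_{w,d}(\xi),
\]
and in particular $\tilde P^{i}_{y,d}(0)=P^{i}_{w,d}(0)$. One direction is now easy: the chosen $\xi$ satisfies $P^{i}_{w,d}(\xi)=P^{i}_{w,d}(0)=\tilde P^{i}_{y,d}(0)$, so $\tilde P^{i}_{y,d}(\xi(E_i))\supseteq\tilde P^{i}_{y,d}(0)$, and since $\tilde P^{i}_{y,d}(\cdot)$ is non-increasing (a consequence of the down-closedness of $\tilde P$), we get $\tilde P^{i}_{y,d}(\xi(E_i))=\tilde P^{i}_{y,d}(0)$, hence $\xi(E_i)\le\zeta_i$ by definition of $\zeta_i$.

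The reverse inequality $\xi(E_i)\ge\zeta_i$ is the crux. Since $\xi$ is a maximal vector of the polymatroid $P^{i}_{w,d}$ (Theorem~\ref{greedy}), $\xi(E_i)=f^{i}_{w,d}(E_i)$ for the submodular function $f^{i}_{w,d}$ of the clinching polytope, so the task is to prove $f^{i}_{w,d}(E_i)\ge\zeta_i$, i.e.\ that a \emph{single} edge-wise vector in $P^{i}_{w,d}$ attains total $\zeta_i$ --- not merely that the union displayed above exhausts $\tilde P^{i}_{y,d}(0)$, since a union of proper sub-polytopes can equal the whole and abstract set-theoretic arguments do not suffice here. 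For this I would use the explicit min-cut description of $f^{i}_{w,d}$ established in the proof of Theorem~\ref{greedy}: one expresses both $f^{i}_{w,d}(E_i)$ and the quantity defining the one-sided clinch $\zeta_i=\max\{\zeta:\tilde P^{i}_{y,d}(\zeta)=\tilde P^{i}_{y,d}(0)\}$ through the polymatroidal max-flow/min-cut formula of Theorem~\ref{max-flow-min-cut} on the network associated with $(w,d)$, and checks that the two resulting optimization problems coincide in value. This identification --- that the aggregated two-sided clinch is exactly the one-sided scalar clinch --- is where the combinatorial substance of Theorem~\ref{greedy} is genuinely used and is the main obstacle; the remainder is bookkeeping along the induction.
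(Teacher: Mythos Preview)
Your overall plan---induction on iterations, maintaining $y_i=w(E_i)$ and $\pi_i=p_i$, and reducing everything to the equality of the per-iteration clinch amounts---is exactly right and is what the paper does. The gap is in your ``key lemma'' $\Phi(P_{w,d})=\tilde P_{y,d}$. You assert that the inclusion ``$\supseteq$'' is ``an augmentation statement \ldots\ and follows from the network induction of a polymatroid.'' It does not: network induction only gives $\Phi(P)=\tilde P$, i.e.\ every $y'\in\tilde P$ is $\Phi(w')$ for \emph{some} $w'\in P$, but it does \emph{not} say that such a $w'$ can be chosen with $w'\ge w$ for a prescribed $w$ with $\Phi(w)=y$. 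A two-buyer, two-seller example already breaks it: take $E_1=\{1a\}$, $E_2=\{2a,2b\}$, $f_a\equiv 1$ on nonempty sets, $f_b(\{2b\})=1$, and $w=(w_{1a},w_{2a},w_{2b})=(0,1,0)$, so $y=(0,1)$. Then $z=(1,0)\in\tilde P_{y,d}$ (witnessed by $w'=(1,0,1)$) but there is no $x\ge 0$ with $w+x\in P$ and $\Phi(x)=z$, since $w_{1a}+x_{1a}+w_{2a}+x_{2a}\le 1$ forces $x_{1a}\le 0$. In the paper's language this is the strict inequality $g_{w,d}(X)>\tilde g_{w,d}(X):=f_{w,d}(E_X)$, which can occur for arbitrary feasible $w$.

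What the paper shows---and what your induction actually needs---is that the equality $g_{w,d}=\tilde g_{w,d}$ (equivalently, your key lemma) holds for the \emph{particular} $w$'s produced along Algorithm~\ref{algo2}. This is Theorem~\ref{f=g}(i), and its proof is the real work: one maintains, as an invariant through the iterations, that for every set $X$ in a certain monotone-nonincreasing family $\mathcal U\subseteq 2^N$ (the minimal minimizers of $\tilde g_{w,d}$) the contraction identity $f_w(E_X)=f(E_X)-w(E_X)$ holds (Proposition~\ref{min} and Lemma~\ref{minimal_minimizer}); this is exactly what guarantees that the minimizing $F\supseteq E_Z$ in $f_w(E_Z)$ can be taken of the form $E_{Z'}$. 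Once this invariant is in place, both directions $\xi(E_i)\le\zeta_i$ and $\xi(E_i)\ge\zeta_i$ fall out immediately from the closed forms $\zeta_i=g_{w,d}(N)-g_{w,d}(N-i)$ (Theorem~\ref{clinch_goel}) and $\xi(E_i)=f_{w,d}(E)-f_{w,d}(E_{N-i})$ (Corollary~\ref{clinch_limit}(ii)). So you have the difficulty located in the wrong place: the ``crux'' is not the reverse inequality you flag at the end, but precisely the lemma you dismissed as routine. The same invariant machinery (Lemma~\ref{lem:change}) also supplies the fact you invoke informally, that one buyer's clinch leaves the others' clinch values unchanged (Theorem~\ref{f=g}(ii)); this too is not a consequence of ``not affecting other buyers'' alone and needs the inductive bookkeeping.
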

	The proof of Theorem \ref{relation} is 
	given in Section~\ref{subsec:relation}. 
	Therefore in mechanism 2, the utility of buyers 
	is the same as that in the reduced one-sided market.
	By this fact and Theorem \ref{goel} we have:
	\begin{corollary}
	\label{ICb}
	Mechanism 2 satisfies (ICb) and (IRb).
	\end{corollary}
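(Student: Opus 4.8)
The plan is to obtain Corollary~\ref{ICb} as an almost immediate consequence of Theorem~\ref{relation} and Theorem~\ref{goel}, using the fact that a buyer's utility depends on an allocation only through the pair (total amount of goods received, payment). First I would record that, by~(\ref{eqn:utility_buyer}), for a non-virtual buyer $i$ the utility $u_i(\mathcal A)$ of an allocation $\mathcal A=(w,p,r)$ is a function of $w(E_i)$ and $p_i$ alone; in particular the final transformation of the allocation in Mechanism~2 --- which only rewrites the revenues $r_j$ and reinterprets $w_{(n+j)j}$ as unsold goods of $j$ --- leaves every $u_i$ unchanged, so it suffices to control the terminal values of $w(E_i)$ and $p_i$ in Algorithm~\ref{algo2}.

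Next, fix any public information $\mathcal I$, in particular any profile of reported bids, and write $\mathcal A^{(2)}(\mathcal I)$ for the allocation that Mechanism~2 returns. Both mechanisms perform the same virtual-buyer preprocessing, so Algorithm~\ref{algo1} applied to the reduced one-sided market built from $\mathcal I$ and Algorithm~\ref{algo2} operate over comparable data; running them with a common $\varepsilon$, Theorem~\ref{relation} gives $y_i(\mathcal I)=w(E_i)$ and $\pi_i(\mathcal I)=p_i$ at termination for every $i\in N$. Combining this with the first step,
\[
u_i\bigl(\mathcal A^{(2)}(\mathcal I)\bigr)\;=\;v_i\,y_i(\mathcal I)-\pi_i(\mathcal I)\;=:\;\tilde u_i(\mathcal I),
\]
with the usual convention that both sides equal $-\infty$ when the payment exceeds $B_i$, where $\tilde u_i(\mathcal I)$ is by definition the utility of buyer $i$ in the reduced market under Algorithm~\ref{algo1}. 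I would then emphasise that this identity holds for \emph{every} input, hence simultaneously for $\mathcal I$ and for $\mathcal I_i$ (the input in which buyer $i$ bids truthfully).

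Given the identity, (ICb) follows from the chain $u_i(\mathcal A^{(2)}(\mathcal I))=\tilde u_i(\mathcal I)\le\tilde u_i(\mathcal I_i)=u_i(\mathcal A^{(2)}(\mathcal I_i))$, whose middle inequality is exactly (ICb) for Algorithm~\ref{algo1} supplied by Theorem~\ref{goel}. For (IRb), since (ICb) is now known it is enough to produce a bid yielding nonnegative utility; Theorem~\ref{goel} gives one for Algorithm~\ref{algo1}, namely the truthful bid, so $\tilde u_i(\mathcal I_i)\ge 0$, and the identity transfers this to $u_i(\mathcal A^{(2)}(\mathcal I_i))\ge 0$ in Mechanism~2.

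The only delicate point --- and it is where the actual content sits --- is the legitimacy of invoking Theorem~\ref{relation} uniformly over the whole family of inputs obtained by varying buyer $i$'s bid (not merely for one fixed profile), together with the verification that the virtual-buyer preprocessing is literally identical in the two mechanisms, so that ``the reduced market built from $\mathcal I$'' is well defined and is precisely the market on which Algorithm~\ref{algo1} acts. Once Theorem~\ref{relation} is available this is routine; accordingly I expect no genuine obstacle in proving Corollary~\ref{ICb} itself, the weight of the argument having already been borne by Theorems~\ref{relation} and~\ref{goel}.
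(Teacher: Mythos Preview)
Your proposal is correct and follows exactly the paper's approach: the paper derives the corollary in one line from Theorem~\ref{relation} (which identifies $((w(E_i))_{i\in N},p)$ with $(y,\pi)$) together with Theorem~\ref{goel}, and you simply spell out this deduction in more detail, including the observation that buyer utility depends only on $(w(E_i),p_i)$ and that the post-processing step does not touch these quantities.
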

	
	We now show Lemma~\ref{r_nonempty} in mechanism 1.
	\begin{proof}[Proof of Lemma \ref{r_nonempty}]
		We verify that the revenue vector $r$ obtained by Algorithm \ref{algo2}
		satisfies (\ref{irs_1})(=(IRs)) and (\ref{budgetbalance_1})(=(SBB)) for mechanism 1.
		By Theorem \ref{relation}, the payment vectors of buyers in both algorithms are the same.
		Therefore (\ref{budgetbalance_1}) follows from Lemma~\ref{lem:SBB_Algo2}. 
		Also, by Theorem~\ref{relation}, 
		the amounts of unsold goods, which are the amounts of goods obtained by virtual buyers, 
		in both algorithms are the same.
		Hence (IRs) for mechanism 2 (Proposition~\ref{IRs}) implies (\ref{irs_1}) = (IRs) for mechanism 1.
	\end{proof}

	Goel et al. \cite{GMP2015} proved (PO) for Algorithm \ref{algo1}. 
	We combine Theorem 3.4 with their proof method 
	and careful considerations on utilities of sellers,
	and prove (PO) for mechanism 2:
	\begin{theorem}
	\label{pareto}
	Our mechanism satisfies (PO).
	\end{theorem}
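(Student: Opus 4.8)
The plan is to argue by contradiction, following the strategy Goel et al.\ used for the one-sided case but carefully accounting for the sellers' utilities via the virtual buyers. Suppose $\mathcal A=(w,p,r)$ is the allocation produced by mechanism 2 on input $\mathcal I^\ast$ (all buyers truthful), and suppose for contradiction that there is an allocation $\mathcal A'=(w',p',r')$ with $\sum_{i\in N}p'_i\ge\sum_{j\in M}r'_j$ that weakly Pareto-dominates $\mathcal A$. First I would reformulate everything in terms of the reduced one-sided market: by Theorem \ref{relation}, the pair $(y,\pi)$ with $y_i=w(E_i)$ equals the output of Algorithm \ref{algo1}, and by Theorem \ref{goel} the transaction vector $y$ lies in the base polytope of $\tilde P$. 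The key translation is that the utility of seller $j$ equals $r_j+\rho_j(f_j(E_j)-w(E_j))$, and via the extended submodular function \eqref{f_j} this is exactly the revenue term plus the "utility" that the virtual buyer $n+j$ (whose bid is $\rho_j$) would get from holding the unsold goods $y_{n+j}=w_{(n+j)j}$. So a Pareto improvement for all sellers and all real buyers can be repackaged as a Pareto improvement for all $n+m$ buyers in the reduced market, together with the budget-balance inequality $\sum_{i}\pi'_i\ge\sum_i\pi_i$ coming from $\sum p'_i\ge\sum r'_j$ and (SBB).

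Next I would run the standard clinching Pareto argument in the reduced market. The buyers split into two classes at termination: those with $c_i\ge v'_i$ (their price clock reached their bid) and those with $d_i=0$ because their budget is exhausted ($\pi_i=B_i$). For a buyer of the first type, his marginal value $v'_i=v_i$ is below the per-unit price he faced on his last clinched units, so giving him more goods cannot raise his utility without lowering someone's payment; for a buyer of the second type he is already paying his entire budget, so increasing his transaction strictly decreases his utility unless his per-unit price drops. Combined with the fact that $y$ is a \emph{base} of $\tilde P$ (so the total allocated quantity is maximal and cannot be increased "for free"), one shows that any reallocation $y'$ that weakly helps every buyer must in fact satisfy $\sum_i\pi'_i\le\sum_i\pi_i$ with equality only when $\mathcal A'$ is utility-equivalent to $\mathcal A$ for everyone; this contradicts the assumed strict improvement for at least one participant. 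Here I would lean on the clinching invariant — each buyer clinches the maximal amount not affecting others, so the leftover capacity at each price level is exactly what the other buyers can still absorb — which is precisely what makes the exchange argument go through.

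The last step is to lift this back to the two-sided allocation $(w,p,r)$: since $w$ and $w'$ both realize their respective quantity vectors $y=(w(E_i))_i$ and $y'=(w'(E_i))_i$ through feasible transactions in $P=\bigoplus_j P_j$, and utilities of both buyers and sellers depend on $w$ only through the aggregates $w(E_i)$ and $w(E_j)=\sum_{i}w_{ij}$, the Pareto comparison in the two-sided market is entirely governed by the reduced-market comparison — so the contradiction transfers. I expect the main obstacle to be the bookkeeping in the second paragraph: precisely showing that a reallocation helping \emph{every} buyer (including the virtual ones, which have infinite budget and thus never hit the second case) forces the total payment not to increase. The subtlety is that virtual buyers behave differently — their clock stops at $\rho_j$, never because of a budget — so one must check that the "type-1" argument applies to them with $v'_{n+j}=\rho_j$ playing the role of the valuation, which is exactly why the preprocessing sets $v_{n+j}:=v'_{n+j}$. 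Once that case analysis is airtight, the rest is the Goel et al.\ exchange argument essentially verbatim.
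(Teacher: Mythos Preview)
Your reduction to the one-sided Pareto result has a genuine gap. You claim that ``a Pareto improvement for all sellers and all real buyers can be repackaged as a Pareto improvement for all $n+m$ buyers in the reduced market,'' but this is not true. Seller $j$'s utility is $r_j+\rho_j(f_j(E_j)-w(E_j))$, which equals $r_j$ plus the virtual buyer's utility $\rho_j w_{(n+j)j}-p_{n+j}$ only in the mechanism's allocation (where the $r_j$-term and the virtual payment are tied together by construction). In the competing allocation $\mathcal A'=(w',p',r')$, the revenue $r'_j$ is a free variable, and the inequality $u_j(\mathcal A')\ge u_j(\mathcal A)$ says
\[
r'_j+\rho_j\bigl(f_j(E_j)-w'(E_j)\bigr)\ \ge\ (r_j-p_{n+j})+\rho_j\,w_{(n+j)j},
\]
which does \emph{not} imply that virtual buyer $n+j$ is weakly better off under any natural extension of $\mathcal A'$. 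Summing over $j$ and invoking (SBB) and $\sum p'_i\ge\sum r'_j$ yields only an aggregate relation between $\sum_{i\in N^\ast}(p'_i-p_i)$ and $\sum_j\rho_j(w_{(n+j)j}-w'_{(n+j)j})$, not the per-buyer domination needed to invoke one-sided (PO) as a black box. So the reduction collapses, and your derivation of $\sum_i\pi'_i\ge\sum_i\pi_i$ is not justified.

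The paper does not reduce to the one-sided result; it re-runs the exchange argument from scratch, and the engine is a structural fact you do not mention: Proposition~\ref{dropping}, the chain of tight sets $\emptyset=X_0\subset X_1\subset\cdots\subset X_t=N$ indexed by the buyers who drop \emph{unsaturated}, with $w(E_{X_k})=f(E_{X_k})$ and all buyers in $X_k\setminus X_{k-1}$ sharing dropping price $\approx\theta_{i_k}$. One proves by induction on $k$ the telescoping inequality
\[
\sum_{i\in X_k\cap N^\ast}(p_i-p'_i)\ \ge\ \sum_{j:\,n+j\in X_k}\bigl((r_j-r'_j)-(p_{n+j}-p'_{n+j})\bigr)+\theta_{i_k}\bigl(w(E_{X_k})-w'(E_{X_k})\bigr),
\]
using the buyer inequalities~(\ref{assumption1'}) on each $X_k\setminus X_{k-1}$ (via a case analysis on how each buyer dropped) and, crucially, substituting the \emph{seller} inequality~(\ref{assumption2'}) precisely when a virtual buyer $n+j$ enters the layer $X_k\setminus X_{k-1}$ (where $\theta_{i_k}=\rho_j$ since virtual buyers always drop unsaturated at their bid). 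At $k=t$ the tightness $w(E)=f(E)=w'(E)$ kills the last term, and combining with (SBB) and $\sum p'_i\ge\sum r'_j$ forces every inequality used along the way to be an equality. Your two-case split (clock reached bid vs.\ budget exhausted) is the right ingredient for the per-buyer estimate, but without the tight-set chain you have no mechanism to interleave the seller inequalities with the buyer inequalities at the correct price levels, which is exactly where the $r_j-r'_j$ terms get absorbed.
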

	The proof is given in Section~\ref{subsec:pareto}.

	\begin{remark}[Concave Budget Constraint]\label{rem:concave}
		{\rm 
	 Our model and mechanisms 
	 naturally incorporate concave budget constraints, i.e., 
	 utility $u_i$ of each buyer $i$ is given by
	\begin{equation*} 
	u_i(\mathcal A):=
	\begin{cases}
	\displaystyle v_i w(E_i) -p_i\ \ \ {\rm if}\ p_i\leq \phi_i( w(E_i)), \\
	-\infty \qquad\quad\qquad \ \ {\rm otherwise},
	\end{cases}
	\end{equation*}
	where $\phi_i: \mathbb R_+\to \mathbb R_+$ is
	a concave non-decreasing 
	 function with $\phi_i(0)=0$.
	 In the case of the usual budget constraint, 
	 $\phi_i$ is defined by $\phi_i(0) := 0$ and $\phi_i(x) := B_i$ for $x > 0$.
	Concave budget constraint was considered 
	by Goel et al. \cite{GMP2014} for one-sided markets.
	They showed that the polyhedral 
	clinching auction 
	can be generalized to this setting. 
	By using their idea, our mechanisms can also be adapted to have the 
	same conclusion in Theorem \ref{main}.
	
	The adaptation is given as follows.
	Each buyer reports the region 
	$\{(p_i, z_i) \,|\,  p_i \leq \phi_i(z_i)\}$ 
	to the auctioneer, and the budget 
	feasibility is replaced by  
	$p_i \leq \phi_i(w(E_i))$.
	Assume, as in Goel et al. \cite{GMP2014}, that 
	the interval $\varepsilon$ of price clocks
	divides $\beta_i:=\lim_{x\to 0} \phi_i(x)/x$ 
	for each $i$.
    The function $\phi_i$ for a virtual buyer $i\in N$ 
    is defined by $\phi_i(z_i) := 0$ if $z_i = 0$ and $\infty$ otherwise.   
	The demand $d_i$ is defined by	
	\begin{equation*}
	\label{concave_demand1}
	d_i:=  
	\left\{
	\begin{array}{ll}
	\max\{z_i\mid \pi_i+c_i z_i\leq \phi_i(y_i+z_i) \} & {\rm if}\ c_i < v'_i, \\
	0 &{\rm otherwise}.
	\end{array}
	\right.
	\end{equation*}
	In mechanism 2, we use $w(E_i)$ instead of $y_i$,
	and $p_i$ instead of $\pi_i$.
	In Section~\ref{sec:proof}, we give every proof for the setting with 
	concave budget constraints. 	
	}
\end{remark}

\subsection{Example}\label{subsec:example}
Here we demonstrate the behavior of the mechanisms 
for a small two-sided market consisting of two buyers $1,2$ 
and two sellers $1,2$.
Each seller $j$  has $s_j$ units of goods, 
and can transact with both buyers $1,2$ freely.
Namely the market is represented as a complete bipartite graph $(N,M,E)$ with
$N = \{1,2\}$, $M = \{1,2\}$ and $E = N \times M$, 
and the polymatroid constraint $P_j$ $(j=1,2)$ 
is given by	the stock constraint
$w_{1j} + w_{2j} \leq s_j$, $w_{1j} \geq 0$, and $w_{2j} \geq 0$; the corresponding submodular function $f_j$ 
is given by $f_j(F) := s_j$ if $F \neq \emptyset$ and $f_j(\emptyset) := 0$.
Suppose that buyer $i$ has budget $B_i$ and bid $v'_i = v_i$, 
and seller $j$ has reserved price $\rho_j$.

The both mechanisms 1 and 2 first apply the preprocessing.
Add virtual buyers $3,4$ to $N$, 
where $3$ and $4$ are adjacent only to sellers $1$ and $2$, 
respectively. 
Their bids and budgets are defined by
$v_3 := \rho_1$, $v_4 := \rho_2$, and
$B_3 = B_4 := \infty$.
According to (\ref{f_j}),
the polymatroid constraint $P_j$ $(j=1,2)$ is extended to
\begin{eqnarray*}
&& w_{1j} + w_{2j} + w_{(j+2) j} \leq  s_j, \\
&& w_{1j},w_{2j}, w_{(j+2)j}  \geq  0.
\end{eqnarray*}
Then the whole polymatroid $P = P_1 \oplus P_2$ is obtained by the union of 
these inequalities for $j=1,2$. See the left of Figure~\ref{fig:2x2}.
	\begin{figure}[t]
		\centering
		\includegraphics[width=12cm]{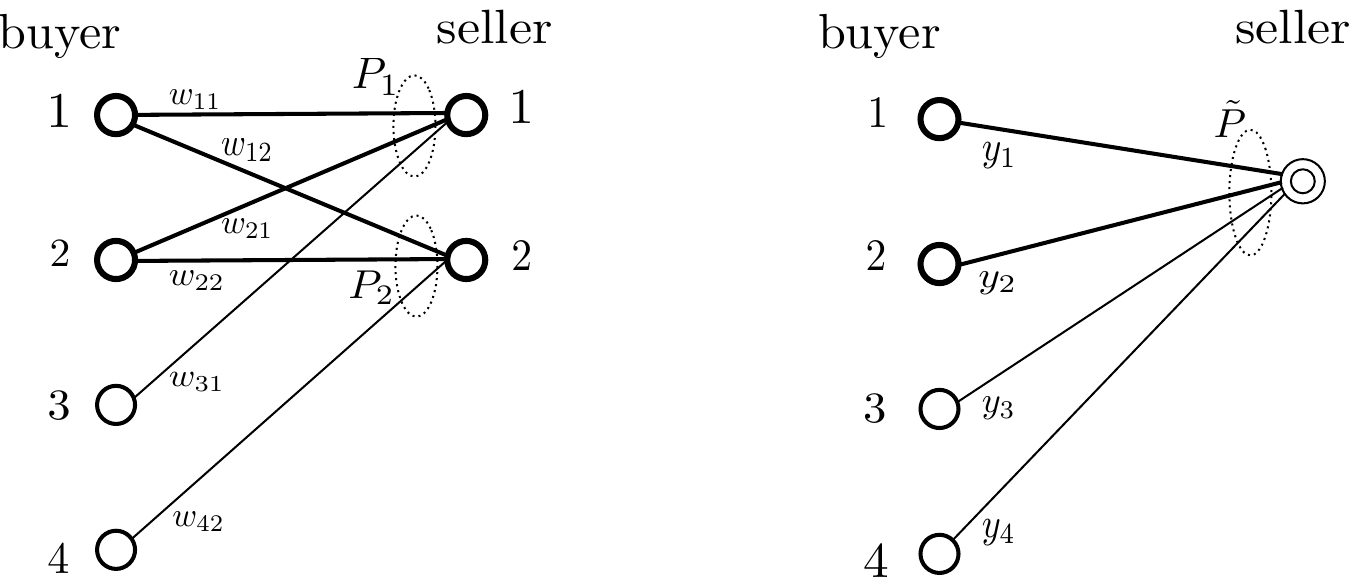}
		\caption{A two-sided market (left) and the reduced one-sided market (right)}
		\label{fig:2x2}
	\end{figure}

We first consider mechanism 2.
In some iteration of the mechanism, 
suppose that
the transaction and demand vectors are given by
$w$ and $d$, respectively.  
Let $\tilde s_j$ denote the current stock of the goods 
of seller $j$, i.e., $\tilde s_j := s_j - w_{1j} - w_{2j} - w_{(j+2)j}$.
Then the remnant supply polytope $P_{w,d}$ is given by
\begin{eqnarray*}
&&	 x_{1j} + x_{2j} + x_{(j+2) j}  \leq  \tilde{s}_j \quad (j=1,2),  \\
&&	 x_{i1} + x_{i2}  \leq d_i \quad (i=1,2), \\
&&	 x_{(j+2)j}  \leq  d_{j+2} \quad (j=1,2), \\
&&  x_{ij} \geq 0 \ (ij \in E). 
\end{eqnarray*}
We say that virtual buyer $j+2$ {\em exists} if $d_{j+2} = \infty$ 
and {\em vanishes} if $d_{j+2} = 0$.  
For $j \in \{1,2\}$, let $e_j \in \{0,1\}$ be defined by 
$e_j := 1$ if $j+2$ exists and $e_j := 0$ otherwise.
Also let $\bar e_j := 1 - e_j$.
For $i \in \{1,2\}$, let $i'$ denote the companion of $i$, i.e., $\{1,2\} = \{i,i'\}$.
For nonvirtual buyer $i \in \{1,2\}$ and $\xi \in \mathbb{R}_+^{E_{i}}$, 
the remnant supply polytope $P^{i}_{w,d}(\xi)$ of 
remaining buyers  $\{i',3,4\}$ is given by
\begin{eqnarray*}
&& u_{i'} + u_3 + u_4  \leq  \tilde {s}_1 + \tilde{s}_2 - \xi_{i1} - \xi_{i2},  \\
&& u_{i'}  \leq  d_{i'},\ u_3  \leq  (\tilde{s}_1 - \xi_{i1})  e_{1},\ u_4  \leq  
(\tilde{s}_2 - \xi_{i2}) e_2, \\
&& u_{i'}, u_3,u_4  \geq  0
\end{eqnarray*}
if $\xi_{i1} + \xi_{i2} \leq d_i$, and $P^{i}_{w,d}(\xi) = \emptyset$ otherwise.
For virtual buyer $j+2$ with $j \in \{1,2\}$ (if it exists) 
and $\xi = \xi_{(j+2)2} \geq 0$, 
the polytope $P^{j+2}_{w,d}(\xi)$ is given by
\begin{eqnarray*}
&& 	u_{1} + u_2 + u_{j'+2}  \leq  \tilde {s}_1 + \tilde{s}_2 - \xi_{(j+2)j},  \\
&&	u_{1}  \leq  d_{1},\ u_2  \leq  d_2,\ u_{j'+2}  \leq  \tilde{s}_{j'} e_{j'}, \\
&&	u_{1}, u_2,u_{j'+2}  \geq  0
\end{eqnarray*}
if $\xi_{(j+2)j} \leq \tilde{s}_j$, and $P^{j+2}_{w,d}(\xi) = \emptyset$ otherwise.
By elementary calculation, we can determine
clinching polytopes $P_{w,d}^i$ $(i=1,2,3,4)$ as follows.
\begin{itemize}
	\item 
For nonvirtual buyer $i \in \{1,2\}$, the clinching polytope $P^{i}_{w,d}$ is given by:
\begin{eqnarray*}
	&& \xi_{i1} + \xi_{i2}  \leq \min \{d_i, \max \{ \tilde {s}_{1}\bar e_1 +  \tilde{s}_2\bar e_2 - d_{i'},0\} \}, \\
	&& 0 \leq 	\xi_{i1} \leq  \tilde{s}_1\bar e_1,\ 0 \leq 	\xi_{i2} \leq  \tilde{s}_2\bar e_2.
\end{eqnarray*}
\item For virtual buyer $j+2$ with $j \in \{1,2\}$, the clinching polytope $P^{j+2}_{w,d}$ is given by 
\[
0 \leq \xi_{(j+2)j} \leq \max \{\tilde{s}_{j} + \tilde{s}_{j'} \bar e_{j'}  -d_1-d_2,0 \}.
\]
\end{itemize}
If virtual buyer $j+2$ exists, then 
any nonvirtual buyer cannot transact with $j$, 
i.e., $\xi_{ij} = 0$,
since this will interrupt $j+2$ who has infinite demand.
After virtual buyers vanish, buyer $i$ 
can  clinch surplus $\tilde s_1 + \tilde s_2 - d_{i'}$
which exceeds the demand of the competitor $i'$.
Here maximal clinching vectors
forms the segment between $(\min \{ \tilde{s}_1,c\}, c - \min \{\tilde{s}_1,c\})$ 
and $(c - \min\{ \tilde{s}_2,c\}, \min \{  \tilde{s}_2,c\})$, where
$c := \min\{ d_i, \max \{\tilde {s}_{1} + \tilde{s}_2 - d_{i'},0 \}\}$.
The clinch by virtual buyer $j+2$ can be understood as  
taking back the goods of $j$ exceeding  
the limit that nonvirtual buyers can buy.

The auction proceeds as follows.
In earlier iterations where a virtual buyer $j+2$ exists, 
nonvirtual buyers cannot purchase the good of $j$.
Also virtual buyer $j+2$ may clinch to 
take back the goods of $j$ which is guaranteed to be unsold. 
The virtual buyer $j+2$ will vanish when the price clock $c_j$ 
reaches at the reserved price $\rho_j$ of $j$ ($=$ bid $v_{j+2}$ of $j+2$).
Consequently the good of $j$ will be sold at price at least $\rho_j$. 
In this way, virtual buyers increase 
the prices of the goods to reserved prices, 
and take back the unsold goods, until their vanishing.
After that,  
nonvirtual buyer $i$ clinches surplus $\tilde s_1 + \tilde s_2 - d_{i'}$
exceeding the potential of competitor $i'$.
Thus the total sum of demands $d_1 + d_2 + d_3 + d_4$
keeps at least the current stock  $\tilde s_1 + \tilde s_2$.
Prices are increasing, and demands are decreasing.
Finally all demands are zero, and the stocks are zero, i.e., all goods are sold.
The auction terminates.

Table~\ref{tab:behavior} shows the behavior of mechanism 2 for this market with 
setting $v_1 = 3$, $B_1 = 12$, $v_2 = 3$, $B_2 = 11$, $\rho_1 = 1$, $s_1 = 7$, $\rho_2 = 1$, and $s_2 = 8$.
\begin{table}[t]
	\center
	\caption{behavior of mechanism 2}
	\begin{tabular}{|c|cccc|cccc|ccc|ccc|}
		\hline
		&  \multicolumn{4}{c|}{buyer 1} &\multicolumn{4}{c|}{buyer 2} & \multicolumn{3}{c|}{seller 1} &   \multicolumn{3}{c|}{seller 2} \\
		&  \multicolumn{4}{c|}{$B_1 =12$, $v_1=3$} &\multicolumn{4}{c|}{$B_2 =11$, $v_2 = 3$} & \multicolumn{3}{c|}{$\rho_1 = 1$, $s_1=7$} &   \multicolumn{3}{c|}{$\rho_2 = 1$, $s_2 = 8$} \\
		\hline \hline
		$l$ & $c_1$ & $d_1$ & $\xi_{11}$ & $\xi_{12}$ & $c_2$ & $d_2$ & $\xi_{21}$ & $\xi_{22}$ & $\tilde s_1$ & $\varDelta r_1$ & $c_3$ & $\tilde s_2$ & $\varDelta r_2$ & $c_{4}$ \\
		\hline  
		$1$  & $0$ & $\infty$&  &        & 0  & $\infty$ &      &    & $7$  &   & 0 & $8$ &  &   0  \\
		$2$  & $1$ & $12$  &    &        & 0  & $\infty$ &      &    & $7$  &   & 0 & $8$   &  & 0 \\
		$3$  & $1$ & $12$  &    &        & 1  & $11$     &      &    & $7$  &   & 0 & $8$   &  & 0 \\
		$4$  & $1$ & $12$  &    &        & 1  & $11$     &      &    & $7$  &   & 1 & $8$   &  & 0 \\
		$5$  & $1$ & $12, 8$  & $2$& $2$    & 1  & $11$    & $3/2$&$3/2$& $7, 5$ & $7/2$& 1 & $8, 6$  & $7/2$ & 1 \\
		$6$  & $2$ & $4$  &     &        & 1    & $8$   & $7/4$&$9/4$& $7/2$ & $7/4$  & 1 & $9/2$ & $9/4$ & 1 \\
		$7$  & $2$ & $4,2$  & $7/8$&$9/8$& 2   & $2$  &       &      &  $7/4,7/8$ & $7/4$& 1 & $9/4, 9/8$& $9/4$  & 1 \\
		$8$  & $2$ & $2$  &     &         & 2    & $2$    &      &      & $7/8$ &   & 2 & $9/8$ &  & 1 \\
		$9$  & $2$ & $2$  &     &         & 2    & $2$    &      &      & $7/8$  &   & 2  & $9/8$ &  & 2 \\ 
		$10$ & $3$ & $0$  &     &         & 2    & $2$    & $7/8$  & $9/8$ & $7/8$  & $7/4$  & 2 & $9/8$ & $9/4$ & 2 \\
		\hline \hline
		&  \multicolumn{4}{l|}{obtained goods: $6$} &\multicolumn{4}{l|}{obtained goods: $9$} & \multicolumn{3}{l|}{unsold goods: $0$} &   \multicolumn{3}{l|}{unsold goods: $0$} \\
		&  \multicolumn{4}{l|}{payment: $8$} &\multicolumn{4}{l|}{payment: $11$} & \multicolumn{3}{l|}{revenue: $8.75$} &   \multicolumn{3}{l|}{ revenue: $10.25$} \\
		&  \multicolumn{4}{l|}{utility: $10$ } &\multicolumn{4}{l|}{utility: $16$} & \multicolumn{3}{l|}{utility: $8.75$ } &  \multicolumn{3}{l|}{utility: $10.25$} \\
		\hline 
	\end{tabular}
	\label{tab:behavior}
\end{table}
In the first four iterations, nonvirtual 
buyers cannot clinch (i.e., $\xi_{ij} = 0$), 
since price clocks are less than reserved prices of sellers, i.e., 
virtual buyers exist.
Also no taking back by virtual buyers occurs;  
hence clinching vectors $\xi_{31}$ and $\xi_{42}$ are omitted in the table.
The first clinching occurs in the 5th iteration ($l=5$).  
The clinching polytope $P_{w,d}^1$ for buyer~$1$ is 
given by $\xi_{11} + \xi_{12} \leq \min \{ 12, \max \{7+8 - 11,0 \}\} = 4$, 
$\xi_{12} \leq 8$, and $\xi_{12} \leq 7$.
The maximal clinching vectors form the 
segment between $(4,0)$ and $(0,4)$.
Here the midpoint $(2,2)$ is chosen as a clinching vector $(\xi_{11},\xi_{12})$; 
this midpoint clinching rule is used in the sequel.
Buyer $1$ pays $2$ to both sellers $1$ and $2$.
The demand $d_1$ of buyer $1$ is updated to $8$.
The stocks of sellers $1$ and $2$ 
are updated to $5$ and $6$, respectively.
The next is the turn of buyer $2$, who also 
clinches $(\xi_{21}, \xi_{22}) = (3/2,3/2)$.
In this iteration, sellers $1$ and $2$ obtain revenues 
$\varDelta r_1 = \varDelta r_2  = 7/2$. 
The auction proceeds in this way.
The result is given in the bottom of the table.

Next we consider mechanism $1$.
The reduced one-sided market is obtained as follows. See the right of Figure~\ref{fig:2x2}. 
The aggregated polymatroid $\tilde P$ is given by
\begin{eqnarray*}
&& y_1 + y_2 + y_3 + y_4 \leq s_1 + s_2, \\
&& y_3 \leq s_1,\ y_4 \leq s_2, \\
&& y_1,y_2, y_3,y_4 \geq 0.
\end{eqnarray*}
Given a transaction $y$ and demand $d$, 
the remnant supply polytope $\tilde P_{y,d}$ is given by
\begin{eqnarray*}
	&& z_1 + z_2 + z_3 + z_4 \leq s', \\
	&& z_1 \leq d_1,\ z_2 \leq d_2,\ z_3 \leq s'_1 e_1,\ z_4 \leq s'_2 e_2 \\
	&& z_1,z_2, z_3,z_4 \geq 0, 
\end{eqnarray*}
where $s' := s_1 + s_2 - y_1 - y_2 - y_3 -y_4$ and $s'_j := s_j - y_{j+2}$ for $j=1,2$.
Via $\tilde P_{y,d}^{i}(\zeta)$, the maximum clinch $\zeta_i$ 
is computed as follows:
\begin{itemize}
	\item $\zeta_i =  \min \{d_i, \max \{ s' - s'_1e_1 - s'_2 e_2 - d_{i'},0\}\}$ for $i=1,2$.
\item
$\zeta_{j+2} =  \min \{ s'_{j}, \max \{ s' -  s'_{j'}e_{j'} - d_{1} - d_{2},0 \}\}$ for $j=1,2$.
\end{itemize}
Then the auction proceeds as follows.
Buyer 1 obtains $4$ units of goods at $l=5$, and $2$ units at $l=7$. 
The total is $6$ and the payment is $8$.
Buyer 2 obtains $3$ units at $l=5$, $4$ units at $l=6$, and $2$ units at $l=10$.
No clinching by virtual buyers occur.
For each buyer, 
the obtained goods and the payment 
are the same as that in mechanism 2, as Theorem~\ref{relation} says.
In fact, a stronger property holds: 
they are the same at each iteration, i.e., 
$\zeta_i = \xi_{i1} + \xi_{i2}$ holds.
This property is generally true, 
which will be shown in Corollary~\ref{clinch_limit} and Theorems~\ref{clinch_goel} and \ref{f=g}.
After the auction, transaction $(w_{ij})$ 
is recovered so that it satisfies 
$w_{11} + w_{12} = 6$, $w_{21} + w_{22} = 9$, 
$w_{11} + w_{21} =7$, and $w_{12} + w_{22} = 8$.
One example is $(w_{11},w_{12},w_{21}, w_{22}) = (3,3,5,4)$.
The whole revenue of the market is $19$, which is distributed to sellers $1,2$ 
so as to satisfy (\ref{irs_1}) and (\ref{budgetbalance_1}),
i.e., $r_1 + r_2 = 19$, $r_1 \geq 7$, and $r_2 \geq 8$.   
One possible distribution is $(r_1,r_2) = (9,10)$, i.e., 
both sellers have net profit~$2$.

\section{Discussion}\label{sec:discussion}
In this paper, we presented the first generic framework (model and mechanisms) 
for auctions in two-sided markets under budget constraints. 
We think that our framework will be a springboard    
toward algorithmic mechanism design on two-sided markets. 
Here we discuss issues of our framework, 
and raise future research directions.  

Our model assumes, for avoiding the impossibility theorem,  
that all sellers are truthful.
This results in that our framework has a strong priority on buyers.
Indeed, as shown in Theorem~\ref{relation}, 
our mechanisms yield the same output for buyers 
as the original polyhedral clinching auction 
applied to the reduced one-sided market.
From this point, our mechanisms are acceptable to the buyer side.
Therefore our mechanisms should be discussed from the seller side.
For sellers, we consider only the minimum requirement (IRs), 
and do not take into account which sellers should 
obtain more revenue than others.
The total revenue can be much greater than 
the total sum of the guaranteed minimum revenues of sellers.
Thus, how the surplus is shared to sellers is one of central issues.  

Mechanism 1 aggregates sellers into one seller at first.
This makes the auction simple
but loses the correspondence between goods and sellers.
In particular, when a buyer $i$ clinches 
an amount $\zeta_i$ of the goods at price $c_i$, 
the resulting revenue $c_i \zeta_i$ is not distributed at this moment.
In the last recovery step, the total revenue is distributed to sellers so that 
only (IRs) and (SBB) are satisfied.
This can cause some unfairness.
In the example in Section~\ref{subsec:example}, 
the revenue sharing $(r_1,r_2) = (7,12)$ or $(11,8)$ 
is possible, which is apparently unfair 
since the two sellers have almost the same characteristics.

Another type of unfairness occurs in the following extreme example.
The market consists of two buyers $1,2$ and two sellers $1,2$, where 
both sellers have one unit of divisible goods ($s_1 = s_2 = 1$) 
and zero reserved price ($\rho_1 = \rho_2 = 0$). 
The buyer $1$ has budget $B_1 = 1$ and bid $v_1' = 2$,  
and  wants the goods of both sellers, 
The buyer $2$ has infinite budget $B_2 = \infty$ and bid $v_2' = 1$, 
and wants the goods of seller $2$ only.
In particular the goods of seller $2$ are more competitive than that of $1$.
The market is depicted in Figure~\ref{fig:2x2-1}.
	\begin{figure}[h]
		\centering
		\includegraphics[width=7cm]{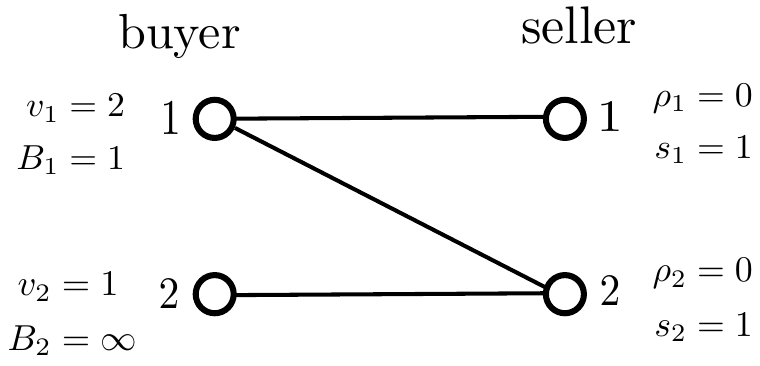}
		\caption{Seller 2 should obtain more revenue than seller 1.}
		\label{fig:2x2-1}
	\end{figure}
Apply Algorithm \ref{algo1} with $\varepsilon=1$; no virtual buyer exists by $\rho_1 = \rho_2 = 0$. 
At the first iteration, buyer 1 clinches
$1$ unit at price $0$, and also clinches 
$1$ unit at price $1$ when buyer 2 drops out from the auction, 
i.e., the price clock of buyer 2 reaches his bid. 
Then all the goods are sold.
By (\ref{irs_1}) and (\ref{budgetbalance_1}), 
the revenue is distributed arbitrarily so that 
it satisfies $r_{1}\geq 0$, $r_{2}\geq 0$, $r_{1}+r_{2}=1$.
In particular, $(r_1,r_2)=(1,0)$ can be chosen.
However this revenue sharing opposes 
to our intuition that sellers who sold competitive goods 
should gain more revenue.
Seller $2$ obtains no revenue, though his goods are competitive; 
it is wanted by both buyers.
On the other hand, seller $1$ obtains revenue $1$ but 
his goods were sold by buyer 1 without competition. 
This example suggests that we should
consider further constraints in 
the last recovery step
so that the resulting sharing is ``fair."
However we found it rather difficult to 
formalize the fairness mathematically.  
Such a concept should be defined only from 
the available information at this step, e.g., 
network structure $(N,M,E)$ and 
the recovered transaction $w$ (possibly not unique).
This is a highly nontrivial problem, 
and should be left to future work.

Mechanism 2 is free from this annoying revenue sharing procedure.
Indeed, actual transactions, prices, and revenues are opened 
and dynamically determined through iterations of the auction.
Contrary to mechanism 1,  
the revenue obtained in each clinching is shared at this moment. 
Clearly sellers who sold their goods in high prices obtain more revenue. 
This is reasonable for all players in the auction.
Also we think that mechanism 2 takes into account 
the competitiveness of goods by the following reason. 
Suppose that a buyer has a link to a competitive seller. 
Then the transaction on this link is influential to others, 
and hence it is hard for the buyer to clinch a large amount of the goods 
through this link.
This consequently makes competitive goods sold in high prices, 
since the price increases through the auction.
If mechanism 2 is applied to the above example, 
the resulting revenue of sellers is $(r_1,r_2)=(0,1)$, which is consistent with 
our intuition. 
From this point, we can say that
mechanism 2 is superior to mechanism 1.

The main issue of mechanism 2 is 
the choice of maximal clinching vectors 
and its affect on the revenue of sellers.
The clinching polytope is a polymatroid, 
and a maximal clinching vector is efficiently obtained (Theorem~\ref{greedy}) but it is not unique in general.
The choice of a maximal clinching vector is directly related to 
the revenues of sellers. 
Indeed, 
nontrivial clinching occurs after virtual buyers vanish, 
and the prices are monotonically increasing in the auction. 
This means that sellers obtain more revenue if his goods are sold in latter iterations.
In the example of Section~\ref{subsec:example}, 
we used the midpoint point clinching rule. 
We can change the clinching rule as: 
(i) each buyer $i$ chooses the clinching vector with the maximum $\xi_{i1}$
and (ii) each buyer $i$ chooses the clinching vector with the maximum $\xi_{i2}$.
Then the revenue vector $(r_1,r_2)$ 
changes from $(8.75,10.25)$ 
to $(7,12)$ for (i) and to $(11,8)$ for (ii). 
Namely, by the choice of clinching vectors, 
mechanism 2 can yield the unfair revenue sharing $(7,12)$ and $(11,8)$.

Therefore we should 
consider a ``fair'' rule of choosing a maximal clinching vector.
One practically reasonable way
is to choose a random ordering of sellers (linked to a buyer doing clinch)  
and to run the greedy algorithm according to the ordering.
Another natural way is to choose a kind of ``center'' 
in the polytope (base polytope)
of maximal vectors of the clinching polytope. 
Natural candidates are the Shapley vector and 
a vector obtained by solving the resource allocation problem with 
an appropriate objective function.
The former is NP-hard to be computed, and the latter   
is efficiently obtained via nonlinear optimization technique over polymatroids~\cite[Chapter V]{F2005}.
However we do not know any theoretical guarantee for these ways of clinching, 
even we did not formulate 
appropriate mathematical concepts (beyond (IRs)) that represent 
the fairness for sellers and that should be guaranteed.

Formalizing such a fairness concept 
may be an unavoidable theme for two-sided models.
One possible approach is 
the {\em $\alpha$-envy freeness} introduced by Goel et al.~\cite{GLMNP2016}, 
which is explained as follows.
In the usual situation of auction, 
each seller $j$ obtains revenue via {\em buying events} 
involving $j$. Here let  ${\cal E}_j$ denote the set of buying events involving $j$.
Then the revenue sharing of a mechanism 
is said to be {\em $\alpha$-envy-free} $(\alpha \leq 1)$
if for any pair of sellers $j,j'$,
the revenue of $j$ (obtained at ${\cal E}_j$) 
is at least $\alpha$ times the (scaled) revenue of $j'$ obtained at ${\cal E}_{j'} \cap {\cal E}_{j}$. 
The $1$-envy free situation achieves an ideal fair sharing  
in the sense that 
every seller $j$ does not envy 
any other seller for his revenue obtained in events that $j$ participates.
In mechanism 2, 
a buying event is precisely the clinch of some buyer $i$ at some iteration $l$.
Namely ${\cal E}_j$ is the set of all pairs $(i,l)$ of buyers $i \in E_j$
and positive integers $l$.
Let $r_j(i,l)$ denote the revenue of seller $i$ 
obtained at the clinch of buyer $i$ in iteration $l$.
Then the $\alpha$-envy freeness in mechanism 2 can 
be formulated:
For any pair of sellers $j,j'$,  
it holds  
\[
r_j = 
\sum_{(i,l) \in {\cal E}_j} r_j(i,l) \geq \alpha \min 
\left( 1, \frac{f_j(E_j)}{f_{j'}(E_{j',j})} \right) 
\sum_{(i,l) \in {\cal E}_j \cap {\cal E}_{j'}} r_{j'}(i,l),
\]
where $E_{j',j} \subseteq E_{j'}$ denote 
the set of edges in $E_{j'}$ incident to buyers transacting with $j$, and
$\min (1, f_j(E_j)/f_{j'}(E_{j',j}))$ 
means a scaling factor 
adjusting the difference of supplies of $j,j'$ (viewed from $j$). 
For the instance in Section~\ref{subsec:example}, 
the midpoint clinching rule gives $0.896$-envy free sharing, whereas
the rules (i) and (ii) give $0.666$- and $0.727$-envy free sharing, respectively.
It fits our intuition that the midpoint rule is fair.
A further analysis of our mechanism 2
and developing fair clinching rules by means of the envy-freeness
deserve interesting future research.

Finally we mention some of remaining issues and future research directions.
	\begin{itemize}
		\item	Our mechanisms terminate 
		when all buyers have no demand. 
		The required number of iterations is obviously 
		bounded by $\sum_{i\in N}\lceil v'_i/\varepsilon \rceil$.
		In the computational complexity point of view, 
		our mechanisms are not  
		polynomial time algorithms, though each iteration 
		can be done in polynomial time (Theorem~\ref{greedy}).
		This is already the case for
		one-sided clinching-type mechanisms~\cite{DLN2012,GMP2015}.
		Acceleration of these mechanisms has not been studied so far.
		It would be interesting 
		to improve clinching-type mechanisms 
		to terminate in polynomial number of iterations.
		\item 	In this paper, we dealt with divisible goods. 
		We do not know whether our framework and results can be extended to auctions
		with indivisible goods. 
		Even if each $f_{j}\ (j\in M)$ is integer-valued, 
		$f_{w,d}$ is not necessarily integer-valued. 
		This prevents our mechanisms from dealing with indivisible goods.
		Notice that this is already the case for the original polyhedral clinching auction. 
		To extend our framework for indivisible goods, 
		the following work may be instrumental:
		Fiat et al. \cite{FLSS2011} presented 
		a mechanism enjoying (IC), (IR), and (PO) for one-sided markets, 
		where the seller has several kinds of goods and one indivisible unit for each goods. 
		Colini-Baldeschi et al. \cite{BHLS2015} extended 
		it to the multiple unit case.
		Goel et al. \cite{GLMNP2016} addressed two-sided markets with indivisible goods
		 and presented a clinching-type mechanism, though buyers are non-budgeted.

		\item 
		It would be interesting to study further generalizations 
		of our framework.
		For example, suppose that 
		each buyer has a preference on sellers.
		In this situation, our mechanism~2 
		can naturally incorporate the preference.
		Indeed, each buyer can clinch a maximal vector according to 
		the greedy algorithm using the ordering of his preference.
		Adding polymatroid constraints to both buyers and sellers is 
		also a natural generalization 
		to which the polymatroidal-flow approach could be applicable.
		Although extending polymatroid constraints to general polyhedral constraints	can cause an impossibility theorem (already for one-sided markets)~\cite{GMP2015},  
			several nice classes of 
			polyhedra generalizing polymatroids are known  
			in the literature of combinatorial optimization.
			{\em Bisubmodular polyhedra} are 
			such a generalization of polymatroids; see~\cite[Section 3.5]{F2005}.
			This polyhedron shares several nice common features with polymatroid, 
			such as greedy algorithm and several polynomial-time operations.
			It may be interesting to explore 
			the power of bisubmodular polyhedra 
			for the design of auctions and mechanisms.
	\end{itemize}

\section{Proofs}\label{sec:proof}
	In this section, we give 
	proofs of claimed results in Section~\ref{sec:mechanisms}.
	In the market $(N,M,E)$, 
	for node subset $Z$, let $E_Z$ denote the set of
	edges incident to nodes in $Z$.
	For edge subset $F$, let $N_F$ 
	denote the set of 
	buyers incident to $F$.

\subsection{Proof of Theorem \ref{greedy} and Proposition \ref{IRs}}\label{subsec:greedy}

	Let $d$ be the 
	demand vector, and $w$ the 
	transaction vector in line 4 in the current iteration.
	We start to study polytopes $P_{w,d}$, $P_{w,d}^i(\xi)$, and
	$P_{w,d}^i$ from the viewpoint of polymatroidal network flow.
	For each seller $j$, define polytope $P_{j,w} \subseteq \mathbb{R}_+^{E_j}$ by
	\begin{equation*}
	P_{j,w}:=\{x\in \mathbb R_{+}^{E_j}\mid w|_{E_j}+x\in P_j\}	.
	\end{equation*}
	Since $P_{j,w}$ is a
	contraction of $P_j$ (see; e.g. Fujishige \cite[Section 3.1]{F2005}), $P_{j,w}$ is a polymatroid, and the corresponding 
	submodular function $f_{j,w}:2^{E_j}\to\mathbb R_+$ 
	is given by
	\begin{equation*}
	f_{j,w}(F):=\min_{F'\supseteq F}\{f_j(F')-w(F')\}\quad (F\subseteq E_j).
	\end{equation*}
	This fact is also observed 
	from (i) $x(F') \leq f_j(F') - w(F')$ implies $x(F) \leq f_j(F') - w(F')$ for $F \subseteq F'$ by the nonnegativity of $x$, 
	(ii) $f_{j}(F') - w(F') \geq 0$ since $w$ is a feasible transaction, and
	(iii) if $f_{j,w}(F) = f_j(F')-w(F')$ and $f_{j,w}(G) = f_j(G') - w(G')$, 
	then 
	\begin{eqnarray}
	&& f_{j,w}(F) + f_{j,w}(G) \nonumber \\
	&& = f_j(F')- w(F') + f_j(G') - w(G') \nonumber \\
	&& \geq  f_j(F' \cap G') - w(F' \cap G') +  f_j(F' \cup G') - w(F' \cup G') \nonumber \\
	&& \geq f_{j,w}(F \cap G) + f_{j,w} (F \cup G). \label{eqn:FG}
	\end{eqnarray}
	We also define 
	$f, f_w:2^{E}\to \mathbb R_+$ by
	\begin{align*}
	f(F)&:=\sum_{j\in M}f_j (E_j\cap F) 
	\quad(F\subseteq E),\\
	f_w(F)&:=\min_{F'\supseteq F}\{f(F')-w(F')\}
	=\sum_{j\in M}f_{j,w}(E_j\cap F) \quad (F\subseteq E). 
	\end{align*}
	Note that both $f$ and $f_w$ are 
	monotone submodular. In particular, $f$  
	corresponds to the polymatroid $P$
	(by a theorem in McDiarmid \cite{MD1975}).

	Let us construct a polymatroidal network 
	from the market $(N,M,E)$, demand $d$, 
	and polymatroids $P_{j,w}$ for each $j\in M$;  see Figure \ref{network_2}.
	\begin{figure}[t]
		\centering
		\includegraphics[width=10cm]{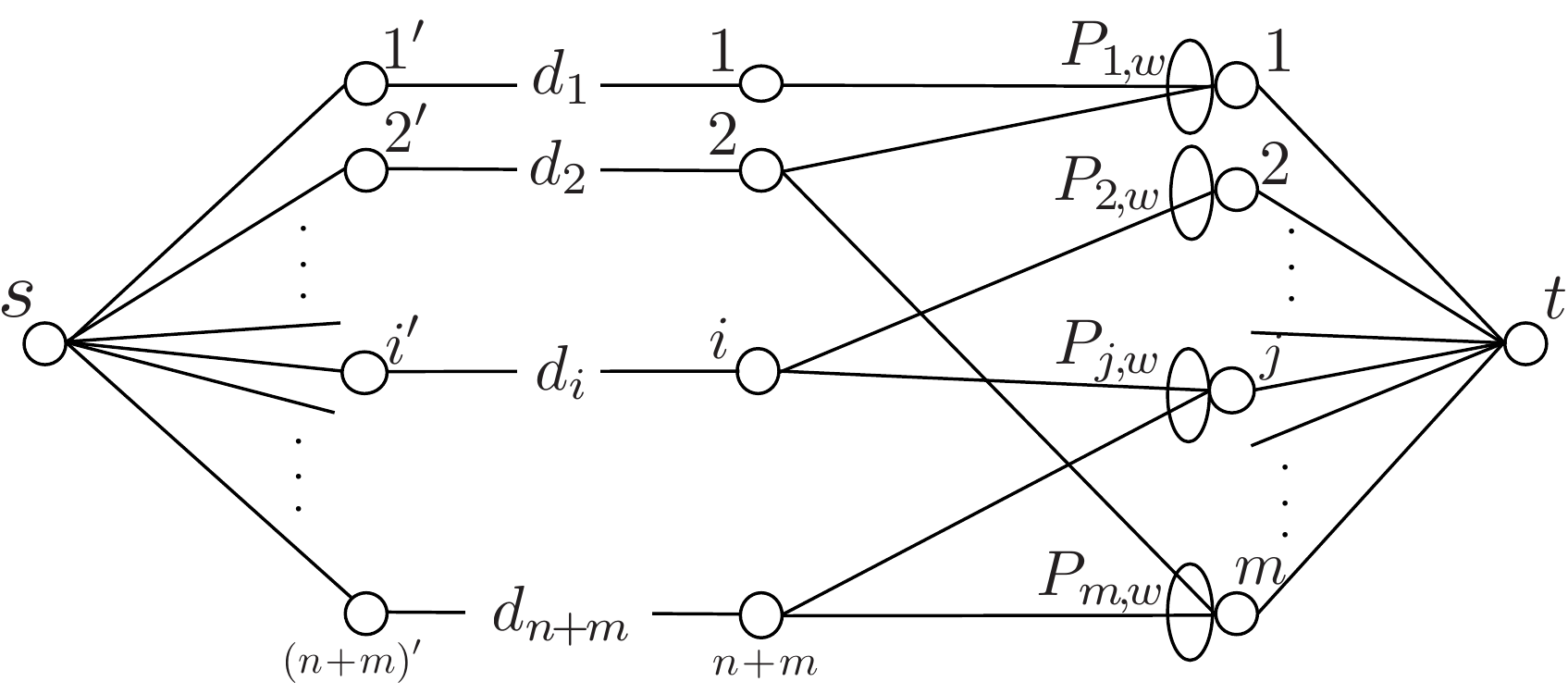}
		\caption{The polymatroidal network $\mathcal N$}
		\label{network_2}
	\end{figure}
	Each edge in $E$ is directed from $N$ to $M$.
	For each buyer $i$, consider copy $i'$ of $i$ and (directed) edge $i'i$.
	The set of the copies of $X\subseteq N$ is denoted by $X'$.
	Add source node $s$ and edge $si'$ for each $i'\in N'$.
	Also, add sink node $t$ and edge $jt$ for each $j\in M$.
	Edge-capacity $c$ is defined by 
	\begin{align}
	\label{capacity}
	c(e):=
	\begin{cases}
	d_i\quad \,{\rm if}\ e=i'i \ 
	{\rm for}\ i\in N, \\
	\infty\quad {\rm otherwise}.	
	\end{cases}
	\end{align}
	Two polymatroids of each node  are defined as follows.
	For $j\in M$, the polymatroid on $\delta^{-}_{j}=E_j$ is
	defined as $P_{j,w}$, and 
	the other polymatroid on $\delta^+_{j}$ is 
	defined by the capacity (\ref{capacity}) as in (\ref{eqn:capacity}).
	Two polymatroids of other nodes are defined by
	the capacity. 
	Let $\mathcal N$ denote the resulting 
	polymatroidal network.

	Now the remnant supply polytope $P_{w,d}$ 
	is written by using flows in $\mathcal N$:
	\begin{equation*}
	P_{w,d}=\{x\in \mathbb R_+^{E}\mid
	\exists \varphi:{\rm flow\ in}\ \mathcal N, x=\varphi|_E\}.
	\end{equation*}
	In the proof, the following function 
	$f_{w,d}:2^{E}\to \mathbb R_+$ plays a key role
	\begin{equation}
	\label{max-flow}
	 f_{w,d}(F):=\max_{x\in P_{w, d}} x(F)
	 =\max_{\varphi:\,
	 {\rm flow}\, {\rm in}\, \mathcal N}\varphi(F)\quad (F\subseteq E).
	\end{equation}
	Then we obtain the following:
	\begin{theorem}
	\label{property_f_wd}
	For $F\subseteq E$, it holds 
	\begin{equation}
	\label{f_wd}
	 f_{w,d}(F)=\min_{X\subseteq N_F}\{ f_w(E_X\cap F)+d(N_F\setminus X)\},
	\end{equation}
	and $f_{w,d}(F)$ can be computed 
	in polynomial time,
	provided the value oracle of each $f_j$ is available.
	\end{theorem}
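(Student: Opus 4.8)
The plan is to interpret the quantity $f_{w,d}(F)$ as a max-flow value in a suitably modified polymatroidal network and then apply the Lawler--Martel min-cut formula (Theorem~\ref{max-flow-min-cut}). By~(\ref{max-flow}), $f_{w,d}(F)=\max_{\varphi}\varphi(F)$, where $\varphi$ ranges over flows in $\mathcal N$. The trick for turning ``$\max \varphi(F)$'' into a genuine max-flow value is to cut the network at $F$: build a new network $\mathcal N_F$ by deleting all edges of $E\setminus F$ together with the node $t$ and its entering edges $jt$, and then attaching a fresh sink $t_F$ with an edge $jt_F$ of infinite capacity for every seller $j$ incident to $F$. A flow in $\mathcal N$ restricted to the edges that survive is exactly a flow in $\mathcal N_F$ whose value equals $\varphi(F)$ (by flow conservation at the sellers, the flow leaving seller $j$ toward $t_F$ equals $\varphi(E_j\cap F)$, and summing gives $\varphi(F)$); conversely any flow in $\mathcal N_F$ extends to a flow in $\mathcal N$ of the same $F$-value by pushing the leftover capacity on the deleted edges however we like. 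Hence $f_{w,d}(F)$ is the max-flow value from $s$ to $t_F$ in $\mathcal N_F$, which is computable in polynomial time by the polymatroidal network flow algorithms cited after Theorem~\ref{max-flow-min-cut}, given the value oracles for the $f_j$ (the contracted submodular functions $f_{j,w}$ are evaluable in polynomial time by the displayed min-formula for $f_{j,w}$).

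For the explicit formula~(\ref{f_wd}), I would apply Theorem~\ref{max-flow-min-cut} to $\mathcal N_F$. A minimum cut is specified by a node set $U$ with $s\in U$, $t_F\notin U$ and a bipartition $\{A,B\}$ of the edges leaving $U$. The infinite capacities force the structure of $U$: the source edges $si'$, the seller-to-sink edges $jt_F$, and (for $i\notin N_F$ the whole edge $i'i$; for $i\in N_F$ only when we do not want to pay $d_i$) certain edges cannot be cut cheaply, so the only edges that a finite cut can profitably use are the copy edges $i'i$ for $i\in N_F$ (capacity $d_i$) and the polymatroidal contributions $f_{j,w}$ at the sellers. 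Concretely, letting $X\subseteq N_F$ be the set of buyers $i$ for which $i'\in U$ but $i\notin U$ (so the cut pays $d_i$ on $i'i$), the remaining contribution is $\sum_j f_{j,w}(\delta^-_j\cap A)$ over the seller side, and optimizing the bipartition $\{A,B\}$ at each seller independently collapses to $f_w(E_X\cap F)=\sum_j f_{j,w}(E_j\cap E_X\cap F)$. This yields the cut value $f_w(E_X\cap F)+d(N_F\setminus X)$, and minimizing over $X\subseteq N_F$ gives the right-hand side of~(\ref{f_wd}); the matching lower bound $f_{w,d}(F)\le f_w(E_X\cap F)+d(N_F\setminus X)$ for every $X$ is immediate since any $x\in P_{w,d}$ satisfies $x(F)=x(E_X\cap F)+x(F\setminus E_X)\le f_w(E_X\cap F)+\sum_{i\in N_F\setminus X}x(E_i)\le f_w(E_X\cap F)+d(N_F\setminus X)$.

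The main obstacle I anticipate is the bookkeeping in the cut analysis: one must argue carefully that an optimal cut in $\mathcal N_F$ never needs to place a buyer node $i$ on the sink side while keeping its copy $i'$ on the source side in a way that costs more than $\min\{d_i, (\text{seller contribution})\}$, and that the per-seller bipartition optimization decouples cleanly so that $\sum_j\min_{A_j} f_{j,w}(A_j)$ over admissible $A_j$ reassembles into $f_w(E_X\cap F)$. This is where monotonicity and the ``min over supersets'' definition of $f_{j,w}$ (and hence of $f_w$) are used: because $f_w$ is already the lower envelope obtained by contracting, the sellers have no incentive to include extra edges in $A$, so the minimizing $A$ at seller $j$ is exactly $E_j\cap E_X\cap F$. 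Once the cut structure is pinned down, the polynomial-time claim follows because~(\ref{f_wd}) is a submodular-function minimization over subsets of $N_F$ (the map $X\mapsto f_w(E_X\cap F)+d(N_F\setminus X)$ is submodular), solvable in polynomial time with the available oracles; alternatively, and more directly, it follows from running a polymatroidal max-flow algorithm on $\mathcal N_F$.
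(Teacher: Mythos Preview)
Your approach is essentially identical to the paper's: both modify the network so that $f_{w,d}(F)$ becomes a genuine max-flow value (the paper zeros out capacities on $E\setminus F$ and on $i'i$ for $i\notin N_F$ rather than deleting edges and re-attaching a sink, but these are equivalent), apply the Lawler--Martel min-cut formula, and observe that an optimal $U$ has the form $\{s\}\cup N'\cup X$ for some $X\subseteq N_F$. One small bookkeeping slip to fix when you write it out: the set you call $X$ (buyers with $i'\in U$ but $i\notin U$) is the set on which the cut pays $d_i$, hence it is $N_F\setminus X$ in the target formula, while the buyers with $i\in U$ are the ones whose edges $E_i\cap F$ feed the $f_w$ term.
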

	\begin{proof}

	We modify $\mathcal N$ so that $f_{w,d}(F)$ is 
	equal to the maximum flow value. Change the 
	capacity of 
	each $e\in E\setminus F$ to $0$. Also, for each 
	$i\in N\setminus N_F$, change $c(i'i)$ to $0$.
	Then the maximum flow value is equal to $f_{w,d}(F)$.
	By Theorem \ref{max-flow-min-cut}, the value 
	$f_{w,d}(F)$ is equal to 
	(\ref{max-flow-min-cut2}).
	Observe that subset $U$ attaining the minimum can take a form 
	of $(N'\cup X)+s$ for $X\subseteq N_F$.
	Then the set~$\delta^+_U$ of edges leaving $U$ is equal to   
	$\{i'i \mid i\in N_F\setminus X\} \cup E_X$.
	In a partition $\{A,B\}$ of $\delta^+_U$ 
	attaining minimum, 
	edge $i'i$ contributes $d_i$ in 
	both cases of $i'i\in A$ 
	and $i'i\in B$.	
	Each edge $e$ in $E_X \cap F$ must be in $A$ (by $c(e)=\infty$), and each edge  $e$
	in $E_X\setminus F$ can be in $B$
	(by $c(e)=0$ and monotonicity of $f_w$). Thus we have~(\ref{f_wd}).

	 In the modified network ${\cal N}$, $f_{w,d}(F)$ 
	can be computed by solving a maximum polymatroidal flow problem.
	There are several polynomial time algorithms for this problem; 
	see Fujishige \cite[Section 5.5]{F2005}.
	\end{proof}

	Let $\xi \in \mathbb R^{E_i}_+$.
	We consider the remnant supply polytope $P^{i}_{w,d}(\xi)$ of remaining buyers except $i$
	and the clinching polytope $P^{i}_{w,d}$ of $i$; see (\ref{remnant_supply_i}) and (\ref{eqn:clinching_polytope}) for definition.
	Define $h^i_{\xi}:2^{N-i} \to \mathbb R_{+}$  and $h_{w,d}^{i}:2^{E_i} \to \mathbb {R}_+$ by
	\begin{align}
	h^{i}_{\xi}(Y)& :=\min_{F\subseteq E_i}\{ f_{w,d}(E_Y\cup F)-
		\xi(F)\}\quad (Y\subseteq N-i), \label{eqn:h^i_xi} \\
		h_{w,d}^{i}(F)& := f_{w,d}(E_{N-i}\cup F)-
		f_{w,d}(E_{N-i})\quad (F\subseteq E_i). \nonumber
	\end{align}
	Our goal is to show the following:
	\begin{theorem}
	\label{h^i_xi}
	\begin{itemize}
	\item[{\rm (i)}] $P^{i}_{w,d}(\xi)$ is nonempty if and only if 
	$h^{i}_{\xi}$ is nonnegative valued; 
	 in this case,  
	$h^{i}_{\xi}$ is monotone submodular on $N-i$, and 
	$P^{i}_{w,d}(\xi)$ coincides with 
	the polymatroid associated with $h^i_{\xi}$.
	\item[{\rm (ii)}] $h_{w,d}^{i}$ is monotone submodular on $E_i$, and 
	$P^i_{w,d}$ coincides with 
	the polymatroid associated with~$h_{w,d}^{i}$.
	\end{itemize}
	\end{theorem}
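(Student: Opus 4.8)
The plan is to route everything through the polymatroid structure of the flow polytope $P_{w,d}$. First I would record the starting point: by the polymatroidal network flow theory of Lawler and Martel (Theorem~\ref{max-flow-min-cut}), together with McDiarmid's network induction \cite{MD1975}, the set $P_{w,d}=\{\varphi|_E\mid \varphi\text{ a flow in }\mathcal N\}$ is a polymatroid whose rank function is exactly $f_{w,d}$ of (\ref{max-flow}); in particular $f_{w,d}$ is monotone submodular on $E$, and $x\in P_{w,d}$ iff $x\ge 0$ and $x(F)\le f_{w,d}(F)$ for all $F\subseteq E$. Since $(N,M,E)$ is bipartite, $E_i$ and $E_{N-i}:=E\setminus E_i$ partition $E$, and $E_Y=\bigcup_{k\in Y}E_k$ for $Y\subseteq N-i$; I will use these facts freely below.

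For (i) I would exhibit $P^i_{w,d}(\xi)$ as the composition of two standard polymatroid operations applied to $P_{w,d}$. Fixing the $E_i$-block to $\xi$: the set $\{z\in\mathbb R^{E_{N-i}}_+\mid (\xi,z)\in P_{w,d}\}$ is nonempty iff $\xi(F)\le f_{w,d}(F)$ for all $F\subseteq E_i$ (equivalently $h^i_\xi(\emptyset)\ge 0$), because $P_{w,d}$ is down-closed so its restriction to $E_i$ is $P(f_{w,d}|_{E_i})$; and when nonempty it equals the polymatroid on $E_{N-i}$ with rank $T\mapsto \min_{F\subseteq E_i}(f_{w,d}(F\cup T)-\xi(F))$, the routine ``$\xi$-reduction'' whose submodularity is partial minimization of a submodular function over a separate block (see \cite{F2005}). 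Aggregating the $E_{N-i}$-coordinates along the partition $\{E_k\}_{k\in N-i}$ is a network (partition) induction, so the image is again a polymatroid, with rank $Y\mapsto \min_{F\subseteq E_i}(f_{w,d}(E_Y\cup F)-\xi(F))=h^i_\xi(Y)$. Composing gives $P^i_{w,d}(\xi)=P(h^i_\xi)$; monotonicity of $h^i_\xi$ follows from that of $f_{w,d}$, so $h^i_\xi(\emptyset)\ge 0$ forces $h^i_\xi\ge 0$ everywhere, giving the claimed equivalence. (The inclusion $P^i_{w,d}(\xi)\subseteq P(h^i_\xi)$ is also immediate by hand: for a witness $x$ and any $Y,F$, $u(Y)=x(E_Y)\le x(E_Y\cup F)-x(F)=x(E_Y\cup F)-\xi(F)\le f_{w,d}(E_Y\cup F)-\xi(F)$.)

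For (ii), note first that $P_{w,d}$ is down-closed, so zeroing out the $E_i$-block of any witness shows $P^i_{w,d}(\xi)\subseteq P^i_{w,d}(0)$; hence $\xi\in P^i_{w,d}$ iff $P^i_{w,d}(0)\subseteq P^i_{w,d}(\xi)$. Using (i) with $\xi=0$ we get $P^i_{w,d}(0)=P(h^i_0)$ with $h^i_0(Y)=f_{w,d}(E_Y)$, so its base polytope is nonempty; for the forward direction take $u$ in that base polytope, so $u(N-i)=f_{w,d}(E_{N-i})$, and a witness $x\in P_{w,d}$ with $x|_{E_i}=\xi$, $x(E_k)=u_k$ yields, for every $F\subseteq E_i$, $\xi(F)+f_{w,d}(E_{N-i})=x(F\cup E_{N-i})\le f_{w,d}(F\cup E_{N-i})$, i.e., $\xi(F)\le h^i_{w,d}(F)$, so $\xi\in P(h^i_{w,d})$. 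For the converse, suppose $\xi(F)\le h^i_{w,d}(F)$ for all $F\subseteq E_i$, take any $u\in P^i_{w,d}(0)$ with witness $x^0$ ($x^0|_{E_i}=0$, $x^0(E_k)=u_k$), and set $x:=x^0+\xi$ on $E_i$ and $x:=x^0$ on $E_{N-i}$. For $S=S_i\sqcup S'$ with $S_i\subseteq E_i$, $S'\subseteq E_{N-i}$, submodularity of $f_{w,d}$ gives $\xi(S_i)\le f_{w,d}(E_{N-i}\cup S_i)-f_{w,d}(E_{N-i})\le f_{w,d}(S'\cup S_i)-f_{w,d}(S')$, whence $x(S)=\xi(S_i)+x^0(S')\le f_{w,d}(S)$; thus $x\in P_{w,d}$, $x|_{E_i}=\xi$, $x(E_k)=u_k$, so $u\in P^i_{w,d}(\xi)$. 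This proves $P^i_{w,d}=P(h^i_{w,d})$, and $h^i_{w,d}$ is monotone submodular with $h^i_{w,d}(\emptyset)=0$ since $f_{w,d}$ is monotone submodular.

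I expect the main obstacle to be the careful bookkeeping in part (i): pinning down that the combined ``fix the $E_i$-block, then aggregate buyer-coordinates'' operation on the flow polymatroid $P_{w,d}$ yields precisely $P(h^i_\xi)$ — in particular the surjectivity of that map onto $P(h^i_\xi)$ (the nontrivial inclusion), and handling the degenerate cases when $\xi$ is not dominated by a feasible transaction, where $P^i_{w,d}(\xi)$ becomes empty. Once (i) is established, part (ii) is the short marginal-inequality construction above.
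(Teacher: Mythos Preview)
Your argument rests on the claim that $P_{w,d}$ is a polymatroid with rank function $f_{w,d}$, in particular that $f_{w,d}$ is submodular on all of $E$. This is false, and the paper explicitly flags it (see the remark after Lemma~\ref{submodularity}). A concrete counterexample: take $N=\{1,2\}$, $M=\{a,b\}$, $E=N\times M$, each seller with one unit of stock ($f_a\equiv f_b\equiv 1$ on nonempty sets), $w=0$, $d_1=d_2=1$. Then $f_{w,d}(\{1a,1b\})=f_{w,d}(\{1a,2a\})=1$ (demand and stock constraints, respectively), but $f_{w,d}(\{1a,1b,2a\})=2$ and $f_{w,d}(\{1a\})=1$, so $1+1<2+1$ violates submodularity. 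Equivalently, $P_{w,d}$ fails the polymatroid exchange axiom: $x=(1,0,0,0)$ and $y=(0,1,1,0)$ are both feasible with $x(E)<y(E)$, yet neither $1b$ nor $2a$ can be increased from $x$.

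This breaks both parts of your proof. In~(i), the ``$\xi$-reduction'' step (partial minimization over a block) only preserves submodularity when the ambient function is submodular; here it is not, so your derivation of the submodularity of $h^i_\xi$ fails. In~(ii), the decreasing-marginals inequality $f_{w,d}(E_{N-i}\cup S_i)-f_{w,d}(E_{N-i})\le f_{w,d}(S'\cup S_i)-f_{w,d}(S')$ is exactly the submodularity you do not have; in the example above with $i=1$, $S_i=\{1a\}$, $S'=\{2a\}$, it reads $1\le 0$. Your explicit construction $x:=x^0+\xi$ also fails: with $\xi=(1,0)$ and witness $x^0=(0,0,1,0)$ for $u_2=1$, the sum $x=(1,0,1,0)$ violates seller $a$'s stock constraint. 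A correct witness $(1,0,0,1)$ exists, but finding it requires rerouting, not just adding $\xi$.

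The paper avoids this by never asserting global submodularity of $f_{w,d}$. For~(i) it rederives the inequality system for $P^i_{w,d}(\xi)$ directly from max-flow/min-cut on a modified network, identifies the right-hand side with $h^i_\xi$ through a combinatorial calculation, and then observes that $h^i_\xi$ is a network induction of the genuinely submodular $f_w$ (not $f_{w,d}$). For~(ii) it reduces $P^i_{w,d}=P^i_{w,d}(0)$ to the inequality $\xi(F)\le f_{w,d}(E_X\cup F)-f_{w,d}(E_X)$ for all $X\subseteq N-i$, and proves separately (Lemma~\ref{submodularity}) that the two particular restrictions $X\mapsto f_{w,d}(E_X\setminus G)$ and $F\mapsto f_{w,d}(E_X\cup F)$ are submodular, by case analysis on the min-cut formula~(\ref{f_wd}). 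Your high-level strategy of ``fix a block, then aggregate'' is the right intuition, but it must be executed through $f_w$ and these restricted submodularity lemmas rather than through a nonexistent global submodularity of $f_{w,d}$.
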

	 Theorem \ref{greedy} is an immediate consequence
	 of Theorems \ref{property_f_wd} and \ref{h^i_xi}~(ii).
	\begin{proof} 
		(i).
	We utilize the polymatroidal network $\mathcal N$.
	Then $u\in P_{w,d}^{i}(\xi)$ if and only if 
	there exists a flow $\varphi$ in $\mathcal N$ 
	such that $\varphi(ij)=\xi_{ij}$
	for $ij\in E_i$, $\varphi(i'i)=\xi(E_i) \leq d_i$, and
	$\varphi(sk')=u_k$ for $k\in N-i$.
	By modifying $\mathcal N$, 
	we give an equivalent condition as follows.
	Replace $ij\in E_i$ by $sj$ with capacity $\xi_{ij}$,
	change capacity $c(sk')$ of $sk'$ to $u_k$ for $k\in N-i$,
	and delete nodes $i'$ and $i$. 
	By $\xi(E_i)\leq d_i$ and Theorem~\ref{max-flow-min-cut}, the above condition  for $u\in P_{w,d}^{i}(\xi)$ can be  
	written as:
	\begin{itemize}
	\item The maximum flow value is equal to 
	$\xi(E_i)+u(N-i)$, or 
	equivalently, $U=\{s\}$ and $(A,B)=(\emptyset, \delta^{+}_s)$ attain the minimum of
	(\ref{max-flow-min-cut2}).
\end{itemize}
	In the minimum of (\ref{max-flow-min-cut2}), 
	it suffices to consider $U$ of form $U=(Y'\cup Z)+s$ 
	for $Z\subseteq Y\subseteq N-i$. 
	(If $k \in U \not \ni k'$, then we can remove $k$ from $U$ 
	without increasing the cut value.)
	Then the set $\delta^+_U$ of edges leaving $U$ 
	is equal to   
	$E_i\cup \{sk'\mid k\in N\setminus (Y+i)\}\cup \{k'k \mid k\in Y\setminus Z\} \cup E_Z$.
	In any partition $\{A,B\}$ of $\delta^+_U$,
	edge $sk'$ contributes $u_k$, and
	$k'k$ contributes $d_k$.
	It must hold $E_Z\subseteq A$.
	Thus it suffices to consider 
	a partition $\{F,E_i\setminus F\}$ of $E_i$.	
	Then $u\in P^i_{w,d}(\xi)$ if and only if
	\begin{align*}
	\xi(E_i)+u(N-i) \leq 
	f_w(E_Z\cup F)+\xi(E_i\setminus F)
	+u((N-i)\setminus Y)+d(Y\setminus Z)
	\end{align*}
	for $F\subseteq E_i$ and $Z\subseteq Y\subseteq N-i$.
	This is further equivalent to 
	\begin{equation}
	\label{u_y}
	 u(Y)\leq\min_{Z\subseteq Y,F\subseteq E_i}\{f_w(E_Z\cup F)+d(Y\setminus Z)-\xi(F)\}\quad (Y\subseteq N-i).
	\end{equation}
	In particular, $P^i_{w,d}(\xi)$ is nonempty if and only if  the right hand side of (\ref{u_y}) is nonnegative.
	Define $\tilde{h}^i_{\xi}:2^{N-i}\to\mathbb R_+$ by the right hand side of
	(\ref{u_y}). 
	It suffices to show that $\tilde{h}^i_{\xi}$ is equal to $h^i_\xi$, and 
	is monotone submodular if it is nonnegative valued.
	
	We first show $h^i_{\xi}(Y)=\tilde{h}^i_{\xi}(Y)$ for $Y \subseteq N- i$.
	By substituting the formula (\ref{f_wd}) 
	of $f_{w,d}$ for the definition~(\ref{eqn:h^i_xi}) of $h^{i}_\xi$,  
	we obtain
	\begin{align}
	\label{hixi}
	h^{i}_{\xi}(Y)&=\min_{F\subseteq E_i}\{\min_{Z'\subseteq N_{E_Y \cup F}}\{f_w(E_{Z'}\cap (E_Y\cup F))+d(N_{E_Y\cup F}\setminus Z')\}-\xi(F)\}.
	\end{align}
	Observe that 
	\begin{equation*}
	N_{E_Y \cup F} = 
	\left\{ 
	\begin{array}{cc}
	Y + i & {\rm if}\ F \neq \emptyset, \\
	Y  & {\rm if}\ F = \emptyset.   
	\end{array}
	\right.
	\end{equation*}
	Therefore the minimum in (\ref{hixi}) is 
	taken over $(F,Z')$ with $\emptyset \neq F \subseteq E_i$, $Z' \subseteq Y +i$ and over $(F,Z')$ with $F= \emptyset$, $Z' \subseteq Y$. 
	Notice that if $F=\emptyset$, then $Z'$ and $Z'-i$
	have the same value in~(\ref{hixi}).
	Thus $h^{i}_{\xi}$ is also written as
	\begin{align}\label{hixi'}
	h^{i}_{\xi}(Y) &=\min_{Z'\subseteq Y+i,\ F\subseteq E_i}
	\{f_w(E_{Z'} \cap  (E_Y \cup F))+d(N_{E_Y\cup F}\setminus Z')-\xi(F)\}.
	\end{align}
	We consider the minimum of (\ref{hixi'}) over all $Z'\subseteq Y$ 
	and over all $Z'=Z+i$ with $Z\subseteq Y$.
	The first minimum $\alpha$ is equal to 
	\begin{align*}
	&\min_{Z'\subseteq Y,\ F\subseteq E_i}
	\{f_w(E_{Z'})+d(N_{E_Y\cup F}\setminus Z')-\xi(F)\}\\
	&=\min_{Z'\subseteq Y,\ F\subseteq E_i}
	\{f_w(E_{Z'})+d(Y\setminus Z')+d(N_F)-\xi(F)\}\\
	&=\min_{Z'\subseteq Y}
	\{f_w(E_{Z'})+d(Y\setminus Z')\},
	\end{align*}
	where $F=\emptyset$ attains the minimum since $d_i\geq \xi(F)$.
	The second minimum $\beta$ is  equal to 
	\begin{align*}
	&\min_{Z\subseteq Y,\ F\subseteq E_i}
	\{f_w(E_{Z+i}\cap (E_Y\cup F))+d(N_{E_Y\cup F}\setminus (Z+i))-\xi(F)\}\\
	&=\min_{Z\subseteq Y,\ F\subseteq E_i}
	\{f_w(E_Z\cup F)+d(Y\setminus Z)-\xi(F)\}\\
	&=\tilde{h}^i_{\xi}(Y)
	\end{align*}
	Observe that $\alpha\geq \beta$ 
	(since the first is the special case of $F = \emptyset$ in the second). 
	Thus $h^i_{\xi}(Y)=\min\{\alpha,\beta\}
	=\beta=\tilde{h}^i_{\xi}(Y)$ 
	as required.

	Suppose that $\tilde{h}^i_{\xi}$ is nonnegative valued. 
	Here $\tilde{h}^i_{\xi}(Y)$ is interpreted as 
	the maximum value of a flow from $Y'$ to $t$ 
	in ${\cal N}$ with the transaction $w$ replaced by $w + \xi$, 
	where $\xi$ is extended as 
	$\xi: E \to \mathbb {R}_+$ by $\xi(e) := 0$ for $e \in E \setminus E_i$.
	Note that $w + \xi$ is actually a feasible transaction
	by $f(F) - (w + \xi)(F) \geq f_w(F) - \xi (F) \geq 0$, where the last inequality follows from the nonnegativity of (\ref{u_y}) (with $Y = \emptyset$).
	Via Theorem~\ref{property_f_wd}, 
	the maximum flow value is equal to $\min_{Z\subseteq Y}\{f_{w+\xi} (E_Z)+d(Y\setminus Z)\}$.  Then, substitute 
	$f_{w+\xi}(E_Z) = \min_{H \supseteq E_Z} \{ f(H) - w(H) - \xi (H)\} =  \min_{H \supseteq H' \supseteq E_Z} \{ f(H) - w(H) - \xi (H')\} =
	\min_{H' \supseteq E_Z} \{ f_w(H') - \xi(H')\} =
	\min_{F \subseteq E_i}\{ f_{w}( E_Z \cup F) - \xi(F) \}$, 
	where we use $\xi(H') = \xi(E_i \cap H')$ and the monotonicity of $f_w$.
	Namely $\tilde{h}^i_{\xi}$ is viewed as a network induction 
	 of $f_w$ \cite{MD1975}, and is necessarily monotone submodular;
	one can also see the submodularity directly by taking
	minimizers of $\tilde{h}^i_{\xi}(X)$ and $\tilde{h}^i_{\xi}(Y)$ (as in (\ref{eqn:FG}))

	(ii). By (i), we have 
	$P^{i}_{w,d}(\xi)=P^{i}_{w,d}(0)$ if and only if 
	$h^{i}_{\xi}(X)=h^{i}_{0}(X)$ 
	for each $X\subseteq N-i$.
	By the monotonicity of $ f_{w,d}$, the latter condition is 
	equivalent to 
	\begin{equation*}
	(h_{\xi}^i(X) :=)\min_{F\subseteq E_i}\{ f_{w,d}(E_X\cup F)-\xi(F)\} = f_{w,d}(E_X) (= h_0^i(X) \geq 0)
	\quad (X\subseteq N-i).
	\end{equation*}
	Notice that this condition also guarantees $P_{w,d}^i(\xi) \neq \emptyset$.
	Thus  $\xi\in \mathbb R_+^{E_i}$ belongs to $P^i_{w,d}$ 
	if and only if it holds 
	\begin{align}
	 \xi(F)&\leq f_{w,d}(E_X\cup F)- f_{w,d}(E_X) \label{clinch}
	 \end{align}
	 for each $F \subseteq E_i$ and $X \subseteq N-i$.
	 The latter condition is equivalent to
	 \begin{align} 
	\xi(F) &\leq \min_{X \subseteq N-i}
	 \{ f_{w,d}(E_X\cup F)- f_{w,d}(E_X)\} \nonumber \\
	 &=\min_{X \subseteq N-i}
	 \{ f_{w,d}(E_{X+i}\setminus (E_i\setminus F))
	 - f_{w,d}(E_X\setminus (E_i\setminus F))\} \quad (X\subseteq N-i), \label{clinch0} 
	 \end{align}
	 where we use $E_X \cup F = E_{X+i} \setminus (E_i \setminus F)$ and
	 $E_X = E_X \setminus (E_i \setminus F)$ for $X \subseteq N-i$.
	 By Lemma~\ref{submodularity}~(i) shown below, 
	 the function 
	 	\begin{equation*}
	 	X \mapsto f_{w,d}(E_X\setminus (E_i\setminus F))=\\
	 	\begin{cases}
	 	f_{w,d}(E_{X-i}\cup F)\quad \ {\rm if}\ i\in X, \\
	 	f_{w,d}(E_X)         \qquad\qquad  {\rm otherwise}
	 	\end{cases}
	 	\end{equation*}
	 is monotone submodular on $N$. 
	 By (\ref{submo0}), 
	 the minimum in (\ref{clinch0}) is attained by $X = N-i$.
	 Thus we have 
	 \begin{align}
	 \xi(F) &\leq f_{w,d}(E_{N}\setminus (E_i\setminus F))-
	 f_{w,d}(E_{N-i}\setminus (E_i\setminus F)) \nonumber\\
	&= f_{w,d}(E_{N-i}\cup F)-
	 f_{w,d}(E_{N-i}) = h_{w,d}^{i}(F) \nonumber,
	\end{align}
    The monotone submodularity of $F \mapsto f_{w,d}(E_{N-i} \cup F)$ is shown in the next lemma (Lemma \ref{submodularity}~(ii)).
	This concludes that the clinching polytope $P^i_{w,d}$
	is the polymatroid associated with $h_{w,d}^{i}$, as required.
	\end{proof}

	\begin{lemma}
		\label{submodularity}
		\begin{itemize}
			\item[{\rm (i)}] For $F \subseteq E_i$, the function 
			$X \mapsto f_{w,d}(E_X\setminus (E_i\setminus F))$
			is monotone submodular on $N$.
			\item[{\rm (ii)}] For $X \subseteq N-i$, 
			the function $G\mapsto f_{w,d}(E_X\cup G) - f_{w,d}(E_X)$
			is monotone submodular on~$E_i$.	
		\end{itemize}
	\end{lemma}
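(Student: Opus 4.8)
The plan is to derive both claims from the explicit minimax formula~(\ref{f_wd}) for $f_{w,d}$, combined with two elementary facts about monotone submodular functions. Fact~(a): if $g:2^E\to\mathbb R_+$ is monotone submodular and $\{B_k\}_{k\in K}$ are \emph{pairwise disjoint} subsets of $E$, then $K'\mapsto g\bigl(\bigcup_{k\in K'}B_k\bigr)$ is monotone submodular on $K$. Fact~(b): if $\rho:2^U\to\mathbb R_+$ is monotone submodular and $d\in(\mathbb R_+\cup\{\infty\})^U$, then $A\mapsto\min_{B\subseteq A}\{\rho(B)+d(A\setminus B)\}$ is monotone submodular on $U$ (truncation of a polymatroid by a box), and $\min\{\rho,c\}$ is monotone submodular for every constant $c\ge0$. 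Recall also that $f_w(H)=\sum_{j\in M}f_{j,w}(E_j\cap H)$ and that $\{E_k\}_{k\in N}$ partitions $E$, so Fact~(a) applies to $g=f_w$ with the blocks $E_k$.

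For~(i), write $\hat F:=E_i\setminus F$ and $\phi(X):=f_{w,d}(E_X\setminus\hat F)$. First I would substitute $E_X\setminus\hat F$ into~(\ref{f_wd}) and replace the incidence set $N_{E_X\setminus\hat F}\subseteq X$ by $X$: this changes nothing, since any minimizer $Z$ extends by the buyers of $X\setminus N_{E_X\setminus\hat F}$ without affecting either term. This yields
\[
\phi(X)=\min_{Z\subseteq X}\bigl\{\,f_w(E_Z\setminus\hat F)+d(X\setminus Z)\,\bigr\}.
\]
By Fact~(a), $Z\mapsto f_w(E_Z\setminus\hat F)=\sum_{j}f_{j,w}\bigl((E_j\cap E_Z)\setminus\hat F\bigr)$ is monotone submodular on $N$ (the sets $E_k\setminus\hat F$ are still pairwise disjoint), and then Fact~(b), truncating this function by $d$, shows $\phi$ is monotone submodular on $N$.

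For~(ii), fix $X\subseteq N-i$ and set $\psi(G):=f_{w,d}(E_X\cup G)$ for $G\subseteq E_i$. I would substitute $E_X\cup G$ into~(\ref{f_wd}), take the minimizing set in the form $Y\subseteq X\cup\{i\}$, and split on whether $i\in Y$; using $E_i\cap E_X=\emptyset$ this collapses to
\[
\psi(G)=\min\Bigl\{\,f_{w,d}(E_X)+d_i,\ \ \min_{Z\subseteq X}\{f_w(E_Z\cup G)+d(X\setminus Z)\}\,\Bigr\}.
\]
By Fact~(a), $(Z,G)\mapsto f_w(E_Z\cup G)$ is monotone submodular on the ground set $X\sqcup E_i$ (the blocks $E_k$ for $k\in X$ together with the singletons of $E_i$ are pairwise disjoint); hence by Fact~(b) the inner minimum is the truncation of this function by the vector that equals $d$ on the $X$-coordinates and $\infty$ on the $E_i$-coordinates, evaluated at $X\cup G$, which in particular is monotone submodular in $G$. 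Taking the minimum with the constant $f_{w,d}(E_X)+d_i$ keeps $\psi$ monotone submodular on $E_i$, and subtracting the constant $\psi(\emptyset)=f_{w,d}(E_X)$ gives the claimed function, its nonnegativity being immediate from monotonicity of $\psi$.

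The main obstacle I anticipate is exactly this normalization step in each part: rewriting $f_{w,d}(E_X\setminus(E_i\setminus F))$ and $f_{w,d}(E_X\cup G)$ from~(\ref{f_wd}) into the ``disjoint blocks $+$ box-truncation'' form, which needs care with the incidence sets $N_{(\cdot)}$ and, in~(ii), with the possibly-vanishing demand $d_i$ of buyer~$i$ (the term $f_{w,d}(E_X)+d_i$ coming precisely from the case $i\notin Y$). Once the normal form is reached, only Facts~(a) and~(b) are used. As an alternative route, both functions can instead be read off as network inductions of the polymatroid $\bigoplus_{j\in M}P_{j,w}$ through a fixed polymatroidal network (the one from the proof of Theorem~\ref{property_f_wd}, with $E_i\setminus F$ deleted for~(i) and restricted to $E_X\cup E_i$ for~(ii)), where the varying set plays the role of the active source edges; monotone submodularity then follows from McDiarmid~\cite{MD1975}.
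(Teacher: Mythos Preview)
Your proof is correct. For part~(i) it coincides with the paper's argument: both recognize $X\mapsto f_{w,d}(E_X\setminus(E_i\setminus F))$ as a network induction of $f_w$ (equivalently, a box truncation of the monotone submodular function $Z\mapsto f_w(E_Z\setminus\hat F)$), and the paper in fact mentions both the formula-based and the flow-based derivations you offer.

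For part~(ii) the routes genuinely diverge. The paper verifies the submodular inequality directly: it fixes $F,G\subseteq E_i$, picks minimizers $Y,Z\subseteq X+i$ of $f_{w,d}(E_X\cup F)$ and $f_{w,d}(E_X\cup G)$ in~(\ref{f_wd}), and splits into two cases according to whether both $Y,Z$ contain $i$ (then the inequality reduces to submodularity of $f_w$) or not (then $f_{w,d}(E_X\cup G)=f_{w,d}(E_X)+d_i$ and a monotonicity/capacity bound finishes the job). Your approach is more structural: you first establish the closed form
\[
\psi(G)=\min\bigl\{\,f_{w,d}(E_X)+d_i,\ \min_{Z\subseteq X}\{f_w(E_Z\cup G)+d(X\setminus Z)\}\,\bigr\},
\]
and then observe that the inner minimum is a box truncation of the monotone submodular function $(Z,G)\mapsto f_w(E_Z\cup G)$ on $X\sqcup E_i$, so that $\psi$ is the pointwise minimum of a monotone submodular function with a constant. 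The two branches of your outer minimum are exactly the paper's two cases, so the content is the same; what you gain is that, once the decomposition is written down, only standard closure properties of submodular functions (your Facts~(a) and~(b), plus $\min\{h,c\}$) are needed, with no ad hoc inequality chain. The paper's argument, in turn, avoids introducing the auxiliary ground set $X\sqcup E_i$ and the $\infty$-extended demand vector.
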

		Note that $f_{w,d}$ is not necessarily submodular on $E$.
	\begin{proof}
		The monotonicity of both functions in (i) and (ii) is 
		immediate from the monotonicity of~$f_{w,d}$ in~(\ref{max-flow}).
		We show the submodularity.
		
		(i) Let $G := E_i \setminus F$. 
		The submodularity of $X \mapsto f_{w,d}(E_X \setminus G)$ can directly be
		shown by using formula
		(\ref{f_wd}) and taking minimizers of 
		$f_{w,d}(E_X\setminus G)$ and 
		$f_{w,d}(E_Y\setminus G)$ (as in (\ref{eqn:FG})).
		Instead,  we consider a flow interpretation. Consider $\mathcal N$,
		and remove all edges in $G$.
		Then $f_{w,d}(E_X\setminus G)$ is equal to
		the maximum value of a flow from $X'$ to $t$ in the network.
		Thus $X\to f_{w,d}(E_X\setminus G)$
		is a network induction 
		of $f_{w}$, 
		and is submodular. 
		
		(ii) Let $F,G\subseteq E_i$. 
		It suffices to consider the case 
		where $F\not\subseteq G$ and $G\not\subseteq F$.
		In particular, both $F$ and $G$ are nonempty.
		Hence $N_{E_X\cup F}=N_{E_X\cup G}=X+i$.
		Consider minimizers $Y$ and $Z$, respectively, of 
		$f_{w,d}(E_X\cup F)$ and
		$f_{w,d}(E_X\cup G)$ in (\ref{f_wd}).
		Namely $f_{w,d}(E_X\cup F)=f_w(E_Y\cap (E_X\cup F))+d((X+i)\setminus Y)$ and $f_{w,d}(E_X\cup G)=f_w(E_Z\cap (E_X\cup G))+d((X+i)\setminus Z)$.
		Here $Y, Z \subseteq X + i$.
		\\
		Case 1: Both $Y$ and $Z$ contain $i$. 
		Notice that $E_Y \cap (E_X \cup F) = E_{Y - i} \cup F$ and the same holds for $Z$.
		Then we have
		\begin{align*}
		&f_{w,d}(E_X\cup F)+ f_{w,d}(E_X\cup G)\\
		&\quad=f_w(E_{Y-i}\cup F)
		+d((X+i)\setminus Y)
		+ f_w(E_{Z-i}\cup G)+
		d((X+i)\setminus Z)\\
		&\quad\geq  f_w((E_{Y-i}\cup E_{Z
			-i})\cup (F\cup G))
		+d((X+i)\setminus (Y\cup Z))\\
		&\qquad+ f_w((E_{Y-i}
		\cap E_{Z-i})\cup(F\cap G))
		+d((X+i) \setminus (Y\cap Z)) \\
		&\quad=  f_w(E_{(Y\cup Z)-i}\cup (F\cup G))
		+d((X+i)\setminus (Y\cup Z)) \\
		&\qquad + f_w(E_{(Y\cap 
			Z)-i}\cup (F\cap G))
		+d((X+i) \setminus (Y\cap Z)) \\
		&\quad\geq  f_{w,d}(E_X\cup (F\cup G))+ f_{w,d}(E_X\cup (F\cap G)).
		\end{align*}
		Case 2: At least one of $Y$ and 
		$Z$ does not contain $i$, say $i \not \in Z$. 
		By $X+i = N_{E_X \cup G}$, we have
		\begin{align*}
		f_{w,d}(E_X\cup G)&=\min_{Z\subseteq X}
		\{f_w(E_Z\cap (E_X\cup G))+d((X+i)\setminus Z)\}\\
		&=\min_{Z\subseteq X}
		\{f_w(E_Z)+d(X\setminus Z)+d_i\}
		=f_{w,d}(E_X)+d_i.
		\end{align*}
		Then we obtain
		\begin{align*}
		f_{w,d}(E_X\cup F)+f_{w,d}(E_X\cup G)&=
		f_{w,d}(E_X\cup F)+f_{w,d}(E_X)+d_i\\
		&\geq f_{w,d}(E_X\cup (F\cap G))+f_{w,d}(E_X)+d_i\\
		&\geq f_{w,d}(E_X\cup (F\cap G))+f_{w,d}(E_X\cup (F\cup G)),
		\end{align*}
		where the first inequality follows from the monotonicity of $f_{w,d}$ and 
		the second inequality follows from $f_{w,d}(E_X\cup (F\cup G)) \leq f_{w,d}(E_X)+d_i$, i.e., the total flow on $E_X \cup (F \cup G)$
		is at most the sum of the total flow on $E_X$ and capacity $d_i$ of $i'i$.
		This concludes that $G\mapsto f_{w,d}(E_X\cup G)$ is submodular.
	\end{proof}

	By (\ref{clinch}) in the proof of Theorem~\ref{h^i_xi}~(i) 
	and Theorem~\ref{h^i_xi}~(ii), 
	we obtain the following 
	corollary.
	\begin{corollary}
	\label{clinch_limit}
	Suppose that buyer $i$ clinches 
	$\xi\in \mathbb R^{E_i}_+$ of goods.
	\begin{itemize}
		\item[{\rm (i)}] For $F\subseteq E_i$ and $X \subseteq N-i$, 
		it holds $\xi(F)\leq f_{w,d}(E_X\cup F)- f_{w,d}(E_X)$. 
	\item[{\rm (ii)}] 	Buyer $i$ gets $\xi(E_i)=f_{w,d}(E)- f_{w,d}(E\setminus E_i)$ 
	amount of goods.
    \end{itemize}
\end{corollary}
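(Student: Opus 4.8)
The plan is to obtain both parts directly from results already in hand, with essentially no new computation. Recall first that, by the clinching rule in line~5 of Algorithm~\ref{algo2} and its formalization through~(\ref{eqn:clinching_polytope}), the statement ``buyer $i$ clinches $\xi$'' means exactly that $\xi$ is a maximal vector of the clinching polytope $P^{i}_{w,d}$. By Theorem~\ref{h^i_xi}~(ii), $P^{i}_{w,d}$ is the polymatroid associated with $h^{i}_{w,d}$, and the recalled preliminaries identify the maximal points of a polymatroid $P(h)$ with its base polytope $\{x\in P(h)\mid x(E)=h(E)\}$. These two facts carry the whole proof.

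For part~(i), I would simply invoke the characterization established inside the proof of Theorem~\ref{h^i_xi}~(ii): a vector $\xi\in\mathbb R^{E_i}_{+}$ lies in $P^{i}_{w,d}$ if and only if $\xi(F)\leq f_{w,d}(E_X\cup F)-f_{w,d}(E_X)$ for all $F\subseteq E_i$ and $X\subseteq N-i$, which is precisely~(\ref{clinch}). Since the clinched $\xi$ is in particular a member of $P^{i}_{w,d}$, inequality~(i) follows at once.

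For part~(ii), I would use maximality: $\xi$ lies in the base polytope of $P^{i}_{w,d}=P(h^{i}_{w,d})$, so $\xi(E_i)=h^{i}_{w,d}(E_i)=f_{w,d}(E_{N-i}\cup E_i)-f_{w,d}(E_{N-i})$. The only elementary point to note is the set identity $E_{N-i}=E\setminus E_i$ (hence $E_{N-i}\cup E_i=E$), valid because every edge of the bipartite market has a unique buyer endpoint; substituting gives $\xi(E_i)=f_{w,d}(E)-f_{w,d}(E\setminus E_i)$, and since the goods buyer $i$ obtains in this iteration total $\sum_{ij\in E_i}\xi_{ij}=\xi(E_i)$, this is the claim. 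There is no real obstacle here: the corollary is bookkeeping on top of Theorem~\ref{h^i_xi}. The one thing worth stating carefully is that maximality of $\xi$ (not mere feasibility) is what upgrades the inequality used in~(i) to the equality in~(ii).
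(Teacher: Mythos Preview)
Your proposal is correct and matches the paper's own justification, which consists only of the single sentence preceding the corollary: part~(i) is read off from the characterization~(\ref{clinch}) derived in the proof of Theorem~\ref{h^i_xi}, and part~(ii) is the base-polytope equality $\xi(E_i)=h^{i}_{w,d}(E_i)$ from Theorem~\ref{h^i_xi}~(ii) combined with $E_{N-i}=E\setminus E_i$. You have simply spelled out the two steps the paper leaves implicit, including the observation that maximality (not mere membership) is what gives equality in~(ii); nothing more is needed.
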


Next we prove Proposition~\ref{IRs} $=$ (IRs), which is an immediate consequence 
of the following.
	\begin{lemma}
	\label{transaction}
	For each seller $j$, 
	if $c_{n+j}<\rho_j$, then $w_{ij}=0$ for all $ij \in E_j - (n+j)j$.
	\end{lemma}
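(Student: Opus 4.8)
The plan is to prove the apparently stronger statement that in \emph{every} iteration of Algorithm~\ref{algo2} in which $c_{n+j}<\rho_j$, the clinching vector of any nonvirtual buyer $i$ incident to $j$ has $\xi_{ij}=0$. Since $c_{n+j}$ is nondecreasing during the run, $c_{n+j}<\rho_j$ at the current moment forces $c_{n+j}<\rho_j$ at all earlier iterations, hence $w_{ij}=\sum_t \xi_{ij}^{(t)}=0$. I would first record the equivalence $c_{n+j}<\rho_j \iff d_{n+j}=\infty$: the virtual buyer $n+j$ has $v'_{n+j}=\rho_j$ and $B_{n+j}=\infty$, so whenever $c_{n+j}<v'_{n+j}=\rho_j$ we get $d_{n+j}=(B_{n+j}-p_{n+j})/c_{n+j}=\infty$. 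Also note that $E_i\cap E_j=\{ij\}$, since an edge incident to both buyer $i$ and seller $j$ is $ij$ itself.

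Now fix such an iteration and a nonvirtual buyer $i$ with $ij\in E_j$, and let $w,d$ be the current transaction and demand vectors. By Corollary~\ref{clinch_limit}~(i) applied with $F=\{ij\}$ and $X=N-i$ we have $\xi_{ij}\le f_{w,d}(E_{N-i}\cup\{ij\})-f_{w,d}(E_{N-i})$, so it suffices to prove $f_{w,d}(E_{N-i}\cup\{ij\})=f_{w,d}(E_{N-i})$; since $f_{w,d}$ is monotone, only the inequality ``$\le$'' needs an argument. The crucial point is that, because $f_j$ was extended by (\ref{f_j}), the contracted function $f_{j,w}$ satisfies $f_{j,w}(F)=f_{j,w}(E_j)$ for every $F\subseteq E_j$ containing the virtual edge $(n+j)j$; in particular the single edge $(n+j)j$ can carry all of seller $j$'s remaining stock inside $P_{j,w}$. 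Using this, from an arbitrary $x\in P_{w,d}$ I would construct $x'$ by setting $x'_e:=0$ for all $e\in E_i$, setting $x'_{(n+j)j}:=x_{(n+j)j}+x_{ij}$, and leaving all other coordinates unchanged. Then $x'(E_{N-i})=x(E_{N-i})+x_{ij}=x(E_{N-i}\cup\{ij\})$, so once $x'\in P_{w,d}$ is verified, maximizing over $x$ gives $f_{w,d}(E_{N-i})\ge f_{w,d}(E_{N-i}\cup\{ij\})$, hence $\xi_{ij}=0$, and summing over the iterations so far yields the lemma.

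The only real obstacle I anticipate is verifying $x'\in P_{w,d}$, which reduces to $x'|_{E_j}\in P_{j,w}$: for sellers other than $j$ the vector $x'$ is dominated coordinatewise by $x$ (so lies in the corresponding contracted polymatroid by downward-closedness), and the demand constraints hold since the only changed buyer loads are $x'(E_i)=0\le d_i$ and $x'(E_{n+j})=x_{(n+j)j}+x_{ij}\le\infty=d_{n+j}$. For $F\subseteq E_j$ one splits into two cases: if $(n+j)j\notin F$ then $x'(F)\le x(F)\le f_{j,w}(F)$; if $(n+j)j\in F$ then $x'(F)\le x(F\cup\{ij\})\le f_{j,w}(F\cup\{ij\})\le f_{j,w}(E_j)=f_{j,w}(F)$, where the first step uses $x'_{ij}=0$, the middle steps use $x\in P_{w,d}$ and monotonicity of $f_{j,w}$, and the final equality is exactly the crucial identity above. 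This is the step where the specific form of the extension (\ref{f_j}) of $f_j$ is indispensable; everything else is routine given Corollary~\ref{clinch_limit} and the monotonicity of the price clocks.
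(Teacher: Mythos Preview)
Your argument is correct. It differs from the paper's proof in two noteworthy ways. First, in applying Corollary~\ref{clinch_limit}(i) you take $X=N-i$, whereas the paper takes the much smaller set $X=\{n+j\}$ and bounds $\xi_{ij}\le f_{w,d}(\{(n+j)j,ij\})-f_{w,d}(\{(n+j)j\})$. Second, to show the relevant difference of $f_{w,d}$-values is zero, the paper evaluates both terms explicitly via the min-cut formula~(\ref{f_wd}), after first setting up an induction that assumes $w|_{E_j-(n+j)j}=0$ at the start of the iteration (so that $w(E_j)=w_{(n+j)j}$). You instead work with the primal definition $f_{w,d}(F)=\max_{x\in P_{w,d}}x(F)$ and give a direct flow-rerouting construction, pushing the $ij$-flow onto the virtual edge $(n+j)j$; your verification that $x'|_{E_j}\in P_{j,w}$ via the identity $f_{j,w}(F)=f_{j,w}(E_j)$ for $(n+j)j\in F$ is exactly where the extension~(\ref{f_j}) enters, and it is clean. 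A minor simplification: you do not need to zero out all of $E_i$; setting only $x'_{ij}=0$ already yields $x'(E_{N-i})=x(E_{N-i}\cup\{ij\})$ and keeps $x'(E_i)\le d_i$. What your route buys is that it dispenses with the inductive hypothesis on $w$ altogether and avoids the formula~(\ref{f_wd}); what the paper's route buys is a very localized computation involving only two edges. Both ultimately hinge on $d_{n+j}=\infty$ and the special shape of the extended $f_j$.
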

	This means that the goods of each buyer $j$ are sold by nonvirtual buyer 
	at price at least $\rho_j$. 
	Therefore, after transforming the allocation 
	of Algorithm \ref{algo2}
	into the allocation without virtual buyers, it holds 
	$r_j \geq \rho_j w(E_j)$.
	Then we obtain (IRs):
	$u_j(\mathcal M(\mathcal I))= 
	r_j+\rho_j(f_j(E_j)-w(E_j))\geq\rho_j f_j(E_j)$.
	\begin{proof}
		It suffices to show that 
		if $c_{n+j} < \rho_j$ and $w|_{E_j - (n+j)j} = 0$ then  
		$\xi|_{E_j - (n+j)j} =0$; consequently 
		$w|_{E_j - (n+j)j} = 0$ until $c_{n+j} = \rho_j$.
		By Corollary \ref{clinch_limit}~(i) for $X =\{n+j\}$ and $F=\{ij\}$ with $i \neq n+j$, 
		the clinching vector 
		$\xi\in\mathbb R^{E_i}_+$ 
		of buyer $i$ satisfies  
		\begin{align*}
		\xi_{ij}\leq 
		f_{w,d}(\{(n+j)j\}\cup \{ij\})-
		f_{w,d}(\{(n+j)j\}).
		\end{align*}
		We show that the right hand side is zero.
		Now $d_{n+j}=\infty$ by $c_{n+j} < \rho_j$.
		Hence in the formula~(\ref{f_wd}) of $f_{w,d}(\{(n+j)j\}\cup \{ij\})$, 
		the minimum is attained by $X = \{ n+j\}$ or $\{i,n+j\}$.
		Thus we have $f_{w,d}(\{(n+j)j\}\cup \{ij\}) = \min\{ f_{w}(\{(n+j)j\}\cup \{ij\}),
		f_{w}(\{(n+j)j\})+d_{i}\}$. Moreover we have
		\begin{align*}
		f_{w}(\{(n+j)j\}\cup \{ij\}) & = f_{w,j}(\{(n+j)j\}\cup \{ij\}) \\ 
		& = \min_{F \subseteq E_{j}: ij,(n+j)j \in F} f_j(F) - w(F)  \\
		& = f_j(E_j) - w_{(n+j)j}, 
		\end{align*}
		where we use $w_{ij} =0$ for $ij \in E_{j} - (n+j)j$ and the modification (\ref{f_j}) of $f_j$.
		Similarly $f_{w}(\{(n+j)j\}) =  f_j(E_j)- w_{(n+j)j}$, 
		and hence $f_{w,d}(\{(n+j)j\}\cup \{ij\}) = f_j(E_j)- w_{(n+j)j}$.
		Also $f_{w,d}(\{(n+j)j\}) = f_{w}(\{(n+j)j\}) =  f_j(E_j)- w_{(n+j)j}$.
		Thus we obtain $\xi_{ij}=0$.
	\end{proof}

\subsection{Proof of Theorem \ref{relation}}\label{subsec:relation}
	Here we give a proof of Theorem \ref{relation}. 
	Consider the transaction vector $w\in \mathbb R_+^{E}$　
	and demand vector $d\in \mathbb R_+^{N}$ in an iteration of Algorithm 2.
	Then feasible transaction $y:N \to \mathbb{R}_{+}$ 
	for the reduced market is naturally given by $y_i := w(E_i)$ $(i \in N)$.
	We consider the behavior of Algorithm 1 for this $y,d$.
	It is known in \cite{GMP2015} that 
	$\tilde{P}$ is a polymatroid, and 
	the corresponding submodular function $g$ is
	\begin{equation*}
	g(X)=f(E_X)\ \quad (X\subseteq N).
	\end{equation*}	
	Also $\tilde{P}_{y,d}$ is a 
	polymatroid, and the corresponding 
	submodular function 
	$g_{w,d}$ is given by 
	\begin{align*}
	g_{w,d}(X)&=\min_{Z\subseteq X}\{\min_{Z'\supseteq Z}\{g(Z')-w(E_{Z'})\}+d(X\setminus Z)\}.
	\end{align*}
	These facts are also obtained by considering the polymatroidal network 
	with one seller (of polymatroid constraint $\tilde P$), as in Figure~\ref{network_2}.  
	The clinch $\zeta_i$ in Algorithm \ref{algo1} is given as follows:
	\begin{theorem}[Goel et al. \cite{GMP2015}]
	\label{clinch_goel}
	In Algorithm 1, it holds 
	$\zeta_{i}=g_{w,d}(N)- g_{w,d}(N-i)$ 
	for each $i\in N$.
	\end{theorem}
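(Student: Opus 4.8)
The plan is to identify Theorem~\ref{clinch_goel} as the one-seller instance of the machinery already developed in Section~\ref{subsec:greedy}. The reduced one-sided market is an instance of our model with $|M|=1$: each buyer $i$ has a single incident edge, so $E$ is in bijection with $N$ via $i\mapsto E_i$, with $E_X$ corresponding to $X$ for $X\subseteq N$. Under this identification Algorithm~\ref{algo2} collapses to Algorithm~\ref{algo1}: the clinching vector $\xi|_{E_i}$ of buyer $i$ is a single scalar, the clinching polytope $P^i_{w,d}$ is the interval $[0,\zeta_i]$, and the condition $P^i_{w,d}(\xi)=P^i_{w,d}(0)$ becomes $\tilde P^i_{y,d}(\zeta)=\tilde P^i_{y,d}(0)$. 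Hence Corollary~\ref{clinch_limit}(ii) gives $\zeta_i=f_{w,d}(E)-f_{w,d}(E\setminus E_i)$. It then remains to check that, under $E\leftrightarrow N$, the function $f_{w,d}$ of~(\ref{f_wd}) coincides with $g_{w,d}$: since $N_{E_X}=X$ and every subset of $E$ containing $E_Z$ has the form $E_{Z'}$ with $f(E_{Z'})=g(Z')$, formula~(\ref{f_wd}) reduces to $f_{w,d}(E_X)=\min_{Z\subseteq X}\{\min_{Z'\supseteq Z}\{g(Z')-w(E_{Z'})\}+d(X\setminus Z)\}$, which is exactly the displayed expression for $g_{w,d}(X)$. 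Therefore $\zeta_i=g_{w,d}(N)-g_{w,d}(N-i)$.

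Alternatively one can argue directly, which also exhibits the structure behind Theorem~\ref{h^i_xi}(i). Since $\tilde P_{y,d}$ is the polymatroid $P(g_{w,d})$ on $N$, the set $\tilde P^i_{y,d}(\zeta)$ is the projection onto $\mathbb R^{N-i}$ of the slice $\{z\in P(g_{w,d})\mid z_i=\zeta\}$. Splitting each inequality $z(T)\le g_{w,d}(T)$ according to whether $i\in T$ shows that for $0\le\zeta\le g_{w,d}(\{i\})$ this projection is $\{\sigma\in\mathbb R^{N-i}_+\mid \sigma(Y)\le h_\zeta(Y)\ (Y\subseteq N-i)\}$ with $h_\zeta(Y):=\min\{g_{w,d}(Y),\,g_{w,d}(Y+i)-\zeta\}$, while it is empty for $\zeta>g_{w,d}(\{i\})$. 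Granting that $h_\zeta$ is monotone submodular (so that this projection equals $P(h_\zeta)$ and $h_\zeta$ is its rank function), one has $\tilde P^i_{y,d}(0)=P(h_0)=P(g_{w,d}|_{N-i})$ because $h_0=g_{w,d}|_{N-i}$ by monotonicity; two polymatroids agree iff their rank functions do, so $\tilde P^i_{y,d}(\zeta)=\tilde P^i_{y,d}(0)$ iff $h_\zeta=h_0$, that is, iff $\zeta\le g_{w,d}(Y+i)-g_{w,d}(Y)$ for every $Y\subseteq N-i$. By the diminishing-returns inequality~(\ref{submo0}) the right-hand side is minimized at $Y=N-i$, and the minimum value $g_{w,d}(N)-g_{w,d}(N-i)$ lies in $[0,g_{w,d}(\{i\})]$ (again by~(\ref{submo0})); hence the largest admissible $\zeta$, namely $\zeta_i$, equals $g_{w,d}(N)-g_{w,d}(N-i)$.

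The hard part in the direct argument is exactly the parenthetical claim: $h_\zeta$ is a minimum of two submodular functions, which need not be submodular in general. What saves it is that the two functions are $g_{w,d}(\cdot)$ and its shifted translate $g_{w,d}(\cdot+i)-\zeta$, whose gap $g_{w,d}(Y+i)-g_{w,d}(Y)$ is monotone non-increasing in $Y$ by~(\ref{submo0}); with this, a three-case analysis over which branch of the min is active at $Y$ and at $Y'$ closes the submodularity inequality. Since this is precisely the content of Theorem~\ref{h^i_xi}(i) in the general two-sided setting, I would prefer to avoid redoing it and instead route the whole proof through the first paragraph, deriving Theorem~\ref{clinch_goel} as a corollary of Theorem~\ref{h^i_xi}(i) (equivalently, of Corollary~\ref{clinch_limit}(ii)) applied to the one-seller reduced market.
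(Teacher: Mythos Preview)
The paper does not prove Theorem~\ref{clinch_goel}; it is quoted from Goel et al.~\cite{GMP2015} and used as a black box in the proof of Theorem~\ref{relation}. So there is no proof in the paper to compare against.

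Your first argument is correct and is a pleasant observation the paper does not make explicit: the reduced one-sided market is literally the $|M|=1$ instance of the two-sided model (one seller with polymatroid $\tilde P$, each buyer having a single incident edge), so Theorem~\ref{h^i_xi}(ii) and Corollary~\ref{clinch_limit}(ii) specialize to give the formula for $\zeta_i$. The identification $f_{w,d}(E_X)=g_{w,d}(X)$ you carry out is exactly what formula~(\ref{f_wd}) collapses to once every subset of $E\cong N$ has the form $E_{Z'}$ and $y(Z')=w(E_{Z'})$. In effect you recover the one-sided result of~\cite{GMP2015} as a corollary of the paper's own two-sided machinery, which reverses the logical order in which the paper presents things but is valid because Section~\ref{subsec:greedy} is self-contained and never invokes Theorem~\ref{clinch_goel}.

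Your second, direct argument also works, but the detour through the submodularity of $h_\zeta$ is unnecessary. Since $h_\zeta\le h_0$ pointwise one always has $\tilde P^i_{y,d}(\zeta)\subseteq\tilde P^i_{y,d}(0)$, and equality holds iff every $\sigma\in\tilde P^i_{y,d}(0)=P(g_{w,d}|_{N-i})$ satisfies $\sigma(Y)\le g_{w,d}(Y+i)-\zeta$ for all $Y\subseteq N-i$. Because $g_{w,d}$ is already known to be monotone submodular, $\max\{\sigma(Y):\sigma\in P(g_{w,d}|_{N-i})\}=g_{w,d}(Y)$, so the condition becomes $\zeta\le g_{w,d}(Y+i)-g_{w,d}(Y)$ for all $Y$, and~(\ref{submo0}) pins the minimum at $Y=N-i$. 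This uses only the submodularity of $g_{w,d}$, which the paper already asserts, and avoids the three-case analysis for $h_\zeta$ entirely.
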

	The goal of this section is to prove following:
	\begin{theorem}
	\label{f=g}
	In Algorithm 2, we have the following:
	\begin{itemize}
		\item [{\rm (i)}] $f_{w,d}(E_X)= g_{w,d}(X)$ holds for each $X\subseteq N$.
		\item[{\rm (ii)}]  The value $f_{w,d}(E)-f_{w,d}(E\setminus E_k)$ does not 
		change after the clinch of buyer $i\in N-k$.
	\end{itemize}
	\end{theorem}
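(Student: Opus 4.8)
The plan is to prove both parts together, by induction on the elementary steps of Algorithm~\ref{algo2} (a single clinch, or a single price/demand update), keeping as invariant that $f_{w,d}(E_X)=g_{w,d}(X)$ for all $X\subseteq N$. First I would record the cheap half of~(i): the projection $\Pi\colon\mathbb R^E\to\mathbb R^N$, $\Pi(x)_i:=x(E_i)$, sends $P_{w,d}$ into $\tilde P_{y,d}$ (if $w+x\in P$ and $x(E_i)\le d_i$ then $y+\Pi(x)=\Pi(w+x)\in\tilde P$ and $\Pi(x)\le d$), and $x(E_X)=\sum_{i\in X}x(E_i)$; since $f_{w,d}(E_X)=\max_{x\in P_{w,d}}x(E_X)$ by~(\ref{max-flow}) and $g_{w,d}(X)=\max_{z\in\tilde P_{y,d}}z(X)$ because $g_{w,d}$ is the rank function of $\tilde P_{y,d}$, this gives $f_{w,d}(E_X)\le g_{w,d}(X)$ always. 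Next, using $N_{E_X}=X$ and Theorem~\ref{property_f_wd} we have $f_{w,d}(E_X)=\min_{Z\subseteq X}\{f_w(E_Z)+d(X\setminus Z)\}$, while $g_{w,d}(X)=\min_{Z\subseteq X}\{g_y(Z)+d(X\setminus Z)\}$ with $g_y(Z):=\min_{Z'\supseteq Z}\{f(E_{Z'})-w(E_{Z'})\}$ (here $\sum_{i\in Z'}w(E_i)=w(E_{Z'})$ since the $E_i$ partition $E$). Hence the invariant is implied by the $d$-free identity $f_w(E_Z)=g_y(Z)$ $(Z\subseteq N)$, i.e.\ $\Pi(P_w)=\tilde P_y$, which involves $w$ only; so demand updates change nothing and it suffices to track what a clinch does. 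The base case $w=0$ is immediate: both sides equal $f(E_Z)$ (monotonicity of the $f_j$ kills the contractions).

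Second, I would analyse a single clinch. When buyer~$i$ clinches a maximal $\xi_i\in P^{i}_{w,d}$, regard $\xi_i$ as a vector on $E$ vanishing off $E_i$; since $P$ is down-closed one checks $\xi_i\in P_{w,d}$, and with $d'_i:=d_i-\xi_i(E_i)$, $d'_k:=d_k$ $(k\ne i)$ one verifies directly from the definitions that
\[
x'\in P_{w',d'}\iff\xi_i+x'\in P_{w,d},
\]
so $P_{w',d'}$ is the contraction of $P_{w,d}$ by $\xi_i$; the same computation in the reduced market shows $\tilde P_{y',d'}$ is the contraction of $\tilde P_{y,d}$ by $\zeta_i e_i$ (the vector with $i$-th coordinate $\zeta_i$), where $\zeta_i:=\xi_i(E_i)$ and $\Pi(\xi_i)=\zeta_i e_i$. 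By the inductive hypothesis, Corollary~\ref{clinch_limit}(ii) and Theorem~\ref{clinch_goel}, $\zeta_i=f_{w,d}(E)-f_{w,d}(E\setminus E_i)=g_{w,d}(N)-g_{w,d}(N-i)$ is exactly the amount clinched by~$i$ in the reduced market, so the two executions remain in step. For part~(ii): from $P_{w',d'}=(P_{w,d})_{\xi_i}$ and $E_i\subseteq E\setminus E_k$ for $i\ne k$ (hence $\xi_i(E\setminus E_k)=\xi_i(E)=\zeta_i$), the $\zeta_i$'s cancel and
\[
f_{w',d'}(E)-f_{w',d'}(E\setminus E_k)=f_{w,d}(E)-\max\{\,x(E\setminus E_k)\mid x\in P_{w,d},\ x\ge\xi_i\,\},
\]
so~(ii) is the statement that the extra constraint $x\ge\xi_i$ does not lower $\max_{x\in P_{w,d}}x(E\setminus E_k)=f_{w,d}(E\setminus E_k)$.

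The substance, for both the inductive step of~(i) and this last reduction of~(ii), lies in the clinching condition ``not affecting other buyers'', namely $P^{i}_{w,d}(\xi_i)=P^{i}_{w,d}(0)$, which by Theorem~\ref{h^i_xi}(i) is equivalent to $h^i_{\xi_i}\equiv h^i_0$ on $N-i$, i.e.\ to
\[
\xi_i(F)\le f_{w,d}(E_Y\cup F)-f_{w,d}(E_Y)\qquad(Y\subseteq N-i,\ F\subseteq E_i)
\]
(Corollary~\ref{clinch_limit}(i)). I would use this, together with the submodularity facts of Lemma~\ref{submodularity}, to show two things: (a) for every $Y\subseteq N-i$ some maximizer of $x(E_{Y+i})$ over $P_{w,d}$ can be taken with $x\ge\xi_i$ on $E_i$ --- taking $Y=N\setminus\{i,k\}$ settles~(ii); and (b) every point of the $\zeta_i e_i$-contraction of $\tilde P_{y,d}$ lifts under $\Pi$ to a point of the $\xi_i$-contraction of $P_{w,d}$, so $\Pi(P_{w',d'})=\tilde P_{y',d'}$ and the invariant survives the clinch. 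The hard part is~(b): for a general feasible transaction $w$ one need not have $\Pi(P_w)=\tilde P_y$ (a point of $\tilde P_y$ may fail to lift monotonically inside $P=\bigoplus_j P_j$), so the argument must genuinely use that every clinch along Algorithm~\ref{algo2} is a \emph{maximal} vector of the clinching polytope that does not affect other buyers --- this is precisely what makes monotone liftability an invariant. A secondary nuisance is that $f_{w,d}$ is not submodular on the whole of $E$ (only the restricted functions $X\mapsto f_{w,d}(E_X)$ and $G\mapsto f_{w,d}(E_X\cup G)$ of Lemma~\ref{submodularity} are), so one must work with the polytopes $P_{w,d},\tilde P_{y,d}$ directly rather than with a global contraction formula.
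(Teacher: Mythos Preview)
Your overall architecture is right and matches the paper's: one inequality $f_{w,d}(E_X)\le g_{w,d}(X)$ is free, and the other is maintained as an invariant through the steps of Algorithm~\ref{algo2}. Your reduction of~(ii) to statement~(a) is also essentially the paper's Lemma~\ref{lem:change}(i): the identity $f_{w',d'}(E_{Y+i})=f_{w,d}(E_{Y+i})-\xi_i(E_i)$ for $Y\subseteq N-i$ is exactly what is needed, and once you have it, (ii) follows by the cancellation you describe.

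The gap is that you never actually prove~(a) or~(b). You propose to derive them from Corollary~\ref{clinch_limit}(i) and Lemma~\ref{submodularity}, but those tools only give you the case $i\notin X$ (this is the paper's Lemma~\ref{lem:change}(ii)). The case $i\in X$---your~(a), equivalently the preservation of your $d$-free invariant $f_w(E_Z)=g_y(Z)$ under a clinch by~$i$---does \emph{not} follow from the clinching inequality alone, because that inequality controls sums $\xi_i(F)$ over $F\subseteq E_i$, whereas what you need is a \emph{componentwise} lifting $x\ge\xi_i$ inside $P_{w,d}$. The paper closes this gap with a device you are missing: it tracks the family $\mathcal U\subseteq 2^N$ of minimal minimizers $Z$ of $X\mapsto\min_{Z\subseteq X}\{f_w(E_Z)+d(X\setminus Z)\}$, and proves (Proposition~\ref{min}) that $\mathcal U$ is monotone nonincreasing along the algorithm and that every $X\in\mathcal U$ satisfies the uncontracted identity $f_w(E_X)=f(E_X)-w(E_X)$. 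This is enough because the relevant minima are always attained inside $\mathcal U$, so one never needs the full $d$-free identity $f_w(E_Z)=g_y(Z)$ for \emph{all} $Z$ that you set as invariant.

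Two further remarks. First, your assertion that maximality of the clinching vector is what makes the invariant survive is not borne out by the paper's argument: every step there uses only that $\xi_i\in P^i_{w,d}$ (i.e.\ Corollary~\ref{clinch_limit}(i)), never that $\xi_i$ is maximal. Second, your stronger $d$-free invariant $\Pi(P_w)=\tilde P_y$ may well hold along Algorithm~\ref{algo2}, but proving it seems to require at least as much as the paper's $\mathcal U$-argument; it is not a shortcut.
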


	In particular, the total amount $\xi(E_i)$ 
	of the clinch of each buyer $i$ is  
	determined
	by $w$ and $d$ at the beginning of For Loop in line 4 	
	(i.e., independent of the ordering of buyers).
	Now $\epsilon$ is the same in both algorithms.
	By Corollary~\ref{clinch_limit}~(ii) and Theorems~\ref{clinch_goel} and 
	\ref{f=g}, we obtain Theorem~\ref{relation} that in Algorithm \ref{algo2} 
	each buyer obtains the same total amount of 
	goods as in Algorithm \ref{algo1}.
	
	The rest of this subsection is devoted to proving 
	Theorem \ref{f=g}.
	Define $\tilde{g}_{w,d}:2^N\to \mathbb R_+$ by
	\begin{equation}\label{eqn:tilde_g_wd}
		\tilde{g}_{w,d}(X):= f_{w,d}(E_X)=\min_{Z\subseteq X}\{f_w(E_Z)+d(X\setminus Z)\}\quad (X\subseteq N),
	\end{equation}
	where we use (\ref{f_wd}) for the second equality.
	We first analyze the behavior of $\tilde{g}_{w,d}$ when buyer $i$ clinches.
	Suppose that  $w$ and $d$ are the transaction  vector 
	and demand vector, respectively, at line 4 in the current iteration. 
	\begin{lemma}\label{lem:change}
	Suppose that buyer $i$ clinches 
	$\xi\in \mathbb R^{E_i}_+$ of goods, and 
	$w$ and $d$ are updated to $w'$ and $d'$, respectively, in lines 6--8. 
	Then, for $X \subseteq N$, the following hold:
	\begin{itemize}
		\item[{\rm (i)}] If $i \in X$, then $\tilde g_{w',d'}(X) = \tilde g_{w,d}(X) - \xi(E_i)$.
		\item[{\rm (ii)}] If $i \not \in X$, then $\tilde g_{w',d'}(X) = \tilde g_{w,d}(X)$.
	\end{itemize}
\end{lemma}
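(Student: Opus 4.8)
The plan is to push everything through the single formula $\tilde g_{w,d}(X)=\min_{Z\subseteq X}\{f_w(E_Z)+d(X\setminus Z)\}$ of (\ref{eqn:tilde_g_wd}) — which is (\ref{f_wd}) at $F=E_X$ — together with explicit control of the coordinate $d_i$. First I record the effect of the clinch. Writing $\xi$ also for its extension to $E$ by $\xi(e):=0$ for $e\notin E_i$, lines 6--8 of Algorithm~\ref{algo2} give $w'=w+\xi$, $d'_i=d_i-\xi(E_i)$, and $d'_k=d_k$ for $k\in N-i$ (when $d_i\in\{0,\infty\}$ or $c_i=0$ there is nothing to do: either $\xi=0$, or the identity for $d'_i$ holds with the obvious conventions for $\infty$). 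The vector $w'$ is again a feasible transaction: since $\xi\in P^i_{w,d}$ there is $x\in P_{w,d}$ with $x|_{E_i}=\xi$, so $w+\xi\le w+x\in P$ and $P$ is down-closed. The one arithmetic identity I will reuse is that for $Z\subseteq N$ with $i\in Z$ every set $F'\supseteq E_Z$ contains $E_i$, hence $\xi(F')=\xi(E_i)$, so
\[
f_{w'}(E_Z)=\min_{F'\supseteq E_Z}\{f(F')-w(F')-\xi(F')\}=f_w(E_Z)-\xi(E_i)\qquad(i\in Z).
\]

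I would prove (ii) first. For $Y\subseteq N-i$ the inequality $\tilde g_{w',d'}(Y)\le\tilde g_{w,d}(Y)$ is immediate from $f_{w'}\le f_w$ and $d'=d$ on $N-i$. For the reverse I invoke the computations inside the proof of Theorem~\ref{h^i_xi}. As $\xi$ is a maximal vector of $P^i_{w,d}$ we have $\xi\in P^i_{w,d}$, hence by the proof of Theorem~\ref{h^i_xi}~(ii) we get $h^i_\xi=h^i_0$, which is nonnegative-valued since $0\in P^i_{w,d}(0)$. The proof of Theorem~\ref{h^i_xi}~(i) shows that, $h^i_\xi$ being nonnegative-valued, $h^i_\xi(Y)=\min_{Z\subseteq Y}\{f_{w+\xi}(E_Z)+d(Y\setminus Z)\}$, while $h^i_0(Y)=f_{w,d}(E_Y)$ by the monotonicity of $f_{w,d}$ (take $F=\emptyset$ in (\ref{eqn:h^i_xi})). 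Using $w'=w+\xi$ and $d'=d$ on $Y$, the chain
\[
\tilde g_{w',d'}(Y)=\min_{Z\subseteq Y}\{f_{w'}(E_Z)+d(Y\setminus Z)\}=h^i_\xi(Y)=h^i_0(Y)=f_{w,d}(E_Y)=\tilde g_{w,d}(Y)
\]
gives (ii).

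Then (i) drops out by a case split. Fix $X\ni i$ and put $Y:=X-i\subseteq N-i$. In $\tilde g_{w',d'}(X)=\min_{Z\subseteq X}\{f_{w'}(E_Z)+d'(X\setminus Z)\}$ I separate the $Z$ with $i\in Z$ from those with $i\notin Z$. For $Z=Z_0+i$ with $Z_0\subseteq Y$, the displayed identity together with $d'=d$ on $N-i$ turns $f_{w'}(E_{Z_0+i})+d'(X\setminus Z)$ into $f_w(E_{Z_0+i})+d(Y\setminus Z_0)-\xi(E_i)$, so this part of the minimum equals $\min_{Z_0\subseteq Y}\{f_w(E_{Z_0+i})+d(Y\setminus Z_0)\}-\xi(E_i)$, i.e.\ the ``$i\in Z$'' part of $\tilde g_{w,d}(X)$ minus $\xi(E_i)$. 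For $Z\subseteq Y$ the term is $f_{w'}(E_Z)+d(Y\setminus Z)+d'_i$, so this part equals $\tilde g_{w',d'}(Y)+d'_i$, which by (ii) and $d'_i=d_i-\xi(E_i)$ is $\tilde g_{w,d}(Y)+d_i-\xi(E_i)$, i.e.\ the ``$i\notin Z$'' part of $\tilde g_{w,d}(X)$ minus $\xi(E_i)$. Taking the minimum over the two parts yields $\tilde g_{w',d'}(X)=\tilde g_{w,d}(X)-\xi(E_i)$, which is (i).

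The only genuinely delicate point is the reverse inequality in (ii); this is exactly where the hypothesis that $\xi$ is a clinching vector (not merely some feasible increase supported on $E_i$) enters, and it has to be routed through the identification of $P^i_{w,d}(\xi)$ with the polymatroid of $h^i_\xi$ from Theorem~\ref{h^i_xi}. Everything after that — the arithmetic identity and the two-case computation for (i) — is mechanical, the only fuss being the degenerate demand values $d_i\in\{0,\infty\}$.
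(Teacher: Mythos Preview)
Your proof is correct. For part~(ii) your argument is essentially the paper's, just packaged through Theorem~\ref{h^i_xi} rather than computed directly with Corollary~\ref{clinch_limit}; both rest on the same inequality $\xi(G)\le f_{w,d}(E_X\cup G)-f_{w,d}(E_X)$ coming from $\xi\in P^i_{w,d}$.

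For part~(i), however, your route is genuinely different and cleaner. The paper defers~(i) until after Proposition~\ref{min}, and its proof of~(i) relies on the monotonicity of the family $\mathcal U$ of minimal minimizers together with the identity $f_w(E_Z)=f(E_Z)-w(E_Z)$ for $Z\in\mathcal U$. You bypass this machinery entirely: your case split on whether the minimizing $Z$ contains $i$ reduces the ``$i\in Z$'' branch via the elementary identity $f_{w'}(E_Z)=f_w(E_Z)-\xi(E_i)$ and the ``$i\notin Z$'' branch via part~(ii) and $d'_i=d_i-\xi(E_i)$. This yields~(i) directly from~(ii), so the whole lemma (and hence Theorem~\ref{f=g}\,(ii)) is available before Proposition~\ref{min} rather than after it. The paper still needs Proposition~\ref{min} for Theorem~\ref{f=g}\,(i), so the machinery is not wasted, but your argument untangles the forward reference in the paper's exposition.
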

Now Theorem~\ref{f=g} (ii) is an immediate corollary of Lemma~\ref{lem:change}~(i) since
\begin{align}
f_{w',d'}(E)- f_{w',d'}(E \setminus E_k) & = \tilde g_{w',d'}(N)- \tilde g_{w',d'}(N-k) \nonumber \\
& =  (\tilde g_{w,d}(N) - \xi(E_i)) - (\tilde g_{w,d}(N-k) - \xi(E_i)) \nonumber \\
 & =  f_{w,d}(E)- f_{w,d}(E \setminus E_k). \label{eqn:immediate}
\end{align}
We prove (i) in the end, and we here only prove (ii). 
	\begin{proof}[Proof of Lemma~\ref{lem:change}~(ii)]	
	Here 
	$w'(e) = w(e)$ for $e \in E\setminus E_i$  
	and $w'(e) = w(e)  + \xi(e)$ for $e \in E_i$, and  $d'_k = d_k$ for $k \neq i$. 
	Then
	$\tilde{g}_{w',d'}(X)$ is given by 
	\begin{align}
		\label{after_clinch}
		\tilde{g}_{w',d'}(X) = 
		\min_{Z\subseteq X}\{ f_{w'}(E_Z)+d(X\setminus Z)\}.
	\end{align}
	For $Z \subseteq X$, it holds
	\[
	f_{w'}(E_{Z}) =\min_{G\subseteq E_i}\{f_w(E_{Z}\cup G)-\xi(G)\}
	\]
	since $f_{w'}(E_{Z}) =\min_{H\supseteq E_{Z}}
	\{f(H)-w'(H)\} = \min_{H \supseteq H' \supseteq E_{Z}}
	\{f(H)-w(H) - \xi(H' \cap E_i)\} = \min_{H' \supseteq E_{Z}} \{ f_w(H') - \xi(H' \cap E_i)\} 
	= \min_{G \supseteq E_i} \{ f_w(E_{Z} \cup G) - \xi(G)\}$; 
	in the last equality put $G = H' \cap E_i$ and use the monotonicity 
	$f_w(H') \geq f_w(E_Z \cup G)$. 
	Then the inequality ($\leq$) follows from  
	\begin{align*}
		\tilde{g}_{w',d'}(X)&=\min_{Z\subseteq X}
		\{\min_{G\subseteq E_i}
		\{f_{w}(E_{Z}\cup G)-\xi(G)\}+d(X\setminus Z)\}\\
		&\leq\min_{Z\subseteq X}
		\{f_{w}(E_{Z})+d(X\setminus Z)\}=f_{w,d}(E_X)=\tilde{g}_{w,d}(X).
	\end{align*}
	($\geq$) follows from
	\begin{align*}
		\tilde{g}_{w',d'}(X)&=\min_{Z\subseteq X}
		\{\min_{G\subseteq E_i}
		\{f_{w}(E_{Z}\cup G)-\xi(G)\}+d(X\setminus {Z})\}\\
		&=\min_{G\subseteq E_i}\{\min_{Z \subseteq X}
		\{f_{w}(E_{Z}\cup G)+d(X\setminus Z)\}-\xi(G)\}\\
		&\geq \min_{G\subseteq E_i}
		\{f_{w,d}(E_X\cup G)-\xi(G)\} \\
		&=f_{w,d}(E_X)=\tilde{g}_{w,d}(X),
	\end{align*}
	where the third equality follows from Corollary \ref{clinch_limit}~(i), 
	and the inequality can be verified from the definition (\ref{max-flow}) of $f_{w,d}(E_X\cup G)$. 
	Namely the total flow on 
	$E_X \cup G$ is at most 
	the sum of the total flows on 
	$E_Z \cup G$ and on $\{ i'i \mid i \in X \setminus Z\}$:
	the former is bounded by $f_w(E_Z \cup G)$ and 
	the latter is bounded by $d(X \setminus Z)$.
\end{proof}	
	We next show Theorem~\ref{f=g}~(i), i.e., 
	$g_{w,d}(X) = \tilde g_{w,d}(X) (:= f_{w,d}(E_X)$).
	Observe from the definitions that the two functions are written as:
	\begin{align*}
	g_{w,d}(X) & =\min_{Z\subseteq X}\{\min_{Z'\supseteq Z}\{f(E_{Z'})-w(E_{Z'})\}+d(X\setminus Z)\}. \\
	\tilde g_{w,d}(X) & = \min_{Z \subseteq X}\{\min_{F\supseteq E_Z} \{f(F)-w(F)\}+d(X\setminus Z)\}.
	\end{align*} 
	Comparing the inner minimums, we see that $g_{w,d}(X) \geq \tilde g_{w,d}(X)$ generally holds. 
	We are going to show the converse.
	By a {\em minimizer} of $\tilde{g}_{w,d}(X)$ 
	we mean a subset $Z \subseteq X$ satisfying $\tilde{g}_{w,d}(X) = \min_{F\supseteq E_Z} \{f(F)-w(F)\} + d(X\setminus Z)$. 
	Our goal is to show that in Algorithm 2 there always exists a minimizer $Z$ of $\tilde g_{w,d}(X)$ 
	such that $f(E_Z) - w(E_Z) = \min_{F\supseteq E_Z} \{f(F)-w(F)\} 
	( = f_w(E_Z))$, which implies $g_{w,d}(X) = \tilde g_{w,d}(X)$.

	\begin{lemma}
	\label{minimal_minimizer}
	For $X\subseteq N$, the following conditions are equivalent.
	\begin{itemize}
	\item[{\rm (i)}] $X$ is a (unique) minimal minimizer 
	of $\tilde{g}_{w,d}(X)$; in this case, $\tilde{g}_{w,d}(X) = f_w(E_X)$ holds.
	\item[{\rm (ii)}] $X$ is a minimal minimizer of $\tilde{g}_{w,d}(Y)$ 
		  for some $Y$ with 
		  $X\subseteq Y\subseteq N$.
	\end{itemize}
	\end{lemma}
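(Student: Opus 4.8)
The plan is to recast both conditions in terms of the outer minimization defining $\tilde g_{w,d}$ in~(\ref{eqn:tilde_g_wd}) and then use the standard compatibility of submodular minimization with restriction to a sub-interval of subsets. For $S\subseteq N$ and $Z\subseteq S$, put $\phi_S(Z):=f_w(E_Z)+d(S\setminus Z)$, so that $\tilde g_{w,d}(S)=\min_{Z\subseteq S}\phi_S(Z)$ by~(\ref{eqn:tilde_g_wd}) and a minimizer of $\tilde g_{w,d}(S)$ is precisely a $Z\subseteq S$ attaining this minimum. The first thing I would record is that, for each fixed $S$, $\phi_S$ is submodular on the subsets of $S$: since $(N,M,E)$ is bipartite, every edge has a unique endpoint in $N$, hence $Z\mapsto E_Z$ commutes with both union and intersection; composing the submodular function $f_w$ with this lattice homomorphism preserves submodularity, and $Z\mapsto d(S\setminus Z)=d(S)-d(Z)$ is modular. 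Hence the minimizers of $\phi_S$ form a lattice closed under union and intersection, and therefore possess a unique inclusion-minimal member, namely their intersection (see, e.g., \cite{F2005,S2003}); this accounts for the word ``unique'' in~(i). Moreover $\phi_X(X)=f_w(E_X)$, so ``$X$ itself is a minimizer of $\tilde g_{w,d}(X)$'' is literally the equality $\tilde g_{w,d}(X)=f_w(E_X)$ appearing in the ``in this case'' clause of~(i).

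The implication (i)~$\Rightarrow$~(ii) is immediate, by taking $Y:=X$. For (ii)~$\Rightarrow$~(i) the engine is the restriction identity that, for $Z\subseteq X\subseteq Y$,
\[
\phi_Y(Z)=f_w(E_Z)+d(Y\setminus Z)=\phi_X(Z)+d(Y\setminus X),
\]
which holds because $Y\setminus Z$ is the disjoint union of $Y\setminus X$ and $X\setminus Z$. Thus $\phi_Y$, restricted to the subsets of $X$, coincides with $\phi_X$ up to the additive constant $d(Y\setminus X)$, so the two functions have exactly the same minimizers inside $X$ and the same minimal minimizer inside $X$. Now suppose $X$ is the minimal minimizer of $\tilde g_{w,d}(Y)=\min_{Z\subseteq Y}\phi_Y(Z)$ for some $Y\supseteq X$. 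Since $X$ attains the global minimum of $\phi_Y$ over the subsets of $Y$, it a fortiori attains the minimum of $\phi_Y$ over the subsets of $X$; by the displayed identity it therefore attains $\min_{Z\subseteq X}\phi_X(Z)$, that is, $\tilde g_{w,d}(X)=\phi_X(X)=f_w(E_X)$. For minimality, were some $Z\subsetneq X$ to attain $\min_{W\subseteq X}\phi_X(W)$, then $Z$ would also attain $\min_{W\subseteq X}\phi_Y(W)=\phi_Y(X)$, which equals the global minimum of $\phi_Y$ over the subsets of $Y$; hence $Z$ would be a minimizer of $\tilde g_{w,d}(Y)$ strictly contained in $X$, contradicting the minimality of $X$. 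So $X$ is a minimal minimizer of $\tilde g_{w,d}(X)$, and by the lattice property of the first paragraph it is the unique one, which is~(i).

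I do not expect a deep obstacle here: once submodularity of $\phi_S$ is in place, the remainder is bookkeeping about restricting a submodular minimization from $2^Y$ to the sub-interval $2^X$. The one point that genuinely demands care is that $f_{w,d}$ is \emph{not} submodular on $2^E$ (as stressed in the remark following Lemma~\ref{submodularity}); submodularity must therefore never be invoked for $f_{w,d}$ on edge sets, and instead every appeal to submodularity is routed through $f_w$ composed with the union- and intersection-preserving map $Z\mapsto E_Z$, which is exactly the place where bipartiteness of $(N,M,E)$ is essential.
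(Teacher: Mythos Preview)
Your proof is correct and follows essentially the same route as the paper: both arguments hinge on the affine identity $\phi_Y(Z)=\phi_X(Z)+d(Y\setminus X)$ for $Z\subseteq X\subseteq Y$, from which the contradiction for (ii)~$\Rightarrow$~(i) is immediate (the paper phrases it as a one-line inequality chain), and both take (i)~$\Rightarrow$~(ii) as trivial via $Y:=X$. Your version is more complete in that you explicitly justify the word ``unique'' in (i) through the submodularity of $\phi_S$ and the lattice property of submodular minimizers, a point the paper's proof leaves unaddressed.
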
	
	\begin{proof}
	It suffices to show that (ii) implies (i).
	Suppose that 
	$X$ is a minimal minimizer of $\tilde{g}_{w,d}(Y)$
	but $X$ is not a minimal minimizer of $\tilde{g}_{w,d}(X)$.
	Then there exists a proper subset 
	$X'\subset X$ such that
	$
	 f_w(E_X) \geq f_w(E_{X'})+d(X\setminus X')
	$.
	Then
	$
	 f_w(E_X)+d(Y\setminus X) \geq  f_w(E_{X'})+d(X\setminus X')+d(Y\setminus X)
	= f_w(E_{X'})+d(Y\setminus X')$,
	which contradicts the minimality of $X$ for $\tilde g_{w,d}(Y)$.
	\end{proof}
	Let $\mathcal U \subseteq 2^N$ denote the family of subsets $X\subseteq N$ satisfying the conditions in Lemma~\ref{minimal_minimizer}.
	Namely, $X \in {\cal N}$ is a minimal minimizer of $\tilde g_{w,d}(Y)$ 
	for some $Y$.
	By the argument before Lemma~\ref{minimal_minimizer}, the following (ii) completes the proof of Theorem~\ref{f=g}~(i). 
	\begin{proposition}
	\label{min}
	In Algorithm 2, 
	the following hold:
	\begin{itemize}
		\item[{\rm (i)}] The family $\mathcal U$ is equal to $2^N$ at the beginning, and is monotone nonincreasing through the iterations.
		\item[{\rm (ii)}] For each  $X \in \mathcal U$, 
		it always holds
		\begin{equation}
		\label{tilde_g}
		f_w (E_X) = f(E_X)-w(E_X).
		\end{equation}
	\end{itemize}
	\end{proposition}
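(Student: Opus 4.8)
The plan is to prove (i) and (ii) simultaneously by induction on the elementary updates that Algorithm~2 performs on the pair $(w,d)$. There are two kinds of such updates: a \emph{demand decrease} (line~13), where one coordinate $d_\ell$ is lowered while $w$ is kept fixed; and a \emph{clinch by buyer $i$} (lines~5--8), where $w$ is replaced by $w':=w+\xi$ for some $\xi\in P^i_{w,d}$ (extended by $0$ off $E_i$) and $d_i$ is lowered to $d_i-\xi(E_i)$, exactly the situation of Lemma~\ref{lem:change}. For the base case, at the start $w=0$ and $d\equiv\infty$: monotonicity of $f$ gives $f_w(E_Z)=f(E_Z)$ for every $Z$, so (ii) holds, and in $\tilde g_{w,d}(Y)=\min_{Z\subseteq Y}\{f_w(E_Z)+d(Y\setminus Z)\}$ the minimum is attained only at $Z=Y$, so each $X\subseteq N$ is the unique (hence minimal) minimizer of $\tilde g_{w,d}(X)$ and $\mathcal U=2^N$. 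It then suffices to show that after each update the new family $\mathcal U'$ satisfies $\mathcal U'\subseteq\mathcal U$ and that (ii) continues to hold on $\mathcal U'$; iterating along the run of the algorithm proves both statements.

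For a demand decrease, fix $X\subseteq N$ and compare the submodular set functions $g_1(Z):=f_w(E_Z)+d(X\setminus Z)$ and $g_2(Z):=f_w(E_Z)+d'(X\setminus Z)$ on $2^X$. They coincide on subsets containing $\ell$, and $g_2(Z)=g_1(Z)-(d_\ell-d'_\ell)\le g_1(Z)$ on subsets avoiding $\ell$. If $\ell\notin X$ the two functions are identical, so $\mathcal U$ is unchanged at $X$; if $\ell\in X$, then $g_2(X)=g_1(X)$ and a short check shows that the minimal minimizer of $g_2$ over $2^X$ is also the minimal minimizer of $g_1$. Hence $\mathcal U'\subseteq\mathcal U$, and since $w$ and $f$ are untouched, (ii) on $\mathcal U'\subseteq\mathcal U$ follows from the induction hypothesis.

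For a clinch --- the main case --- I would prove $\mathcal U'\subseteq\mathcal U$ by contraposition. If $X\notin\mathcal U$, then since $Z\mapsto f_w(E_Z)+d(X\setminus Z)$ is submodular on $2^X$ (here $f_w$ is submodular on $E$, unlike $f_{w,d}$) its minimizers form a lattice, so there is a proper subset $Z\subset X$ attaining $\tilde g_{w,d}(X)$. Using Lemma~\ref{lem:change} to rewrite $\tilde g_{w',d'}(X)$, together with $f_{w'}(E_Z)=f_w(E_Z)-\xi(E_i)$ when $i\in Z$ (so $E_Z\supseteq E_i$), $f_{w'}(E_Z)\le f_w(E_Z)$ in general, and $d'(X\setminus Z)=d(X\setminus Z)-\xi(E_i)\,[\,i\in X\setminus Z\,]$, one checks in each of the three cases ($i\in Z$; $i\in X$ but $i\notin Z$; $i\notin X$) that this same $Z$ still attains $\tilde g_{w',d'}(X)$, so $X$ is not the minimal minimizer of $\tilde g_{w',d'}(X)$, i.e.\ $X\notin\mathcal U'$. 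For (ii), take $X\in\mathcal U'\subseteq\mathcal U$; by Lemma~\ref{minimal_minimizer}(i) we have $f_{w,d}(E_X)=f_w(E_X)$, and by the induction hypothesis $f_w(E_X)=f(E_X)-w(E_X)$. If $i\in X$, then $E_X\supseteq E_i$, so $w'(E_X)=w(E_X)+\xi(E_i)$ and $f_{w'}(E_X)=f_w(E_X)-\xi(E_i)$, whence $f_{w'}(E_X)=f(E_X)-w'(E_X)$. If $i\notin X$, then $w'(E_X)=w(E_X)$, and from $f_{w'}(E_X)=\min_{G\subseteq E_i}\{f_w(E_X\cup G)-\xi(G)\}$ together with $\xi(G)\le f_{w,d}(E_X\cup G)-f_{w,d}(E_X)$ (Corollary~\ref{clinch_limit}(i)) and $f_{w,d}(E_X\cup G)\le f_w(E_X\cup G)$ one gets $f_w(E_X\cup G)-\xi(G)\ge f_w(E_X)$ for all $G$, so $f_{w'}(E_X)\ge f_w(E_X)=f(E_X)-w'(E_X)$; the reverse inequality $f_{w'}(E_X)\le f(E_X)-w'(E_X)$ is automatic.

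The hard part is (ii) in the clinch case --- that the tightness $f_w(E_X)=f(E_X)-w(E_X)$ for $X\in\mathcal U$ survives a clinch. The crux is that the defining inequalities of the clinching polytope $P^i_{w,d}$, namely $\xi(G)\le f_{w,d}(E_X\cup G)-f_{w,d}(E_X)$, combined with the fact that for $X\in\mathcal U$ the max-flow quantity $f_{w,d}(E_X)$ collapses to the plain slack $f_w(E_X)$, exactly bound how much a clinch can erode the submodular slack of $E_X$; the bookkeeping is organized entirely by whether $i\in X$, equivalently whether $E_i\subseteq E_X$ or $E_i\cap E_X=\emptyset$.
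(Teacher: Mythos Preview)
Your overall strategy matches the paper's: induct on the two kinds of updates (demand decrease in line~13 and clinch in lines~5--8), checking at each step that $\mathcal U$ does not grow and that (\ref{tilde_g}) persists on $\mathcal U$. The base case and the demand-decrease case are handled correctly, essentially as in the paper.

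There is, however, a circularity to watch out for in the clinch case of part~(i). You write ``Using Lemma~\ref{lem:change} to rewrite $\tilde g_{w',d'}(X)$'' and then conclude that ``this same $Z$ still attains $\tilde g_{w',d'}(X)$''. For the two subcases with $i\in X$ this reads as an appeal to Lemma~\ref{lem:change}(i), namely $\tilde g_{w',d'}(X)=\tilde g_{w,d}(X)-\xi(E_i)$. But in the paper Lemma~\ref{lem:change}(i) is proved \emph{after} Proposition~\ref{min}, using Proposition~\ref{min}(i) and~(ii); so invoking it here is circular. The fix is painless and uses only the ingredients you already list: when $i\in X$, the value $f_{w'}(E_X)$ drops from $f_w(E_X)$ by exactly $\xi(E_i)$, while $f_{w'}(E_Z)+d'(X\setminus Z)$ drops from $f_w(E_Z)+d(X\setminus Z)$ by \emph{at least} $\xi(E_i)$ (exactly $\xi(E_i)$ if $i\in Z$; and if $i\in X\setminus Z$ then $d$ loses $\xi(E_i)$ and $f_{w'}(E_Z)\le f_w(E_Z)$). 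Hence the old inequality $f_w(E_X)\ge f_w(E_Z)+d(X\setminus Z)$ persists as $f_{w'}(E_X)\ge f_{w'}(E_Z)+d'(X\setminus Z)$, and that alone shows $X$ is not the minimal minimizer of $\tilde g_{w',d'}(X)$. You do not need, and cannot yet assert, that $Z$ is an actual minimizer of $\tilde g_{w',d'}(X)$. The paper's one-line ``both sides of (\ref{eqn:bothside}) decrease by $\xi(E_i)$'' encodes exactly this comparison and deliberately avoids Lemma~\ref{lem:change}(i); only part~(ii) of Lemma~\ref{lem:change} (the case $i\notin X$, proved independently) is legitimately available here.

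Your argument for~(ii) in the clinch case with $i\notin X$ --- bounding $f_{w'}(E_X)=\min_{G\subseteq E_i}\{f_w(E_X\cup G)-\xi(G)\}$ from below via Corollary~\ref{clinch_limit}(i) together with $f_{w,d}(E_X)=f_w(E_X)$ for $X\in\mathcal U$ --- is correct and is a nice variant of the paper's route, which instead squeezes $f_{w'}(E_X)$ between $\tilde g_{w',d'}(X)$ and $f(E_X)-w'(E_X)$ using Lemma~\ref{lem:change}(ii).
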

	\begin{proof}
   At beginning of the algorithm, it holds
   $w_{ij}=0\ (ij\in E)$ and $d_i = \infty\ (i \in N)$. Then ${\cal U} = 2^N$.
   Also all $X$ satisfies (\ref{tilde_g}) by the monotonicity of $f_j\ (j\in M)$.
   
	Next we show: If $X \in {\cal U}$ satisfies (\ref{tilde_g}) at line 4, then $X$ satisfies (\ref{tilde_g}) in the For loop (line 4--8).
	Consider $X\in \mathcal U$ satisfying (\ref{tilde_g}) at line 4 
	with the turn of buyer $i \in N$.
    Suppose 
	that the buyer $i$ clinches 
	$\xi\in \mathbb R^{E_i}_+$ of goods, 
	and that $w,d$ are updated to $w',d'$ in lines 6--8.
	If $i \in X$, then
	for each $F\supseteq E_X$,
	$f(F)-w(F)$ decreases by $\xi(E_i)$.
	Thus the equality (\ref{tilde_g})
	is keeping after lines 6--8.
	If $i\in N\setminus X$, then,  
	by  (\ref{tilde_g}) for $w,d$ and Lemma~\ref{lem:change}~(ii), we have
	\begin{equation*}
	\tilde{g}_{w',d'}(X) \leq f_{w'}(E_X) \leq f(E_X) - w'(E_X) = f(E_X)-w(E_X) = f_w(E_X) = \tilde{g}_{w,d}(X) = \tilde{g}_{w',d'}(X).
	\end{equation*}
	Thus we obtain (\ref{tilde_g}) for new $w',d'$.

	Next we show that no new sets
	are added to ${\cal U}$ in the For loop.
    (In fact, one can show that ${\cal U}$ does not change in the For loop.)
	Take $X \not \in {\cal U}$ at line 4.
	Then there is a minimal minimizer 
	$Z \subset X$ of $\tilde g_{w,d}(X)$ such that
	\begin{align}\label{eqn:bothside}
	f_w(E_X) \geq f_w(E_Z) + d(X \setminus Z) = \tilde g_{w,d}(X).
	\end{align}
	Consider the clinch of buyer $i$ as above.
	If $X$ contains $i$, 
	then both sides of (\ref{eqn:bothside}) decrease by $\xi(E_i)$, and  
	$X$ is not a minimal minimizer for $\tilde g_{w',d'}(X)$.
	If $X$ does not contain $i$, then by (\ref{tilde_g}) for $w,d$ and Lemma~\ref{lem:change}~(ii), 
	we have
	$\tilde g_{w,d}(X) = f_w(Z) + d(X \setminus Z) 
	= f(E_Z) - w(E_Z) + d(X \setminus Z) = f(E_Z) - w'(E_Z) + d(X \setminus Z) \geq f_{w'}(E_Z) + d(X \setminus Z)  \geq \tilde g_{w',d'}(X) = \tilde g_{w,d}(X)$, and
	this means that $Z \subset X$ is a minimizer of $\tilde g_{w',d'}(X)$. 
	Hence $X \not \in {\cal U}$ holds after the clinch.
	
	Finally we show that no sets are added 
	to $\mathcal U$ by the calculation of $d_l$ in line 13.
	For each $X\notin \mathcal U$,
	there exists a proper subset $Z\subset X$ 
	such that the inequality in (\ref{eqn:bothside}) holds.
	Since $f_w$ does not depend on $d$, 
	and $d$ is nonincreasing, 
	this inequality still holds after the recalculation of $d_l$.
	Therefore it still holds that $X\notin \mathcal U$.
	\end{proof}
	Finally we prove Lemma~\ref{lem:change}~(i), 
	which completes the proof of Theorem~\ref{f=g}~(ii) by (\ref{eqn:immediate}).
	\begin{proof}[Proof of Lemma~\ref{lem:change}(i)]
		Consider $X \subseteq N$ with $i \in X$. 
		Notice that $d_i$ is changed to $d'_i = d_i - \xi(E_i)$ after the clinch of $i$.
		Let $Z \in {\cal U}$ be a minimal minimizer of $\tilde g_{w,d}(X)$.
		By Proposition~\ref{min}~(ii), 
		we have $\tilde g_{w,d}(X) = f_w(E_Z) + d(X \setminus Z) = f(E_Z) - w(E_Z) + d(X \setminus Z)$.
		After the clinch, the last quantity decreases by $\xi(E_i)$ 
		regardless whether $Z$ contains $i$ or not.
		This means that $\tilde g_{w',d'}(X) \leq \tilde g_{w,d}(X) - \xi(E_i)$.
		
		We show the converse. 
		Let $Z' \in {\cal U}$ be a minimal minimizer of $\tilde g_{w',d'}(X)$.
		By the monotonicity of ${\cal U}$ 
		(Proposition~\ref{min}~(i)),  $Z'$ belongs to ${\cal U}$ before the clinch.
		Therefore it holds again $\tilde g_{w,d}(X) \leq f_w(E_{Z'}) + d(X \setminus Z') 
		= f(E_{Z'}) - w(E_{Z'}) + d(X \setminus Z')$.
		After the clinch, the last quantity decreases by $\xi(E_i)$, 
		which equals $\tilde g_{w',d'}(X)$.
		This means that  $\tilde g_{w,d}(X) \leq \tilde g_{w',d'}(X) + \xi(E_i)$.
	\end{proof}

	\subsection{Proof of Theorem \ref{pareto}}\label{subsec:pareto}

	Here we prove the pareto optimality (Theorem~\ref{pareto}) in 
	the setting of concave budget constraints (Remark~\ref{rem:concave}).
	In the proof, it is important to analyze how buyers drop out of the auction, or situations when their demands become zero.
	Thanks to Theorem \ref{relation},  
	we can use properties obtained in 
	Goel et al. \cite{GMP2014} for one-sided markets.
	We utilize the notion of the dropping price 
	introduced by them. 
	The {\em dropping price} of buyer $i$, denoted by $\theta_i$, 
	is defined as the first price $c_i$ for which 
	the demand $d_i$ is zero.
	As shown in \cite{GMP2014}, 
	there are three cases when buyer $i$ drops out of the auction.
	\begin{description}
	\item[{\rm Case 1:}] Buyer $i$ clinches his entire demand $\xi(E_i)=d_i$.
	In this case, $\theta_i \leq v_i - \epsilon$.
	After the clinch, it holds $p_i=\phi_i(w(E_i))$.  
	By the concavity of $\phi_i$, 
	the demand never becomes positive.
   	\item[{\rm Case 2:}] Buyer $i$ does not clinch his entire 
	demand but the price reaches his bid, i.e., $\theta_i = v_i'$.
	\item[{\rm Case 3:}] Buyer $i$ does not clinch his entire 
	demand but $p_i = \beta_i  w(E_i)$ and $\theta_i > \beta_i$.
	\end{description}
	We here call the event of case 2 or 3 the {\em unsaturated drop} (of buyer $i$).
	Recall that $\beta_i := \lim_{x \to +0}\phi_i(x)/x$ 
	represents the angle of $\phi_i$ at $0$.
	In the usual budget constraints, 
	case 1 means $p_i = B_i$ and 
	case 3  never occurs (by $\beta_i = \infty$).

	We use the following intriguing property 
	to prove the pareto optimality (Theorem \ref{pareto}).
	Here a subset $X\subseteq N$ is said to be 
	{\it tight} with respect to transaction $w$ if 
	$w(E_X)=f(E_X)(=g(X))$.
　	\begin{proposition}[Goel et al. \cite{GMP2014}]
	\label{dropping} 
	 Let $i_1,i_2,\ldots,i_t$ be the buyers doing unsaturated drop, where they are sorted in reverse order of their drops; 
	 then $\theta_{i_1}\geq\cdots\geq\theta_{i_t}$.
	 For each $k =1,2,\ldots,t$, 
	 let $X_k$ denote the set of buyers having
	 a positive demand just before the drop of $i_k$. 
	 Then we have the following:
	 \begin{itemize}
	 	\item[{\rm (i)}] $\emptyset= X_0\subset X_1
	 	\subset  X_2\subset 
	 	\cdots\subset  X_t=N$ is a chain of tight sets.
	 	\item[{\rm (ii)}] For $k=1,2,\ldots,t$, it holds $i_k\in X_k\setminus X_{k-1}$.
	 	 \item[{\rm (iii)}] For $i\in X_k\setminus (X_{k-1}+i_k)$, it holds
	 	 $\theta_i\in \{\theta_{i_k}-\varepsilon,\theta_{i_k}\}$.
	 \end{itemize}
	\end{proposition}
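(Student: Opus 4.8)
The plan is to reduce the claim to the one-sided statement already established by Goel et al.~\cite{GMP2014} for Algorithm~\ref{algo1}, exploiting the tight correspondence between mechanism~2 and the reduced one-sided market. First I would observe that, although Theorem~\ref{relation} is phrased only for the final output, its proof via Theorem~\ref{f=g} and Corollary~\ref{clinch_limit}~(ii) actually establishes the stronger fact that in \emph{every} iteration the total clinch $\xi(E_i)$ of each buyer $i$ equals the clinch $\zeta_i$ computed in Algorithm~\ref{algo1} from $y_i=w(E_i)$ and the same demand vector $d$; by Theorem~\ref{f=g}~(ii) this common value is moreover independent of the order in which buyers are processed inside the For loop. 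Consequently the trajectories of $c$, $d$, $p=\pi$, and $y=(w(E_i))_{i\in N}$ are identical in the two algorithms at every step, so the demand $d_i$ of buyer $i$ hits $0$ at exactly the same price in both runs. Hence the dropping price $\theta_i$, and which of Cases~1--3 occurs for $i$, coincide; in particular the set of buyers performing an unsaturated drop and the ordering $\theta_{i_1}\ge\cdots\ge\theta_{i_t}$ are the same as in the one-sided execution. This disposes of the preamble and reduces (i)--(iii) to the corresponding one-sided assertions.

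Next I would reconcile the notion of tightness. The sets $X_k$ are defined purely from the demand trajectory, so they are literally the same subsets of $N$ in both settings; it remains only to check that ``tight with respect to the two-sided transaction $w$'' (that is, $w(E_X)=f(E_X)$) is equivalent to ``tight with respect to the one-sided transaction $y$'' (that is, $y(X)=g(X)$). This is immediate from $y(X)=\sum_{i\in X}w(E_i)=w(E_X)$ together with the identity $g(X)=f(E_X)$ recorded in Section~\ref{subsec:relation}. Therefore a chain of tight sets in the reduced market is exactly a chain of tight sets in the two-sided market, and assertions (i) and (ii)---the chain $\emptyset=X_0\subset X_1\subset\cdots\subset X_t=N$ of tight sets with $i_k\in X_k\setminus X_{k-1}$---follow verbatim from the one-sided result of Goel et al.\ applied to Algorithm~\ref{algo1}. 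Assertion (iii), the price-gap bound $\theta_i\in\{\theta_{i_k}-\varepsilon,\theta_{i_k}\}$ for $i\in X_k\setminus(X_{k-1}+i_k)$, concerns dropping prices only, which we have already matched across the two runs (with the same $\varepsilon$), so it transfers directly as well.

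For the concave-budget extension promised in Remark~\ref{rem:concave}, I would verify that the three-case classification of drops and the definition of $\beta_i=\lim_{x\to+0}\phi_i(x)/x$ carry over, which requires invoking Theorem~\ref{relation} in its concave form; so I would first confirm that the proof of Theorem~\ref{relation} still goes through when $d_i$ is defined by the concave-demand formula, noting that that proof uses only monotonicity of $f_{w,d}$ and the structure of $P_{w,d}$, and that the concave demand enters merely through the per-buyer feasibility region $\{(p_i,z_i)\mid p_i\le\phi_i(z_i)\}$, whose concavity is all that Cases~1--3 exploit. The main obstacle is precisely this bookkeeping: making rigorous that the per-iteration (not merely final-allocation) equivalence of the two algorithms holds in the concave setting, so that the dropping-price structure of Goel et al.\ can be imported unchanged. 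Once that is in place, Proposition~\ref{dropping} is a direct corollary and no new combinatorial argument is needed.
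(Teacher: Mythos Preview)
Your approach is correct and is precisely what the paper does (only much more tersely): the paper does not prove this proposition itself but cites it from Goel et al.~\cite{GMP2014} for the one-sided setting, and the single sentence ``Thanks to Theorem~\ref{relation}, we can use properties obtained in Goel et al.~\cite{GMP2014} for one-sided markets'' is its entire justification for applying it to mechanism~2. Your careful unpacking---the per-iteration equality $\xi(E_i)=\zeta_i$ via Theorem~\ref{f=g} and Corollary~\ref{clinch_limit}~(ii), the resulting coincidence of the demand/price trajectories and dropping prices, and the identification of tightness through $y(X)=w(E_X)$ and $g(X)=f(E_X)$---is exactly the content implicit in that sentence.
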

	
   The proof of PO in our mechanisms is a modification of that of \cite[Theorem 4.6]{GMP2014}.

	\begin{proof}[Proof of Theorem \ref{pareto}]
	We denote the set of nonvirtual buyers 
	by $N^{\ast}$.
	By (ICb) (Theorem~\ref{goel} or Corollary \ref{ICb}), 
	we may assume that each buyer $i$ reports 
	true valuation, i.e., $v'_i=v_i\ (i\in N^{\ast})$.
	Let $\mathcal A=(w,p,r)$ be the allocation obtained by 
	Algorithm \ref{algo2} 
	(including the allocation to the virtual buyers).
	Consider an arbitrary allocation 
	(without virtual buyers) 
	$\mathcal A'=(w',p',r')$ 
     such that 
	\begin{align}
	\label{assumption0}
	\sum_{i\in N^{\ast}}
	p'_i & \geq\sum_{j\in M} r'_j \\
	\label{assumption1}
	v_i w(E_i)-p_i&\leq v_i w'(E_i)-p'_i\quad(i\in N^{\ast}),\\
	\label{assumption2}
	(r_j-p_{n+j})+\rho_j w_{(n+j)j} &\leq 
	r'_j+\rho_j(f_j(E_j)-w'(E_j))\quad(j\in M).
	\end{align}
	Our goal is to show that all inequalities 
	in (\ref{assumption1}) and (\ref{assumption2}) hold in equality.
	Here $\mathcal A'$ is extended to an allocation 
	with virtual buyers by:
	\begin{align*}
	w'_{(n+j)j}&:=f_j(E_j)-w'(E_j)\quad (j\in M), \\
	p_{n+j}&:=0 \quad(j\in M).
	\end{align*}
	In the sequel, virtual buyers are included in the market, i.e., $(n+j)j \in E_j$. 
	Then we have
	\begin{align}
	\label{w'=f}
	w'(E)=\sum_{j\in M}f_j(E_j)=f(E).
	\end{align}
	Also (\ref{assumption1}) and (\ref{assumption2}) are 
	rewritten as
	\begin{align}
	\label{assumption1'}
p_i-p'_i & \geq 	v_i (w(E_i)-w'(E_i)) \quad (i\in N^{\ast}), \\
	\label{assumption2'}
	\rho_j(w'_{(n+j)j}-w_{(n+j)j}) & \geq 
	(r_j-r'_j)-(p_{n+j}-p'_{n+j})
	\quad (j\in M).
	\end{align}
 We show
	\begin{claim}[informal]
		By adding equalities and inequalities that include all in (\ref{assumption1'}) and (\ref{assumption2'}), one can deduce 
		\begin{equation}
		\label{budgetbalance}
		\sum_{i\in  N^{\ast}}(p_i-p'_i) 
		\geq \sum_{j \in M}
		((r_j-r'_j)-(p_{n+j}-p'_{n+j})).
		\end{equation}
	\end{claim}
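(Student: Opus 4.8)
The plan is to establish the Claim — inequality~(\ref{budgetbalance}) — and then read off (PO) from it. Suppose~(\ref{budgetbalance}) holds. Writing $N=N^{\ast}\cup\{n+j:j\in M\}$ and using that in line~10 of Algorithm~\ref{algo2} every payment, including those of virtual buyers, is handed to sellers, so that $\sum_{i\in N}p_i=\sum_{j\in M}r_j$, together with $p'_{n+j}=0$, inequality~(\ref{budgetbalance}) is equivalent to $\sum_{j\in M}r'_j\geq\sum_{i\in N^{\ast}}p'_i$; combined with the assumed weak budget balance~(\ref{assumption0}) this is an equality, so~(\ref{budgetbalance}) itself holds with equality. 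Now~(\ref{budgetbalance}) is to be produced as a sum of the hypotheses~(\ref{assumption1'}) (one per $i\in N^{\ast}$) and~(\ref{assumption2'}) (one per $j\in M$) together with further provable inequalities, each written in the form ``nonnegative quantity $\geq 0$''. Since the total slack of that sum equals the slack of~(\ref{budgetbalance}), which is zero, every summand must be tight; in particular~(\ref{assumption1'}) and~(\ref{assumption2'}) hold with equality, i.e.\ equality holds in~(\ref{assumption1}) and~(\ref{assumption2}). Hence no buyer and no seller is strictly better off in $\mathcal A'$, which is exactly~(PO).

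It remains to prove the Claim, and here the plan is to adapt the one-sided argument of Goel et al.~\cite{GMP2014}, folding the virtual buyers into $N$ so that the seller inequalities~(\ref{assumption2'}) become the seller-side analogue of the buyer inequalities~(\ref{assumption1'}) (recall $v_{n+j}=\rho_j$ and $w'(E_{n+j})-w(E_{n+j})=w'_{(n+j)j}-w_{(n+j)j}$). The structural input is Proposition~\ref{dropping}: the unsaturated-drop buyers $i_1,\dots,i_t$ with $\theta_{i_1}\geq\cdots\geq\theta_{i_t}$ and the chain of tight sets $\emptyset=X_0\subset X_1\subset\cdots\subset X_t=N$, with $i_k\in X_k\setminus X_{k-1}$ and $\theta_i\in\{\theta_{i_k}-\varepsilon,\theta_{i_k}\}$ for $i\in X_k\setminus(X_{k-1}+i_k)$. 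Two elementary facts feed the combination: (a) $w'(E_{X_k})=\sum_{j\in M}w'(E_j\cap E_{X_k})\leq\sum_{j\in M}f_j(E_j\cap E_{X_k})=f(E_{X_k})=w(E_{X_k})$ for every $k$, using $w'|_{E_j}\in P_j$ and the tightness of $X_k$; and (b) $w'(E)=f(E)=w(E)$, the left equality being~(\ref{w'=f}) and the right one following from Theorems~\ref{goel} and~\ref{relation}. One then compares, layer by layer, the revenue $\sum_{i\in N}p_i$ collected by Mechanism~2 — bounded below through the dropping prices $\theta_{i_k}$ — with $\sum_{i\in N}p'_i$, which is controlled from above via~(\ref{assumption1'})/(\ref{assumption2'}) using $v_i\geq\theta_i$ and the sign of $w(E_i)-w'(E_i)$; telescoping along the chain, the cross terms have the shape $(\theta_{i_k}-\theta_{i_{k+1}})\bigl(w(E_{X_k})-w'(E_{X_k})\bigr)\geq 0$ by monotonicity of the prices and by~(a), so they only help. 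Budget-saturated (Case~1) buyers enter via~(\ref{assumption1'}) unchanged; unsaturated-drop buyers enter via $\theta_i$, which equals their bid in Case~2; each seller enters via~(\ref{assumption2'}), contributing exactly the term $r_j-r'_j$ on the right-hand side of~(\ref{budgetbalance}).

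The hard part is the bookkeeping that makes the combination land precisely on~(\ref{budgetbalance}) rather than on the weaker $\sum_{i\in N}(p_i-p'_i)\geq\sum_{j\in M}(r_j-r'_j)+\sum_{i\in N}v_i\bigl(w(E_i)-w'(E_i)\bigr)$ obtained by naively summing~(\ref{assumption1'}) and~(\ref{assumption2'}): the extra ``social welfare'' term is not nonnegative in budgeted settings, and the tight-set/telescoping argument must be deployed exactly to cancel it. Two further delicate points are: tracking the virtual buyers through the allocation transformation so that $p_{n+j}$, $r_j$ and the unsold goods $w_{(n+j)j}$ behave correctly and a virtual buyer genuinely acts as a Case-2 participant dropping when $c_{n+j}=\rho_j$ (cf.\ Lemma~\ref{transaction}); and absorbing the $\varepsilon$-discretization, since members of a common layer drop within $\varepsilon$ of one another rather than simultaneously — here one uses Assumption~\ref{assump_varepsilon1} (and, in the concave variant, that $\varepsilon$ divides $\beta_i$) and works with the ``just before the drop'' sets $X_k$ of Proposition~\ref{dropping}, so that the relevant clock values are the $\theta_{i_k}$ up to the controlled $\pm\varepsilon$ of part~(iii). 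With these in place the inequality chain is routine, and~(PO) follows from the equality-forcing explained in the first paragraph.
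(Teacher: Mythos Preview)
Your outline follows the paper's approach: reduce to the one-sided clinching structure, invoke Proposition~\ref{dropping}, and telescope layer by layer along the tight chain $X_0\subset\cdots\subset X_t$, with virtual buyers playing the role of Case~2 drop-outs so that the seller inequalities~(\ref{assumption2'}) slot in where the buyer inequalities do. The first paragraph, deriving (PO) once the Claim is established and observing that tightness of the sum forces tightness of every summand, is correct and matches the paper exactly.

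There is, however, a real gap in the per-buyer step. You assert that ``budget-saturated (Case~1) buyers enter via~(\ref{assumption1'}) unchanged,'' and earlier gesture at ``$v_i\geq\theta_i$ and the sign of $w(E_i)-w'(E_i)$.'' But when a saturated buyer $i\in Z_k^{\ast}$ has $w(E_i)<w'(E_i)$, the inequality~(\ref{assumption1'}) gives $p_i-p'_i\geq v_i\bigl(w(E_i)-w'(E_i)\bigr)$, and since $v_i\geq\theta_{i_k}$ while $w(E_i)-w'(E_i)<0$, this is \emph{weaker} than the layer bound $p_i-p'_i\geq\theta_{i_k}\bigl(w(E_i)-w'(E_i)\bigr)$ that the telescoping needs. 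So~(\ref{assumption1'}) alone does not suffice here, and the chain is not ``routine'' at this point. The paper supplies the missing ingredient: budget feasibility of the competing allocation, $p'_i\leq\phi_i(w'(E_i))$, combined with the saturation condition (zero demand at $\theta_i$ means $p_i+\theta_i\kappa>\phi_i(w(E_i)+\kappa)$ for all $\kappa>0$). Taking $\kappa=w'(E_i)-w(E_i)$ yields $p_i-p'_i>\theta_i\bigl(w(E_i)-w'(E_i)\bigr)\geq\theta_{i_k}\bigl(w(E_i)-w'(E_i)\bigr)$, a \emph{strict} inequality. The same phenomenon occurs in Case~3-2 of the concave-budget setting. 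The strictness is then exploited: since the grand sum must be tight, these sub-cases can never occur, and a~posteriori every nonvirtual buyer does enter via~(\ref{assumption1'}), which is what makes the equality-forcing argument in your first paragraph go through for \emph{all} of~(\ref{assumption1'}). Without this step you can neither complete the telescoping nor conclude that each~(\ref{assumption1'}) is used.
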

	We first complete the proof assuming this claim.
		By substituting (SBB) $\sum_{i\in N^{\ast}}p_i
		=\sum_{j\in M}(r_j-p_{n+j})$ for
		(\ref{budgetbalance}), we obtain
		$
		\sum_{i\in N^{\ast}}p'_i \leq 
		\sum_{j\in M}(r'_j-p'_{n+j})$.
		Since $p'_{n+j}=0\ (j\in M)$, we have
		\begin{equation}
		\label{revenue-payment}
		\sum_{i\in N^{\ast}}p'_i\leq\sum_{j\in M}r'_j.
		\end{equation}
		By (\ref{assumption0}), the equality 
		holds in (\ref{revenue-payment}).
	   Consequently, all inequalities in  
		(\ref{assumption1'}) and (\ref{assumption2'})
		must hold in equality, which implies the pareto optimality (Theorem~\ref{pareto}).

     Finally we prove the claim.	
		Consider the buyers $i_1,i_2,...,i_t$ and the chain 
		$\emptyset= X_0\subset X_1\subset  X_2\subset
		\cdots\subset  X_t=N$ of tight sets in Proposition~\ref{dropping}.
		Then $i_k\in X_k\setminus X_{k-1}$, and 
		\begin{align}
		f(E_{ X_k})&=w(E_{ X_k})\quad(k\in \{0,1,2,\ldots,t\}),\\
		f(E_{ X_t})&=f(E)=w'(E)=w'(E_{ X_t}), \label{eqn:f(E)=w'(E)}
		\end{align}
		where (\ref{eqn:f(E)=w'(E)}) follows from (\ref{w'=f}).
	For each $k\in \{1,2,\ldots,t\}$, 
	we define the following sets: 
	\begin{align*}
	 X^{\ast}_k:&= X_k\cap N^{\ast},\\
	 Z_k:&= X_k\setminus  X_{k-1},\\ 
	 Z^{\ast}_k:&= Z_k\cap N^{\ast}. 
	\end{align*}
	
	We show: For each $k\in\{1,2,\ldots,t\}$ and 
	$i\in Z^{\ast}_k$, it holds 
	\begin{equation}
	\label{assumption1''}
	p_i-p'_i\geq \theta_{i_k}(w(E_i)-w'(E_i)).
	\end{equation}
The proof is a case-by-case analysis; 
recall the above three cases of the drop of buyer $i$.

		Case 1-1: $i\neq i_k$ and $w(E_i)\geq w'(E_i)$.

		In this case, $i$ clinched his entire demand at his dropping price $\theta_i$. 
		Hence it holds $\theta_i\leq v_i-\varepsilon$.
		By $\theta_i\in \{\theta_{i_k}-\varepsilon,\theta_{i_k}\}$ (Proposition~\ref{dropping}~(iii)),
		we have $\theta_{i_k}\leq v_i$.
		By (\ref{assumption1'}), we obtain
		$p_i-p'_i\geq v_i(w(E_i)-w'(E_i))\geq 
		\theta_{i_k}(w(E_i)-w'(E_i)).$

		Case 1-2: $i\neq i_k$ and $w(E_i)< w'(E_i)$. 
		
		Also in this case, $i$ clinched his entire 
		demand at price~$\theta_i$.
		By the definition 
		of demand, for any 
		$\kappa>0$, it holds $p_i+\theta_i\kappa>\phi_i(w(E_i)+\kappa)$. 
		By substituting $\kappa=w'(E_i)-w(E_i) > 0$, 
		we obtain 
		$p_i+\theta_i (w'(E_i)-w(E_i))>
		\phi_i(w'(E_i))\geq p'_i$.
		By $\theta_i\leq \theta_{i_k}$ (Proposition~\ref{dropping}~(iii)),
		it holds 
		$p_i-p'_i > \theta_i(w(E_i)-w'(E_i)) \geq 
		\theta_{i_k}(w(E_i)-w'(E_i))$; notice that $w(E_i)-w'(E_i)$ is negative.
		
		Case 2: $i=i_k$ and $\theta_{i}=v_{i}$.
		
		By (\ref{assumption1'}), it holds
		$p_{i}-p'_{i}\geq v_{i}(w(E_{i})-w'(E_{i}))=
		\theta_{i}(w(E_i)-w'(E_i))$.
		
		Case 3-1: $i=i_k$, $\theta_{i}>\beta_{i}$, 
		$p_{i}=\beta_{i}w(E_{i})$, and $w(E_{i}) \geq w'(E_{i})$. 
		
		By $\theta_{i}\leq v_{i}$ and 
		(\ref{assumption1'}), 
		it holds $p_{i}-p'_{i}\geq v_{i}(w(E_{i})-w'(E_{i})) \geq
		\theta_{i}(w(E_i)-w'(E_i))$.
		
			Case 3-2: $i=i_k$, $\theta_{i}>\beta_{i}$, 
		$p_{i}=\beta_{i}w(E_{i})$, and $w(E_{i}) < w'(E_{i})$.
		
		Since $p'_{i}\leq 
		\beta_{i} w'(E_{i})$,
		it holds $p_{i}-p'_{i}\geq \beta_{i}
			(w(E_{i})-w'(E_{i}))>\theta_{i}(w(E_i)-w'(E_i))$.
	
	In this proof of (\ref{assumption1''}), 
	we use (\ref{assumption1'}) 
	for cases 1-1, 2, and 3-1.
	For other cases (1-2 and 3-2), 
	the strict inequality holds in (\ref{assumption1''}).
	It will turn out that those cases never occur.
	Namely (\ref{assumption1'}) is 
	used to deduce (\ref{assumption1''})
	for each $i \in N^*$.

	Next we prove: for each $k\in \{1,2,\ldots,t\}$, it holds
	\begin{align}
	\label{induction}
	\sum_{i\in  X^{\ast}_k}(p_i-p'_i) 
	&\geq \sum_{j \in M:\,n+j\in  X_k}
	((r_j-r'_j)-(p_{n+j}-p'_{n+j})) +\theta_{i_k}
	(w(E_{ X_k})-w'(E_{ X_k}))
	\end{align}
	The proof uses induction on $k$.
	In the case of $k=1$,
	\begin{align*}
	\sum_{i\in  X^{\ast}_1}(p_i-p'_i)&=\sum_{i\in  Z^{\ast}_1}
	(p_i-p'_i)\geq\sum_{i\in  Z^{\ast}_1}
	\theta_{i_{1}}(w(E_i)-w'(E_i)) =\sum_{i\in  X^{\ast}_1}\theta_{i_{1}}(w(E_i)-w'(E_i)) \\
	&= \sum_{i\in  X_1}\theta_{i_{1}}(w(E_i)-w'(E_i)) + 
	\sum_{j: n+j \in X_1} \theta_{i_{1}}(w'_{(n+j)j}-w_{(n+j)j}) \\
	& = \theta_{i_{1}}(w(E_{X_1})-w'(E_{X_1}))+ \sum_{j: n+j \in X_1} \rho_{j}(w'_{(n+j)j}-w_{(n+j)j}),
	\end{align*}
	where we use (\ref{assumption1''}) with all $i \in Z_1^*$  for the first inequality and the fourth equality follows from 
	$\rho_j= v'_{n+j} = \theta_{i_1}$ since 
	each virtual buyer never
	clinch his entire demand and $\beta_{n+j} = \infty$. 
	Now  by substituting (\ref{assumption2'}) 
	for the summation, we obtain  (\ref{induction}) for~$k=1$.
	Suppose that 
	(\ref{induction}) holds in $k\geq 1$. 
	By inductive assumption, we have 
	\begin{align*}
	\sum_{i\in  X^{\ast}_{k+1}}(p_i-p'_i)
	& =\sum_{i\in  X^{\ast}_{k}}(p_i-p'_i)
	+\sum_{i\in  Z^{\ast}_{k+1}}(p_i-p'_i)\\
	& \geq \sum_{j \in M:\,n+j\in  X_k}
	((r_j-r'_j)-(p_{n+j}-p'_{n+j}))  \\ 
	& \quad + \theta_{i_k}
	(w(E_{ X_k})-w'(E_{ X_k})) + \theta_{i_{k+1}} (w(E_{Z_{k+1}^*}) - w'(E_{Z_{k+1}^*})),
	\end{align*}
	where we use (\ref{assumption1''}) with all $i \in Z_{k+1}^*$.
	Since $\theta_{i_k} \geq \theta_{i_{k+1}}$ and $w(E_{X_k}) = f(X_k)  \geq w'(E_{X_k})$, the second and third terms are further calculated as
	\begin{align*}
	& \theta_{i_k}
	(w(E_{ X_k})-w'(E_{ X_k})) + \theta_{i_{k+1}} (w(E_{Z_{k+1}^*}) - w'(E_{Z_{k+1}^*})) \\
	& \geq \theta_{i_{k+1}}  (w(E_{ X_k})-w'(E_{ X_k}) + w(E_{Z_{k+1}^*}) - w'(E_{Z_{k+1}^*})) \\
	& = \theta_{i_{k+1}}( w(E_{X_{k+1}}) -  w'(E_{X_{k+1}}) ) + \sum_{j: n+j \in Z_{k+1}} \theta_{i_{k+1}} (w'_{(n+j)j} - w_{(n+j)j}) \\
	& \geq \theta_{i_{k+1}}( w(E_{X_{k+1}}) -  w'(E_{X_{k+1}})) +  \sum_{j:\,n+j\in  Z_{k+1}}
	((r_j-r'_j)-(p_{n+j}-p'_{n+j})),  
	\end{align*}
	where we use $\theta_{i_{k+1}} = \rho_j$ again and use (\ref{assumption2'}) for 
	$j \in M$ with $n+j \in Z_{k+1}$.
	Gathering the above two inequalities, we obtain (\ref{induction}) for $k+1$, and complete the proof of (\ref{induction}).

    Now consider the case of $k = t$ in (\ref{induction}).
	By $X_t = N$ and (\ref{eqn:f(E)=w'(E)}), 
	the second term of the right hand side vanishes, and we obtain (\ref{revenue-payment}).
	Here we used all inequalities in (\ref{assumption2'}) and (\ref{assumption1''}) to deduce (\ref{induction}) for $k=t$.
	Therefore they must hold in equality. 
	In particular, cases 1-2 and 3-2, 
	which derive strict inequality in (\ref{assumption1''}), never occur.
	This means that we also used all inequalities in (\ref{assumption1'}) 
	to deduce (\ref{assumption1''}). 
	Thus (\ref{assumption1'}) must hold in equality.
	This completes the proof of Theorem~\ref{pareto}.
	\end{proof}
	
	\section*{Acknowledgements.}
	This work was partially supported by JSPS KAKENHI 
	Grant Numbers JP25280004, JP26330023, JP26280004, JP17K00029,
	and by JST ERATO Grant Number JPMJER1201, Japan.
	
\bibliographystyle{abbrv}
\bibliography{polyhedral}
\end{document}